\numberwithin{equation}{section}
\newcommand{\SetFigFont}[3]{}
\title[On the Support of Minimizers of Causal Variational Principles]
{On the Support of Minimizers of \\ Causal Variational Principles}
\author[F.\ Finster]{Felix Finster}
\thanks{Supported in part by the Deutsche Forschungsgemeinschaft.}
\author[D.\ Schiefeneder]{Daniela Schiefeneder \\ \\ December 2010 / April 2013}
\address{Fakult\"at f\"ur Mathematik \\ Universit\"at Regensburg \\ D-93040 Regensburg \\ Germany}
\email{finster@ur.de, Daniela.Schiefeneder@mathematik.uni-regensburg.de}
\newtheorem{Def}{Definition}[section]
\newtheorem{Thm}[Def]{Theorem}
\newtheorem{Prp}[Def]{Proposition}
\newtheorem{Lemma}[Def]{Lemma}
\newtheorem{Corollary}[Def]{Corollary}
\newtheorem{Example}[Def]{Example}
\newtheorem{Conjecture}[Def]{Conjecture}
\newcommand{\Proof}{\begin{proof}}
\newcommand{\QED}{\end{proof} \noindent}
\newcommand{\QEDrem}{\ \hfill $\Diamond$}
\newcommand{\bra}{\mbox{$< \!\!$ \nolinebreak}}
\newcommand{\ket}{\mbox{\nolinebreak $>$}}
\newcommand{\lbra}{\langle}
\newcommand{\lket}{\rangle}
\newcommand{\C}{\mathbb{C}}
\newcommand{\R}{\mathbb{R}}
\newcommand{\1}{\mbox{\rm 1 \hspace{-1.05 em} 1}}
\newcommand{\Z}{\mathbb{Z}}
\newcommand{\N}{\mathbb{N}}
\newcommand{\Tr}{\mbox{Tr\/}}
\newcommand{\beq}{\begin{equation}}
\newcommand{\eeq}{\end{equation}}
\renewcommand{\Im}{\text{\rm{Im}}}
\newcommand{\F}{{\mathscr{F}}}
\newcommand{\G}{{\mathscr{G}}}
\renewcommand{\L}{{\mathcal{L}}}
\newcommand{\Sact}{{\mathcal{S}}}
\newcommand{\Tact}{{\mathcal{T}}}
\newcommand{\Hil}{{\mathscr{H}}}
\newcommand{\D}{{\mathscr{D}}}
\newcommand{\I}{{\mathcal{I}}}
\newcommand{\J}{{\mathcal{J}}}
\newcommand{\K}{{\mathcal{K}}}
\newcommand{\Imc}{{\mathrm{Im}}\:}
\newcommand{\Rec}{{\mathrm{Re}}\:}
\newcommand{\Ff}{{\mathcal{F}}}
\newcommand{\Ll}{{\mathcal{L}}}
\newcommand{\Ss}{{\mathcal{S}}}
\newcommand{\Smin}{{\mathcal{S}_{\min}}}
\newcommand{\M}{{\mathfrak{M}}}
\newcommand{\Oo}{{\mathcal{O}}}
\newcommand{\bpm}{\begin{pmatrix}}
\newcommand{\epm}{\end{pmatrix}}
\newcommand{\vol}{{\mathrm{vol}}}
\newcommand{\dist}{{\mathrm{dist}}}
\newcommand{\dd}{{\mathsf{d}}}
\newcommand{\Thanks}{\vspace*{.5em} \noindent \thanks}
\newcommand{\la}{\langle}
\newcommand{\ra}{\rangle}
\newcommand{\Lin}{\text{\rm{L}}}
\DeclareMathOperator{\supp}{supp}
\newcommand{\U}{\text{\rm{U}}}
\newcommand{\vtmax}{\vartheta_{\max}}
\begin{document}
\maketitle

\begin{abstract}
A class of causal variational principles on a compact manifold is introduced and
analyzed both numerically and analytically.
It is proved under general assumptions that the support of a minimizing measure is
either completely timelike, or it is singular in the sense that its interior is empty.
In the examples of the circle, the sphere and certain flag manifolds,
the general results are supplemented by a more detailed and explicit analysis of the
minimizers. On the sphere, we get a connection to packing problems and the Tammes distribution.
Moreover, the minimal action is estimated from above and below.
\end{abstract}

\tableofcontents

\section{Introduction} \label{secintro}
Causal variational principles were proposed in~\cite{PFP} as an approach for formulating
relativistic quantum field theory (for surveys see~\cite{lrev, srev}). More recently, they were
introduced in a broader mathematical context as a class of nonlinear variational principles
defined on measure spaces~\cite{continuum}. Except for the examples and general
existence results in~\cite{discrete, small, continuum} and the symmetry breaking effect in the
discrete setting~\cite{osymm}, almost nothing is known on the structure of the minimizers.

The present paper is the first work dedicated to a detailed analysis of the minimizing measures
of causal variational principles. To introduce the problem, we begin with a
brief physical motivation (Section~\ref{secphys}) and then set up our mathematical framework
(Section~\ref{secmath}). Section~\ref{secresults} gives an overview of the obtained results.
In order to make the paper easily accessible also to mathematicians
who are not familiar with relativity or quantum field theory, in Section~\ref{secphys} we provide a
physical introduction that is independent of the other parts of the paper and may be skipped.

\subsection{Physical Background and Motivation} \label{secphys}
A relativistic quantum-mechanical particle is described by a Dirac wave function~$\psi$.
In Minkowski space~$M$, $\psi$ is a four-component complex-valued wave function which depends
on the spatial coordinates~$\vec{x}$ and time~$t$.
The absolute square~$|\psi(t, \vec{x})|^2$ of a Dirac wave function has the interpretation as the
probability density for the quantum mechanical particle to be at the position~$\vec{x}$.
Integrating this probability density over space and polarizing gives rise to a scalar product,
\[ \la \psi | \phi \ra := \int_{\R^3} \psi(t, \vec{x})^\dagger \phi(t, \vec{x})\: d^3 \vec{x} \]
(where~$\dagger$ denotes complex conjugation and transposition).
Note that the quantity $|\psi(t, \vec{x})|^2$ is not a scalar, but a density.
In order to obtain a Lorentz invariant scalar at a given space-time point~$x=(t, \vec{x})$, 
one must take the inner product~$\overline{\psi(x)} \phi(x)$, where~$\overline{\psi} \equiv \psi^\dagger
\gamma^0$ is the so-called adjoint spinor. This inner product is indefinite of signature~$(2,2)$.
Likewise, in the presence of a gravitational field, space-time is described by a Lorentzian
manifold~$(M, g)$. In this case, the Dirac wave functions are sections of the so-called spinor bundle.
Again, the spatial integral of the probability density gives rise to a scalar product~$\la.|.\ra$. Moreover, 
the spinors at any space-time point~$x \in M$ are again endowed with an inner
product~$\overline{\psi(x)} \phi(x)$ of signature~$(2,2)$.

Let us consider a system consisting of several Dirac particles in a concrete physical configuration
(say an atom or molecule). For simplicity, we here describe the system by the corresponding
one-particle wave functions~$\psi_1, \ldots, \psi_f$ (for the connection to Fock spaces and the Pauli
exclusion principle see for example~\cite{entangle}). Moreover,
in order to keep the presentation as simple as possible, we assume that the number~$f$
of particles is finite and that the wave functions~$\psi_i$ are continuous. In order to work in a basis
independent way, we consider the vector space~$\Hil := \bra \psi_1, \ldots, \psi_f \ket$
spanned by the one-particle wave functions. Endowed with the above scalar product~$\la .|. \ra$,
this vector space is an $f$-dimensional Hilbert space~$(\Hil, \la .|. \ra_\Hil)$.
For any space-time point~$x$, we introduce the {\em{local correlation operator}}~$F(x)$ as the
signature operator of the inner product at~$x$ expressed in terms of the Hilbert space scalar product,
\[ \overline{\psi(x)} \phi(x) = -\la \psi | F(x) \phi \ra_\Hil \qquad \text{for all~$\psi, \phi \in \Hil$}\:. \]
Taking into account that the inner product at~$x$ has signature~$(2,2)$,
the local correlation operator is a symmetric operator in~$\Lin(\Hil)$
of rank at most four, which has at most two positive and at most two negative eigenvalues.

The local correlation operators encode how the Dirac particles are distributed in space-time
and how the wave functions are correlated at the individual space-time points.
We want to consider the local correlation operators as the basic objects in space-time,
meaning that all geometric and analytic structures of space-time (such as the causal
and metric structure, connection, curvature, gauge fields, etc.) should be deduced
from the operators~$F(x)$. The only structure besides the local correlation operators
which we want to keep is the volume measure of space-time.
In order to implement this concept mathematically, it is useful to
identify a space-time point~$x$ with the corresponding local correlation operator~$F(x) \in \Lin(\Hil)$.
Then space-time is identified with the subset~$F(M) \subset \Lin(\Hil)$.
Next, we introduce the {\em{universal measure}}~$\rho= F_* \mu_M$
as the push-forward of the volume measure on~$M$ under the mapping~$F$
(thus~$\rho(\Omega) := \mu_M(F^{-1}(\Omega))$, where~$d\mu_M
= \sqrt{|\deg g|}\, d^4x$ is the standard volume measure).
Dropping all the additional structures of space-time, we 
obtain a causal fermion system of spin dimension two as defined in~\cite[Section~1.2]{rrev}:

\begin{Def} {\em{
Given a complex Hilbert space~$(\Hil, \la .|. \ra_\Hil)$ (the {\em{``particle space''}})
and a parameter~$n \in \N$ (the {\em{``spin dimension''}}), we let~$\F \subset \Lin(\Hil)$ be the set of all
self-adjoint operators on~$\Hil$ of finite rank, which (counting with multiplicities) have
at most~$n$ positive and at most~$n$ negative eigenvalues. On~$\F$ we are given
a positive measure~$\rho$ (defined on a $\sigma$-algebra of subsets of~$\F$), the so-called
{\em{universal measure}}. We refer to~$(\Hil, \F, \rho)$ as a {\em{causal fermion system in the
particle representation}}.
}}
\end{Def} \noindent

Causal fermion systems provide a general abstract mathematical framework for the formulation
of relativistic physical theories.
Clearly, by choosing~$\Hil$ as an infinite-dimensional Hilbert space, one can
describe an infinite number of particles. The setting is so general that it allows for the
description of continuum space-times (such as Minkowski space or a Lorentzian manifold),
discrete space-times (such as a space-time lattice) and even so-called ``quantum space-times''
which have no simple classical correspondence (for examples, we refer the reader to~\cite{small, lqg}
and~\cite{finite}). It is a remarkable fact that in causal fermion systems there are
many inherent geometric and analytic structures. Namely, starting from a general
causal fermion system, one can deduce space-time together with geometric structures
which generalize the setting of spin geometry (see~\cite{lqg}).
Moreover, the causal action principle gives interesting analytic structures (see~\cite{continuum}).
In particular, one obtains a background-free formulation of quantum
field theory in which ultraviolet divergences of standard quantum field theory are avoided.
Since the setting also works in curved space-time, causal fermion systems are also a promising
approach for quantum gravity. We refer the interested reader to the review article~\cite{lqg}
as well as to the preprints~\cite{sector, lepton, quark}.

We now outline how to deduce the geometric and analytic structures which will
be relevant in this paper. Starting from a causal fermion system~$(\Hil, \F, \rho)$, one
defines space-time as the support of the universal measure, $M := \supp \rho$.
On~$M$, we consider the topology induced by~$\F \subset \Lin(\Hil)$.
The causal structure is encoded in the spectrum of the operator products~$x y$:
\begin{Def} \label{def2}  {\em{ For any~$x, y \in \F$, the product~$x y$ is an operator
of rank at most~$2n$. We denote its non-trivial eigenvalues
by $\lambda^{xy}_1, \ldots, \lambda^{xy}_{2n}$ (where we count with algebraic multiplicities).
The points~$x$ and~$y$ are called {\em{timelike}} separated if the~$\lambda^{xy}_j$ are all real.
They are said to be {\em{spacelike}} separated if all the~$\lambda^{xy}_j$ are complex
and have the same absolute value.
In all other cases, the points~$x$ and~$y$ are said to be {\em{lightlike}} separated.
}}
\end{Def}
The main analytic tool is the causal variational principle defined as follows.
For two points~$x, y \in \F$ we define the {\em{spectral weight}} $|.|$
of the operator products~$xy$ and~$(xy)^2$ by
\[ |xy| = \sum_{i=1}^{2n} |\lambda^{xy}_i|
\qquad \text{and} \qquad \left| (xy)^2 \right| = \sum_{i=1}^{2n} | \lambda^{xy}_i |^2 \:. \]
We also introduce the
\beq \label{Lagrange}
\text{\em{Lagrangian}} \qquad \L(x,y) = \big| (xy)^2 \big| - \frac{1}{2n}\: |xy|^2 \:.
\eeq
For a given universal measure~$\rho$ on~$\F$, we define the non-negative functionals
\begin{align}
\text{\em{action}} \qquad \Sact[\rho] &= \iint_{\F \times \F} \L(x,y)\: d\rho(x)\, d\rho(y)  \\
\text{\em{constraint}} \qquad \Tact[\rho] &= \iint_{\F \times \F} |xy|^2\: d\rho(x)\, d\rho(y) \label{Tdef}\:.
\end{align}
The {\em{causal action principle}} is to
\[ \text{minimize~$\Sact$ for fixed~$\Tact$} \]
under variations of the universal measure.
These variations should keep the total volume unchanged, which means that a
variation~$(\rho(\tau))_{\tau \in (-\varepsilon, \varepsilon)}$
should satisfy the conditions
\[ \big| \rho(\tau) - \rho(\tau') \big|(\F) < \infty \qquad \text{and} \qquad
\big( \rho(\tau) - \rho(\tau') \big) (\F) = 0 \]
for all~$\tau, \tau' \in (-\varepsilon, \varepsilon)$ 
(where~$|.|$ denotes the total variation of a measure; see~\cite[\S28]{halmosmt}).
Depending on the application, one may impose one of the following constraints:
\begin{itemize}
\item[(C1)] The {\bf{trace constraint}}:
\[ \int_\F \Tr (x) \, d\rho(x) = f\:. \]
\item[(C2)] The {\bf{identity constraint}}:
\[ \int_\F x \, d\rho(x) = \1_\Hil\:. \]
\item[(C3)] {\bf{Prescribing $2n$ eigenvalues}}: We denote the non-trivial eigenvalues of~$x$ counted with
multiplicities by~$\nu_1, \ldots, \nu_{2n}$ and order them such that
\[ \nu_1 \leq \cdots \leq \nu_{n} \;\leq 0 \;\leq\; \nu_{n+1} \leq \ldots \leq \nu_{2n} \:. \]
For given constants~$c_1,\ldots, c_{2n}$, we impose that
\beq \label{pin}
\nu_j(x) = c_j \qquad \text{for all } x \in M \text{ and } j=1,\ldots, 2n\:.
\eeq
\end{itemize}
Moreover, one may prescribe properties of the universal measure
by choosing a measure space~$(\hat{M}, \hat{\mu})$ and restricting attention to universal measures
which can be represented as the push-forward of~$\hat{\mu}$,
\beq \label{pushforward}
\rho = F_* \hat{\mu} \qquad \text{with} \qquad \text{$F \::\: \hat{M} \rightarrow \F$ measurable}\:.
\eeq
One then minimizes the action under variations of the mapping~$F$.

The Lagrangian~\eqref{Lagrange} is compatible with our notion of causality in the
following sense. Suppose that two points~$x, y \in \F$ are spacelike separated
(see Definition~\ref{def2}). Then the eigenvalues~$\lambda^{xy}_i$ all have the same absolute value,
so that the Lagrangian~\eqref{Lagrange} vanishes. Thus pairs of points with spacelike separation do not
enter the action. This can be seen in analogy to the usual notion of causality where
points with spacelike separation cannot influence each other.

In the present paper, we restrict attention to a special class of causal variational principles,
as we now explain. First, we always assume that the number~$f$ of particles is finite.
Then~$\Hil$ is finite-dimensional, and~$\F$ is a locally compact topological space.
Moreover, we always have the situation in mind when the above constraint~(C3) is present.
In this case, we may replace~$\F$ by the set of those operators which satisfy
the constraints~\eqref{pin}. Then $\F \subset \Lin(\Hil)$ becomes a compact topological space.
In this so-called {\em{compact setting}}, the causal variational principle is mathematically well-defined,
even without imposing the constraint~\eqref{Tdef}. For simplicity, we shall disregard this constraint.
Next, we do not want to use~\eqref{pushforward} to prescribe any properties of~$\rho$.
Instead, we allow~$\rho$ to be any normalized positive Borel measure on~$\F$.
This is what we mean by the {\em{continuous setting}}. Finally, we assume that the
Lagrangian~\eqref{Lagrange} is of the form
\beq \label{Ljusty}
\L = \max(0, \D) \qquad \text{with} \qquad \D \in C^\infty(\F \times \F, \R) \:.
\eeq
This assumption is automatically satisfied in the case of spin dimension one. Namely, in this case the
eigenvalues~$\lambda^{xy}_1$ and~$\lambda^{xy}_2$ are either both real, or else they form a
complex conjugate pair. In the first case, the Lagrangian is positive and smooth, whereas in the second
case it vanishes identically. Hence~$\L$ can indeed be written in the desired form~\eqref{Ljusty}.

The above specializations are made in order to make the mathematical problem manageable.
Despite the major simplifications, our results shed some light on the general
structure of minimizers of causal variational principles.
Qualitatively speaking, our results reveal a mechanism which favors
discrete over continuous configurations. This mechanism seems to have interesting physical implications:
First, thinking of a quantum space-time, the discreteness of the minimizing measure means that
the causal action principle should arrange that space-time is discrete on the Planck scale.
This effect would be very desirable because it would resolve the ultraviolet problems
of quantum field theory.
Another implication is related to field quantization.
To explain the connection, let us consider a family of universal measures~$(\rho_\tau)_{\tau \in \R}$
which describes a physical system where a classical amplitude is varied continuously.
Then our ``discreteness results'' mean that some discrete values of~$\tau$ should be favored.
As a consequence, the amplitude no longer takes continuous values, but it should be ``quantized''
to take the discrete values for which the causal action has a local minimum.
Finally, our ``discreteness results'' might also account for effects related to the wave-particle duality
and the collapse of the wave function in the measurement process.
For a more detailed physical discussion we refer to~\cite{rev, dice2010}.

\subsection{The Mathematical Setup} \label{secmath}
We now introduce our mathematical setting.
Let~$\F$ be a smooth compact manifold (of arbitrary dimension). For a given function
\begin{align} 
\D \in C^\infty(\F \times \F, \R) \quad &\text{being symmetric: }
\D(x,y) = \D(y,x) \;\; \forall \:x,y \in \F \label{Ddef} \\
&\text{and strictly positive on the diagonal: $\D(x,x)>0\:,$} \label{posdiagonal}
\end{align}
we define the {\em{Lagrangian}}~$\L$ by
\beq \label{Lform}
\L = \max(0, \D) \in C^{0,1}(\F \times \F, \R^+_0) \:.
\eeq
Introducing the {\em{action}}~$\Sact$ by
\beq \label{Sdef}
\Sact[\rho] = \iint_{\F \times \F} \L(x,y)\: d\rho(x)\: d\rho(y)\:,
\eeq
our action principle is to
\beq \label{var}
\text{minimize~$\Sact$ under variations of~$\rho \in \M\:,$}
\eeq
where~$\M$ denotes the set of all normalized positive regular Borel measures on~$\F$.
In view of the symmetric form of~\eqref{Sdef}, it is no loss of generality to assume that~$\L(x,y)$
is symmetric in~$x$ and~$y$. Therefore, it is natural to assume that also~$\D(x,y)$ is symmetric~\eqref{Ddef}. If~\eqref{posdiagonal} were violated, every measure supported in the
set~$\{x \,|\, \D(x,x) \leq 0\}$ would be a minimizer.
Thus the condition~\eqref{posdiagonal} rules out trivial cases.

The existence of minimizers follows immediately from abstract compactness arguments
(see~\cite[Section~1.2]{continuum}).
\begin{Thm}
The infimum of the variational principle~\eqref{var} is attained in~$\M$.
\end{Thm}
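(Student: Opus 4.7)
The plan is to use the direct method in the calculus of variations. Since $\F$ is a compact manifold, $\F$ is a compact metrizable space, and the set $\M$ of normalized positive regular Borel measures on $\F$ can be identified (via the Riesz representation theorem) with a subset of the dual $C(\F)^*$. Equipping $\M$ with the weak-$*$ topology induced by $C(\F)$, the Banach--Alaoglu theorem together with the normalization $\rho(\F)=1$ and positivity shows that $\M$ is weak-$*$ compact (it is a closed subset of the unit ball of $C(\F)^*$, the defining properties $\rho\geq 0$ and $\rho(\F)=1$ being preserved under weak-$*$ limits).

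Next I would show that the functional $\rho \mapsto \Sact[\rho]$ is weak-$*$ continuous. Since $\L \in C^{0,1}(\F\times\F,\R^+_0)$ and $\F\times\F$ is compact, $\L$ is bounded and uniformly continuous. If $\rho_n \rightharpoonup \rho$ in the weak-$*$ sense on $\F$, then the product measures satisfy $\rho_n\otimes\rho_n \rightharpoonup \rho\otimes\rho$ on $\F\times\F$. This last fact is what requires a small argument: for test functions of the pure tensor form $f(x)g(y)$ with $f,g\in C(\F)$, the convergence follows from Fubini and the one-variable weak-$*$ convergence applied twice; the extension to arbitrary $h\in C(\F\times\F)$ then follows from a Stone--Weierstrass approximation (uniform approximation by finite sums of such tensors on the compact set $\F\times\F$) combined with the uniform bound $\rho_n(\F)=1$. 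Applying this with $h=\L$ yields $\Sact[\rho_n]\to\Sact[\rho]$.

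Finally, I would take any minimizing sequence $\rho_n \in \M$ with $\Sact[\rho_n]\to\inf_{\M}\Sact$, extract a weak-$*$ convergent subsequence $\rho_{n_k}\rightharpoonup\rho\in\M$ by compactness, and conclude from continuity that $\Sact[\rho]=\inf_{\M}\Sact$, so the infimum is attained. The infimum is moreover finite and non-negative since $0\le\L\le\|\L\|_\infty$ on the compact set $\F\times\F$.

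The only mildly subtle step is the weak-$*$ continuity of the quadratic functional $\rho\mapsto\iint\L\,d\rho\,d\rho$, i.e.\ the passage from $\rho_n\rightharpoonup\rho$ to $\rho_n\otimes\rho_n\rightharpoonup\rho\otimes\rho$; everything else is a direct invocation of Banach--Alaoglu. Note that lower semicontinuity alone would already suffice (and is what one would need in a non-compact or less regular setting), but since $\L$ is continuous on a compact space we get full continuity essentially for free.
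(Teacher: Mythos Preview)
Your proof is correct and is precisely the ``abstract compactness argument'' that the paper alludes to: the paper does not spell out a proof here but simply cites~\cite[Section~1.2]{continuum}, where the same direct-method strategy (weak-$*$ compactness of~$\M$ via Banach--Alaoglu plus continuity of~$\Sact$ on the compact space~$\F\times\F$) is carried out. Your handling of the product-measure convergence via Stone--Weierstrass is the standard way to justify this step, so there is nothing to add.
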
 \noindent
Note that the minimizers will in general not be unique. Moreover, the abstract framework gives
no information on what the minimizers look like.

The notion of causality can now be introduced via the sign of~$\D$.
\begin{Def}[{\bf{causal structure}}]
\[ \text{Two points~$x, y \in \F$ are called} \; \left\{ \begin{array}{c} \text{timelike}
\\ \text{lightlike} \\ \text{spacelike} \end{array} \right\} \; \text{separated if} \;
\left\{ \begin{array}{c} \D(x,y) > 0 \\ \D(x,y) = 0 \\ \D(x,y) < 0 \!\! \end{array} \right\} . \]
We define the sets
\begin{align*}
\I(x) &= \{ y \in \F \text{ with } \D(x,y) > 0 \} &\quad& \text{open lightcone} \\
\J(x) &= \{ y \in \F \text{ with } \D(x,y) \geq 0 \} && \text{closed lightcone} \\
\K(x) &= \partial \I(x) \cap \partial \big( \F \setminus \J(x) \big) && \text{boundary of the lightcone}\:.
\end{align*}
\end{Def} \noindent
Thus~$y \in \K(x)$ if and only if the function~$\D(x,.)$ changes sign in every neighborhood of~$y$.

Our action is compatible with the causal structure in the sense that if~$x$ and~$y$ have lightlike
or spacelike separation, then the Lagrangian vanishes, so that the pair~$(x,y)$ does not contribute to
the action. Note that for a given minimizer~$\rho$, we have similarly a causal structure
on its support.

In order to work in more specific examples, we shall consider the following three model problems.
\begin{itemize}
\item[(a)] {\em{Variational principles on the sphere:}} \label{exsphere} \\
We consider the setting of~\cite[Chapter~1]{continuum} in the case~$f=2$
(see also~\cite[Examples~1.5, 1.6 and~2.8]{continuum}).
Thus  for a given parameter~$\tau\geq1$, we let~$\F$ be the space of Hermitian $(2 \times 2)$-matrices
whose eigenvalues are equal to~$1+\tau$ and~$1-\tau$.
Writing a matrix~$F \in \F$ as a linear combination of Pauli matrices,
\[ F = \tau\: x \!\cdot\! \sigma + \1 \qquad \text{with} \qquad x \in S^2 \subset \R^3 \:, \]
we can describe~$F$ by the unit vector~$x$ (here~$\cdot$ denotes the scalar product in~$\R^3$).
Thus~$\F$ can be identified with the unit sphere~$S^2$.
The function~$\D$ is computed in~\cite[Example~2.8]{continuum} to be
\beq \label{DS2}
\D(x,y) = 
2 \tau^2\: (1+ \langle x,y \rangle) \left( 2 - \tau^2 \:(1 - \langle x,y \rangle) \right) .
\eeq
This function depends only on the angle~$\vartheta_{xy}$ between the points~$x, y \in S^2$
defined by~$\cos \vartheta_{xy} = \langle x, y \rangle$,
which clearly coincides with the geodesic distance of~$x$ and~$y$.
\begin{figure}
 \centering
 \includegraphics[width=9cm]{plotD_2}
 \caption{The function $\D$.}
 \label{plotD}
\end{figure}
Considered as function of~$\vartheta\in [0,\pi]$, $\D$ has its maximum at~$\vartheta=0$
and is minimal if~$\cos(\vartheta)=-\tau^{-2}$. Moreover, $\D(\pi)=0$.
Typical plots are shown in Figure~\ref{plotD}.
In the case~$\tau>1$, the function~$\D$ has two zeros at~$\pi$ and
\begin{equation}\label{thetamax}
\vartheta_{\max}:=\arccos\left(1-\frac{2}{\tau^2}\right) . \end{equation} 
In view of~\eqref{Lform}, the Lagrangian is positive if and only if~$0\leq \vartheta<\vartheta_{\max}$. Thus~$\I(x)$ is an open spherical cap, and $\J(x)$ is its closure together with the antipodal point of~$x$, 
\[ \I(x)= \Big\{y:\langle x,y\rangle > 1-\frac{2}{\tau^2} \Big\} \:,\qquad
\J(x)=\overline{\I(x)}\cup \{-x\} \:. \]
If~$\tau$ is increased, the opening angle~$\vartheta_{\max}$ of the lightcones gets smaller.
In the degenerate case~$\tau=1$, the function~$\D$ is decreasing, non-negative and has exactly one zero
at $\vartheta=\pi$. Hence the Lagrangian~$\Ll$ coincides with $\D$. All points on the sphere are
timelike separated except for the antipodal points.
The lightcones are given by~$\I(x)=S^2\backslash \{-x\}$ and~$\J(x)=S^2$.

If we regard~$\rho$ as a density on the sphere, the
action~\eqref{Sdef} looks like the energy functional corresponding to
a pair potential~$\Ll$ (see for example~\cite{saff+kuijlaars}).
Using physical notions, our pair potential is repelling (because~$\Ll(\vartheta)$ is a decreasing function)
and has short range (because~$\L$ vanishes if~$\vartheta \geq \vartheta_{\max}$).
\item[(b)] {\em{Variational principles on the circle:}} \\ \label{Bcircle}%
In order to simplify the previous example, we set~$\F = S^1$. For~$\D$ we
again choose~\eqref{DS2}.
\item[(c)] {\em{Variational principles on the flag manifold~$\F^{1,2}(\C^f)$:}} \\
As in~\cite[Chapter~1]{continuum}, for a given parameter~$\tau > 1$ and an integer parameter~$f>2$,
we let~$\F$ be the space of Hermitian $(f \times f)$-matrices of rank two, whose
nontrivial eigenvalues are equal to~$1+\tau$ and~$1-\tau$. Every~$x \in \F$ is uniquely described
by the corresponding eigenspaces~$U$ and~$V$.
By considering the chain~$U \subset (U \cup V)$, $x$ can be identified with an element of the flag manifold $\F^{1,2}(\C^f)$, the space of one-dimensional subspaces contained in a two-dimensional
subspace of $\C^f$ (see~\cite{helgason2}). It is a $(4f-6)$-dimensional compact manifold.
Every~$U \in \U(f)$ gives rise to the mapping~$x \rightarrow U x U^{-1}$ on~$\F$.
This resulting group action of~$\U(f)$ on~$\F$ acts transitively, making~$\F$
a homogeneous space (see~\cite{helgason2} for details).

For two points~$x, y \in \F$, we denote the two non-trivial eigenvalues of the matrix product~$x y$
by~$\lambda_+^{xy}, \lambda_-^{xy} \in \C$ and define the Lagrangian by
\[ \L(x,y) = \frac{1}{2} \left(|\lambda_+^{xy}| - |\lambda_-^{xy}| \right)^2 \:. \]
This Lagrangian is $\U(f)$-invariant. In order to bring it into a more convenient form,
we first note that by restriction to the image of~$y$, the characteristic polynomial
of~$xy$ changes only by irrelevant factors of~$\lambda$,
\[ \det(x y - \lambda \1) = \lambda^{f-2} \; \det \left( (\pi_y \,x y- \lambda \1)|_{\text{Im}\, y} \right) , \]
where~$\pi_y$ denotes the orthogonal projection to~$\text{Im} \,y$.
It follows that~$\lambda_+^{xy}$ and~$\lambda_-^{xy}$ are the eigenvalues of the
$(2 \times 2)$-matrix $\pi_y x y|_{\text{Im}\, y}$. In particular,
\[ \lambda_+^{xy} \lambda_-^{xy} = \det (\pi_y x y|_{\text{Im}\, y})
= \det (\pi_y x \pi_y|_{\text{Im}\, y}) \, \det(y|_{\text{Im}\, y}) \geq 0 \:, \]
because the operator~$\pi_y x \pi_y$ again has at most one positive and one negative eigenvalue.
Moreover, the relation~$\lambda_+^{xy} +\lambda_-^{xy}=\Tr(xy)\in \R$
shows that the two eigenvalues are either both real and have the same sign or else they
form a complex conjugate pair, in which case the Lagrangian vanishes. Finally, using
that~$\left(\lambda_+^{xy}\right)^2 +\left(\lambda_-^{xy}\right)^2=\Tr \big((x y)^2 \big)$, the Lagrangian
can be written in the form~\eqref{Lform} with
\beq \label{Dsimple}
\D(x,y) =  \frac{1}{2} \left(\lambda_+^{xy} - \lambda_-^{xy} \right)^2 =
\Tr \big((x y)^2 \big) - \frac{1}{2} \big( \Tr(xy) \big)^2 \:.
\eeq
\end{itemize}
We finally comment on the limitations of our setting and mention possible generalizations.
First, we point out that our structural results do not immediately apply in the cases when~$\F$ is
non-compact or when additional constraints are considered (see~\cite[Chapter~2]{continuum}).
However, it seems that in the non-compact case, our methods and results
could be adapted to the so-called moment measures as introduced in~\cite[Section~2.3]{continuum}.
A promising strategy for handling additional constraints would be to first derive the corresponding
Euler-Lagrange equations, treating the constraints with Lagrange multipliers
(for details see~\cite{lagrange}).
Then one could
try to recover these Euler-Lagrange equations as those corresponding to an unconstrained
variational problem on a submanifold~$\G \subset \F$, where our methods could again be used.
We finally point out that in the case of higher spin dimension~$n>1$, it is in general impossible
to write the Lagrangian in the form~\eqref{Lform} with a smooth function~$\D$, because
the Lagrangian is in general only Lip\-schitz continuous in the open light cone.
A possible strategy would be to first show that the support of~$\rho$ lies on
a submanifold~$\G \subset \F$, and then to verify that by restricting~$\L$ to~$\G \times \G$,
it becomes smooth in the open lightcones.

\subsection{Overview of the Main Results} \label{secresults}
We now outline our main results.
In Section~\ref{sec3}, we present numerical results on the sphere (see Figure~\ref{figvgl_weight})
and discuss all the main effects which will be treated analytically later on.
In Section~\ref{sec4}, we derive general results on the structure of the minimizers.
We first derive the corresponding Euler-Lagrange equations and conditions
for minimality (see Lemma~\ref{lemmaEL} and Lemma~\ref{lemmaP}).
We then prove under general assumptions that the minimizers are either generically 
timelike (see Definition~\ref{defgtl}) or else the support of the minimizing measure~$\rho$ defined by
\[ \supp \rho = \{ x \in \F \:|\:  \text{$\rho(U) \neq 0$ for every open neighborhood~$U$ of~$x$} \} \]
is singular in the sense that its interior is empty (see Theorems~\ref{thmmain1} and~\ref{thmmain2}).
In the following sections, we apply these general results to our model examples and derive more detailed
information on the minimizers. 
In Section~\ref{sec5}, we consider the variational principle on the circle.
After briefly discussing numerical results (see Figure~\ref{vgl_S1}),
we prove a ``phase transition'' between generically timelike minimizers and minimizers with
singular support and construct many minimizers in closed form (see Corollary~\ref{corS1} and
Theorem~\ref{thmS1}). In Section~\ref{sec6}, the variational principle on the sphere is considered.
We again prove the above phase transition (see Corollary~\ref{cor61}) and estimate the action
from above and below (see Figure~\ref{figestimates} and
Proposition~\ref{prp_k_gen}). Finally, in Section~\ref{sec7} we
apply our general results to flag manifolds (see Theorem~\ref{corfm}). Moreover, we prove
that minimizers with singular support exist (see Theorem~\ref{thmfm}) and
give an outlook on generically timelike minimizers.

We close the introduction with a remark on our methods. Generally speaking, our
proofs are based on a detailed functional analytic study of first and second variations.
The analysis of second variations reveals that, for a minimizing measure~$\rho$,
a certain operator~$\L_\rho$ on a Hilbert space~$(\Hil_\rho, \lbra .,. \lket_\rho)$
is positive (see Lemma~\ref{lemmaP}).
This positivity property is the key for analyzing the support of~$\rho$.
Our method is to proceed indirectly and to show that if the support does not have the
desired properties, then there is a vector~$\psi \in \Hil_\rho$ with~$\lbra \psi, \L_\rho \psi \lket_\rho < 0$,
a contradiction. Clearly, the difficulty is to construct the vector~$\psi$; here we use the specific form
of our variational principle.

The method of relating properties of the minimizer to positivity properties of a corresponding operator
bears some similarity to the methods in~\cite{suto, capet+friesecke}.
Namely, in~\cite{suto} it is shown that if a certain pair potential is positive in momentum space,
then there are discrete ground states. In~\cite{capet+friesecke}, on the other hand,
it is shown for a repulsive pair potential that in a suitable limit where the number of particles tends
to infinity, the particles are distributed uniformly on a $2$-sphere.
The methods in both of these papers cannot be compared directly, because the mathematical structure
of the considered variational problems is quite different from our causal variational principles.
In particular, our variational principles are not formulated for particle configurations,
but for general measures. Moreover, we make essential use of the fact that our pair potential is not
smooth, but only Lipschitz continuous.

\section{Numerical Results on the Sphere} \label{sec3}
In order to motivate our general structural results, we now describe our findings in a
numerical analysis of the variational principle on the sphere (see Example~(a)
on page~\pageref{exsphere}). Clearly, in a numerical study
one must work with discrete configurations. Our first attempt is to choose a finite number of
points~$x_1, \ldots, x_m \in S^2$ and to let~$\rho$ be the corresponding normalized
{\em{counting measure}}, i.e.
\beq \label{count}
\int_{S^2} f\, d\rho \::=\: \frac{1}{m} \sum_{i=1}^m f(x_i) \qquad \forall f \in C^0(S^2)\:.
\eeq
Then the action~\eqref{Sdef} becomes
\beq \label{Scount}
\Ss = \frac{1}{m^2} \sum_{i,j=1}^m \Ll(x_i, x_j)\:.
\eeq
By varying the points~$x_i$ for fixed~$m$, we obtain a minimizer~$\rho_m$.
Since every normalized positive regular Borel measure can be approximated by such counting
measures, we can expect that if we choose~$m$ sufficiently large, the measure~$\rho_m$
should be a good approximation of a minimizing measure~$\rho \in \M$
(more precisely, we even know that $\rho_m \rightarrow \rho$ as $m \rightarrow \infty$ with
convergence in the weak $(C^0)^*$-topology).

If~$\tau$ is sufficiently large, the opening angle of the lightcones is so small that
the~$m$ points can be distributed on the sphere such that any two different points are
spacelike separated. In this case, the action becomes
\beq \label{Tammes}
\Ss = \frac{1}{m}\: \Ll(\vartheta=0) \:,
\eeq
and in view of~\eqref{Scount} this is indeed minimal. The question for which~$\tau$
such a configuration exists leads us to the {\em{Tammes problem}}, a packing problem where
the points are distributed on
the sphere such that the minimal distance~$\vartheta_m$
between distinct points is maximized, see \cite{sloane}.
More precisely, we know that the Tammes distribution is a minimizer of our action if~$\tau$ is so large
that~$\vartheta_m > \vartheta_{\text{max}}$.
Until now, the Tammes problem is solved only if $m \leq 12$ and for~$m=24$ (for details see~\cite{croft}
and the references therein). For special values of~$m$, the solutions of the Tammes problem
are symmetric solids such as the tetrahedron ($m=4$), the octahedron ($m=6$),
the icosahedron ($m=12$) and the snub cube ($m=24$). Moreover, much research has been done on the numerical evaluation of spherical codes, mostly by N.J.A.\ Sloane, with the collaboration of R.H.\ Hardin,
W.D.\ Smith and others, \cite{sloane}, obtaining numerical solutions of the Tammes problem for up to $130$ points.

In the case~$\vartheta_m < \vartheta_{\max}$, the measure~$\rho_m$ was constructed
numerically using a simulated annealing algorithm\footnote{We use the
\emph{``general simulated annealing algorithm''} by J.\ Vandekerckhove, 
\copyright~2006, {\tt{http://www.mathworks.de/matlabcentral/fileexchange/10548}}.}.
In order to get optimal results, we used this algorithm iteratively, using either a Tammes distribution
or previous numerical distributions as starting values. Using that~$\D$ depends smoothly on~$\tau$,
it is useful to increase or decrease~$\tau$ in small steps, and to use the numerical minimizer as the
starting configuration of the next step. In Figure~\ref{figvgl_dis}, the numerically found~$\Sact[\rho_m]$
is plotted for different values of~$m$ as a function of the parameter~$\tau$.
 \begin{figure}
 \centering
 \includegraphics[width=8cm]{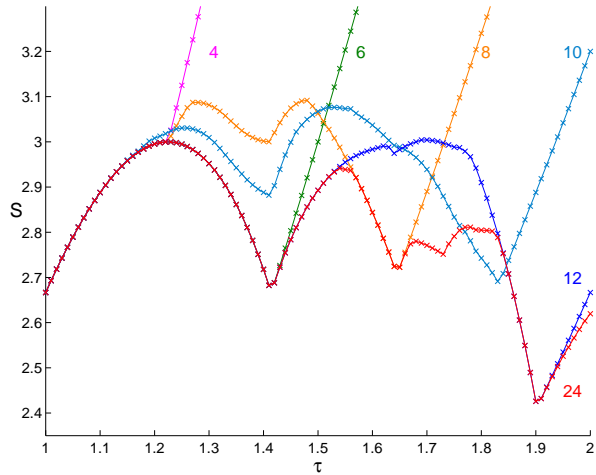}
 \caption{Numerical minima for the counting measure on the sphere.}
 \label{figvgl_dis}
 \end{figure}
The resulting plots look rather complicated. The considered values for~$m$ are too small
for extrapolating the limiting behavior as~$m \rightarrow \infty$. Nevertheless, one observation
turned out to be very helpful: Near~$\tau \approx 1.2$, the plots for different values of~$m$ look the same.
The reason is that some of the~$x_i$ coincide, forming ``clusters'' of several points.
For example, in the case~$m=12$, the support of $\rho$ consists of only six distinct points,
each occupied by two~$x_i$. A similar ``clustering effect'' also occurs for higher $\tau$ if~$m$
is sufficiently large.

These findings give the hope that for large~$m$, the minimizers might be well-approximated
by a measure supported at a few cluster points, with weights counting the number of points at each cluster.
This was our motivation for considering a {\em{weighted counting measure}}. Thus for
any fixed~$m$, we choose points~$x_1, \ldots, x_m \in S^2$ and
corresponding weights~$\rho_1, \ldots, \rho_m$ with
\[ \rho_i \geq 0 \qquad \text{and} \qquad \sum_{i=1}^m \rho_i = 1\:. \]
We introduce the corresponding measure~$\rho$ in generalization of~\eqref{count} by
\beq \label{weightcount}
\int_{S^2} f\, d\rho \::=\: \sum_{i=1}^m \rho_i \,f(x_i) \qquad \forall f \in C^0(S^2)\:.
\eeq
Seeking for numerical minimizers by varying both the points~$x_i$ and the weights~$\rho_i$,
we obtain the plots shown in Figure~\ref{figvgl_weight}.
 \begin{figure}
  \centering
 \includegraphics[width=10cm]{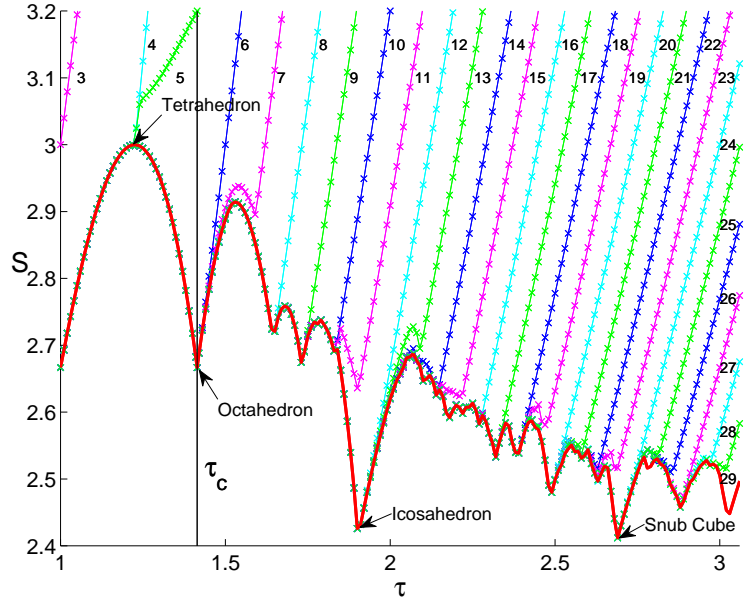}
 \caption{Numerical minima for the weighted counting measure on the sphere.}
 \label{figvgl_weight}
 \end{figure}

These plots suggest the following structure of the minimizers.
Let us denote the minimizing weighted counting measure for given~$m$ by~$\rho(m)$.
Then for any fixed~$\tau$, the series~$\Sact[\rho(m)]$ is monotone decreasing
(this is obvious because every~$\rho(m)$ can be realized by a weighted counting measure
with~$m_+>m$ summands by choosing $m_+-m$ weights equal to zero).
The important observation is that there is an integer~$m_0$ from which point on
the series stays constant, i.e.
\[ \Sact[\rho(m_-)] > \Sact[\rho(m_0)] = \Sact[\rho(m_+)] \qquad \forall\; m_- < m_0 < m_+\:. \]
This implies that the measure~$\rho_{m_0}$ is also a minimizer in the class
of all Borel measures. This leads us to the following
\begin{Conjecture} For any~$\tau \geq 1$, there is a minimizer~$\rho \in \M$ of the
variational problem on the sphere which is a weighted counting measure
supported at~$m_0$ points.
\end{Conjecture}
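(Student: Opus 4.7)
The strategy is to combine the abstract existence of a minimizer from the existence theorem at the end of Section~\ref{secprelim} with a Tchakaloff-type moment reduction that exploits the fact that, on the sphere, $\D$ is a polynomial in the ambient coordinates of $S^2\subset\R^3$.

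First I would pick any minimizer $\rho\in\M$, set $K:=\supp\rho$, and split into two cases according to Theorem~\ref{thmmain1}: either $\rho$ is generically timelike in the sense of Definition~\ref{defgtl}, or $K$ has empty interior. The key observation is that by~\eqref{DS2} the function $\D(x,y)$ is a quadratic polynomial in $\langle x,y\rangle=x_1y_1+x_2y_2+x_3y_3$, hence of degree at most two separately in the ambient coordinates of $x$ and of $y$. Let $V\subset C(S^2)$ denote the space of restrictions to $S^2$ of polynomials of degree $\le 2$ in $x_1,x_2,x_3$; the relation $x_1^2+x_2^2+x_3^2=1$ gives a single linear dependency, so $\dim V=9$ and $1\in V$. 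Crucially, the bilinear form
\[
(\sigma_1,\sigma_2)\mapsto\iint_{S^2\times S^2}\D(x,y)\,d\sigma_1(x)\,d\sigma_2(y)
\]
depends on $\sigma_1,\sigma_2$ only through the $V$-moments $\int p\,d\sigma_i$, $p\in V$.

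Tchakaloff's theorem, applied on the compact set $K$ to the subspace $V\subset C(K)$, then produces a positive measure $\rho'$ supported on at most $\dim V=9$ points of $K$ whose $V$-moments coincide with those of $\rho$. In particular $\rho'\in\M$ and $\iint\D\,d\rho'\,d\rho'=\iint\D\,d\rho\,d\rho$. When $\rho$ is generically timelike, the inclusion $K\times K\subseteq\{\D\ge 0\}$ forces $\L=\D$ on $\supp\rho'\times\supp\rho'\subseteq K\times K$, and one obtains
\[
\Sact[\rho']=\iint\L\,d\rho'\,d\rho'=\iint\D\,d\rho'\,d\rho'=\iint\D\,d\rho\,d\rho=\iint\L\,d\rho\,d\rho=\Sact[\rho],
\]
so $\rho'$ is a finitely supported minimizer with $m_0\le 9$. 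This already covers the degenerate case $\tau=1$, where $\L=\D$ holds everywhere apart from the $\rho\otimes\rho$-null antipodal set.

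The main obstacle is the singular-support case, in which $K\times K$ intersects the open spacelike region and the reduction above only yields $\Sact[\rho']\ge\Sact[\rho]$, because $\L=0>\D$ at spacelike pairs that the reduced measure may place in its support. To close the argument I would first attempt to partition $K$ into finitely many \emph{causally homogeneous} clusters $K_1,\ldots,K_N$ such that each product $K_i\times K_j$ lies either entirely in $\{\D\ge 0\}$ or entirely in $\{\D<0\}$, and then apply Tchakaloff to the restriction of $\rho$ to each $K_i$ with respect to an enlarged test space containing the functions $y\mapsto\D(x,y)$ for finitely many probe points $x\in K$. An alternative is to work directly with the decreasing sequence $\Sact[\rho(m)]$ of weighted-counting minima from Section~\ref{sec3}, combining the Euler-Lagrange characterization of Lemma~\ref{lemmaEL} for a weak-$*$ limit of these minimizers with an \emph{a priori} bound on the number of atoms in order to force the sequence to become stationary at some finite $m_0$. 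The essential difficulty, and where I expect the bulk of the analytical work to lie, is to rule out that the optimal support is a genuinely continuous singular set on which neither moment-matching nor atomic stabilization can be implemented.
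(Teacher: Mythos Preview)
The statement you are trying to prove is stated in the paper as a \emph{Conjecture}; it is not proved there. The evidence offered is purely numerical (Figure~\ref{figvgl_weight}). What the paper does establish rigorously is that for $\tau\le\tau_c=\sqrt{2}$ every minimizer is generically timelike and the octahedron furnishes an explicit six-point minimizer (Corollary~\ref{cor61}), while for $\tau>\tau_c$ the support has empty interior but nothing further is shown about its structure. So there is no paper proof to compare against.

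Your Tchakaloff argument for the generically timelike regime is correct and is a pleasant alternative to the paper's explicit construction: since $\D$ in~\eqref{DS2} is quadratic in the ambient coordinates and $\L=\D$ on $K\times K$, matching the nine $V$-moments on $K$ yields a minimizer with at most nine atoms. This settles the conjecture for $\tau\le\sqrt{2}$. For $\tau>\sqrt{2}$, however, there is a real gap, which you correctly flag. The ``causally homogeneous cluster'' idea does not work as stated: empty interior is all that is known about $K$, and nothing rules out $K$ being, say, a Cantor-type set on which timelike and spacelike pairs in $K\times K$ accumulate on one another, so that no finite Borel partition with the required sign property on each product $K_i\times K_j$ exists. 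Your fallback via the sequence $\Sact[\rho(m)]$ presupposes precisely the a~priori bound on the number of atoms that the conjecture asserts, so it does not supply a new mechanism. In short, for $\tau>\sqrt{2}$ the conjecture remains open both in the paper and in your proposal.
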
 \noindent
From Figure~\ref{figvgl_weight} we can read off the value of~$m_0$ as a function of~$\tau$.
More precisely, if~$m<m_0$, then~$\Sact[\rho(m)]$ is not minimal and coincides with the action for the
Tammes distribution~\eqref{Tammes}. The corresponding curves are labelled in Figure~\ref{figvgl_weight}
by~$m=3,4, \ldots, 29$. If~$m \geq m_0$, on the other hand, $\Sact[\rho(m)]$
is minimal and lies on the thick curve in Figure~\ref{figvgl_weight}.
Obviously, $m_0$ increases as~$\tau$ gets larger. This corresponds to the fact that
for increasing~$\tau$, the opening angle~$\vartheta_{\max}$ of the light cones gets smaller,
so that it becomes possible to distribute more points on the sphere which are all
spatially separated from all the other points.

The more detailed numerical study of the minimizers showed another interesting effect.
For values~$\tau < \tau_c := \sqrt{2}$, we found many different minimizers of different
form. They all have the property that they are {\em{completely timelike}} in the sense
that all points in the support of the minimizing measure have timelike or lightlike separation from all
the other points. We found minimizers supported on an arbitrarily large number of
points. If, on the other hand~$\tau > \tau_c$, all minimizers were supported on at most~$m_0(\tau)$
points, indicating that every minimizing measure~$\rho \in \M$ should be {\em{discrete
with finite support}}. The intermediate value~$\tau = \tau_c$ corresponds to
the opening angle~$\vartheta_{\max}=\frac{\pi}{2}$ of the light cones.
\begin{Conjecture} If~$\tau< \tau_c$, every minimizer is completely timelike.
If, conversely, $\tau > \tau_c$, every minimizing measure is discrete with finite support.
\end{Conjecture}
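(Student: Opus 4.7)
\medskip
\noindent\textbf{Plan for the proof.} The critical value $\tau_c=\sqrt{2}$ corresponds via~\eqref{thetamax} to $\vtmax=\pi/2$, the angle at which each open lightcone $\I(x)$ degenerates from a set strictly larger than an open hemisphere to a set strictly smaller. The strategy is to combine this geometric picture with the Euler--Lagrange relations of Lemma~\ref{lemmaEL} and Lemma~\ref{lemmaP} and the structural dichotomy of Theorem~\ref{thmmain1} and Theorem~\ref{thmmain2}, handling the two sides of $\tau_c$ by complementary techniques.

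For the subcritical case $\tau<\tau_c$, I would use harmonic analysis on $S^2$. Since $\L(x,y)$ depends only on $\langle x,y\rangle$, the Funk--Hecke formula gives the Parseval identity
\[ \Sact[\rho]=a_0(\tau)+\sum_{\ell\geq 1}\frac{4\pi\,a_\ell(\tau)}{2\ell+1}\sum_{|m|\leq\ell}|\widehat\rho(\ell,m)|^2,\quad a_\ell(\tau)=\frac{2\ell+1}{2}\int_{1-2/\tau^2}^{1}\D(t)\,P_\ell(t)\,dt . \]
The key technical task is to show that $a_\ell(\tau)\geq 0$ for all $\ell\geq 1$ whenever $\tau\leq\tau_c$; at $\tau=\tau_c$ the integration range becomes $[0,1]$, which makes the sign tractable. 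Granting this, the uniform surface measure attains the minimum $a_0(\tau)$, and every other minimizer must be concentrated in the spherical harmonics with $a_\ell(\tau)=0$. Substituting this back into the Euler--Lagrange relation $\int\L(x,y)\,d\rho(y)=\Sact[\rho]$ for $x\in\supp\rho$ should rule out spacelike pairs in the support, yielding the completely timelike property.

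For the supercritical case $\tau>\tau_c$ the plan is in two stages. First, rule out completely timelike minimizers: such a measure would be supported in a set of angular diameter at most $\vtmax<\pi/2$, contained in an open hemisphere, and a simple weighted competitor exploiting the antipodal region (e.g.\ two equal masses separated by an angle slightly greater than $\vtmax$) has strictly smaller action, contradicting minimality. Hence the minimizer contains a spacelike pair, and Theorem~\ref{thmmain2} forces $\supp\rho$ to have empty interior. Second, upgrade ``empty interior'' to ``finite support'': the function $x\mapsto\int_\F\L(x,y)\,d\rho(y)-\Sact[\rho]$ is piecewise real-analytic on $S^2$, the only non-smooth locus being $\bigcup_{y\in\supp\rho}\K(y)$, vanishes on $\supp\rho$ and is nonnegative everywhere; a real-analytic zero-set argument on each smooth stratum, combined with compactness of $S^2$, would then force $\supp\rho$ to consist of only finitely many points.

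The main obstacle I foresee is the sharp positivity of all Legendre coefficients $a_\ell(\tau)$ up to $\tau_c$, since the integrals involve rapidly oscillating $P_\ell$ and no term-by-term positivity is apparent; the verification may require either a recursion exploiting the quadratic form of $\D$ in $t$ or an integration-by-parts argument using the endpoint $t=1-2/\tau^2$. A secondary obstacle on the supercritical side is the exclusion of one-dimensional curves from $\supp\rho$ once empty interior is established, which likely demands a second-variation argument showing that smoothing mass tangentially along any such curve strictly decreases the action.
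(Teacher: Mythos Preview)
This statement is labelled a \emph{Conjecture} in the paper and is \emph{not} proved there in full. What the paper does establish is Corollary~\ref{cor61}: for $\tau\le\tau_c$ every minimizer is \emph{generically timelike} (hence in particular completely timelike), while for $\tau>\tau_c$ no minimizer is generically timelike and $\overset{\circ}{\supp\rho}=\varnothing$. The upgrade from ``empty interior'' to ``discrete with finite support'' is left open. So your proposal is attempting more than the paper achieves.

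\textbf{Subcritical side.} Your plan works with the Legendre coefficients of~$\L$, which form an infinite sequence because~$\L$ is only Lipschitz; you correctly flag their positivity as the main obstacle. The paper avoids this entirely by working with~$\D$ instead of~$\L$: since $\D(x,y)$ is quadratic in~$\langle x,y\rangle$, only the coefficients $\nu_0,\nu_1,\nu_2$ in~\eqref{nu0} are nonzero, and they are trivially nonnegative for~$\tau\le\sqrt{3}$. Combined with an explicit test measure (the octahedron) achieving~$\Sact=\nu_0$, Proposition~\ref{est_nu} then gives that \emph{every} minimizer is generically timelike. This is much cleaner than analysing infinitely many~$a_\ell(\tau)$ and already delivers the completely-timelike conclusion without the Euler--Lagrange post-processing you sketch.

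\textbf{Supercritical side.} Two gaps. First, your claim that a completely timelike support has angular diameter at most~$\vtmax$ is false: antipodal points are \emph{lightlike} ($\D(\pi)=0$), so $\{x,-x\}$ is completely timelike. The paper handles this via the ``condition of antipodal points'' in Proposition~\ref{Prpgt}~(III), which rules out \emph{generically} timelike minimizers---and that, not ``completely timelike'', is what Theorem~\ref{thmmain1} needs to force empty interior. Second, your upgrade from empty interior to finite support is circular as stated: the non-analytic locus of~$\ell$ is $\bigcup_{y\in\supp\rho}\K(y)$, whose structure depends on~$\supp\rho$, the very set you are trying to control; if $\supp\rho$ were a curve, this union could be two-dimensional and your stratified zero-set argument does not get off the ground. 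You recognise this as the ``secondary obstacle'', but it is in fact the entire remaining content of the conjecture, and the paper does not resolve it either.
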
 \noindent
More graphically, one can say that for~$\tau > \tau_c$, our variational principle
{\em{spontaneously generates a discrete structure}} on the sphere.
The two regions~$\tau<\tau_c$ and~$\tau > \tau_c$ can also be understood
as two different phases of the system, so that at~$\tau = \tau_c$ we have a {\em{phase transition}}
from the completely timelike phase to the discrete phase.

The above numerical results will serve as the guideline for our analysis.
More precisely, the completely timelike phase will be analyzed in Section~\ref{secgentime}
using the notion of  ``generically timelike'',
whereas in Section~\ref{secsingular} we will develop under which assumptions and in which sense
the support of the minimizing measure is discrete or ``singular''.
The phase transition is made precise in Theorem~\ref{thmmain1} and~\ref{thmmain2} by stating that
minimizing measures are either generically timelike or singular.

\section{General Structural Results} \label{sec4}
We now return to the general variational principle~\eqref{var} with the Lagrangian
of the form~\eqref{Lform} and~\eqref{Ddef} on a general smooth compact manifold~$\F$.
Let us introduce some notation. For a given measure~$\rho \in \M$,
we define the functions
\begin{align}
\ell(x) &= \int_\F \L(x,y)\: d\rho(y) \;\; \in C^{0,1}(\F) \label{ldef} \\
\dd(x) &= \int_\F \D(x,y)\: d\rho(y) \;\; \in C^\infty(\F)\:. \label{ddef}
\end{align}
Moreover, we denote the Hilbert space~$L^2(\F, d\rho)$ by~$(\Hil_\rho, \lbra .,. \lket_\rho)$
and introduce the operators
\begin{align*}
\L_\rho \::\: \Hil_\rho \rightarrow \Hil_\rho \::\: \psi \mapsto (\L_\rho \psi)(x) &=
\int_\F \L(x,y)\: \psi(y)\: d\rho(y) \\
\D_\rho \::\: \Hil_\rho \rightarrow \Hil_\rho \:: \psi \mapsto (\D_\rho \psi)(x) &=
\int_\F \D(x,y)\: \psi(y)\: d\rho(y)
\end{align*}
(we use the consistent notation that a subscript~$\rho$ always denotes the corresponding operator on~$\Hil_\rho$).

\begin{Lemma} \label{lemmacompact}
The operators~$\L_\rho$ and~$\D_\rho$ are self-adjoint and Hilbert-Schmidt.
The eigenfunctions of~$\L_\rho$ (and $D_\rho$) corresponding to the non-zero eigenvalues
can be extended to Lipschitz-continuous (respectively smooth) functions on~$\F$.
\end{Lemma}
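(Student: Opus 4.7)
The plan is to verify the three assertions in order, leaning on the fact that $\F$ is compact and $\rho$ is a finite (in fact normalized) regular Borel measure, and that the kernels are jointly symmetric with prescribed regularity.

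First I would handle self-adjointness and boundedness. Since $\D(x,y)=\D(y,x)$ and the Lagrangian $\L = \max(0,\D)$ inherits this symmetry, both kernels are real-valued symmetric functions on $\F\times\F$. On the compact manifold $\F\times\F$ they are continuous, hence bounded by some constant $C$. Fubini and Cauchy--Schwarz give
\[
|\lbra \psi, \L_\rho \varphi \lket_\rho| \leq C\,\|\psi\|_{L^1}\|\varphi\|_{L^1} \leq C\,\|\psi\|_{L^2}\|\varphi\|_{L^2},
\]
so $\L_\rho$ (and likewise $\D_\rho$) is a bounded operator on $\H_\rho$, and the kernel symmetry immediately yields $\lbra \psi, \L_\rho\varphi\lket_\rho = \lbra \L_\rho\psi,\varphi\lket_\rho$, proving self-adjointness.

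Next I would establish the Hilbert--Schmidt property. Since $\L$ and $\D$ are continuous on the compact set $\F\times\F$, they are bounded, hence square-integrable with respect to the product measure $\rho\otimes\rho$ (which is finite because $\rho$ is normalized). By the standard characterization, an integral operator on $L^2(\F,d\rho)$ whose kernel lies in $L^2(\F\times\F, d\rho\otimes d\rho)$ is Hilbert--Schmidt, with Hilbert--Schmidt norm equal to the $L^2$-norm of the kernel. This applies verbatim to both $\L_\rho$ and $\D_\rho$.

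The substantive step is the regularity of eigenfunctions, and it is here that the Lipschitz/smoothness distinction enters. Suppose $\L_\rho \psi = \lambda\psi$ with $\lambda\neq 0$. A priori $\psi$ is only defined $\rho$-a.e., but the identity
\[
\psi(x) = \frac{1}{\lambda}\int_\F \L(x,y)\,\psi(y)\,d\rho(y)
\]
allows us to redefine $\psi$ by the right-hand side, which is a pointwise-defined function on all of $\F$. Using Cauchy--Schwarz together with $\rho(\F)=1$, we have $\|\psi\|_{L^1(\rho)}\leq\|\psi\|_{L^2(\rho)}<\infty$. Since $\L\in C^{0,1}(\F\times\F)$, there is a Lipschitz constant $K$ such that $|\L(x_1,y)-\L(x_2,y)|\leq K\,d(x_1,x_2)$ uniformly in $y$, and consequently
\[
|\psi(x_1)-\psi(x_2)| \leq \frac{K}{|\lambda|}\,\|\psi\|_{L^1(\rho)}\,d(x_1,x_2),
\]
which exhibits $\psi$ as a Lipschitz function on $\F$. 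The same argument applied to $\D_\rho$ shows at first that eigenfunctions are continuous; smoothness then follows by iterating the argument with differentiation under the integral sign. Concretely, for any coordinate chart and any multi-index $\alpha$, the bound $|\partial^\alpha_x \D(x,y)|\leq C_\alpha$ on the compact manifold together with the dominated convergence theorem allows us to differentiate under the integral, giving $\partial^\alpha\psi(x) = \lambda^{-1}\int_\F \partial^\alpha_x \D(x,y)\,\psi(y)\,d\rho(y)$ as a continuous function. The main (minor) obstacle is the bootstrap: having shown continuity of $\psi$ from the formula, the smoothness of $\D$ then propagates to $\psi$ via this differentiation-under-the-integral step, which is what requires $\lambda\neq 0$ in an essential way.
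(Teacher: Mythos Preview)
Your proof is correct and follows essentially the same approach as the paper's: symmetry of the kernel gives self-adjointness, square-integrability of a bounded kernel against the finite measure $\rho\otimes\rho$ gives Hilbert--Schmidt, and the eigenvalue equation $\psi = \lambda^{-1}\int \D(\cdot,y)\psi(y)\,d\rho(y)$ transfers the regularity of the kernel to the eigenfunction. The paper is much terser and omits the explicit Lipschitz estimate and the differentiation-under-the-integral justification that you spell out; one small remark is that what you call a ``bootstrap'' is really just a single application of differentiation under the integral sign, since $\psi\in L^1(\rho)$ from the start already suffices to dominate $\partial^\alpha_x\D(x,y)\,\psi(y)$.
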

\Proof We only consider~$\D_\rho$, as the proof for~$\L_\rho$ is analogous.
The self-adjointness follows immediately from the fact that~$\D(x,y)$
is symmetric. Moreover, as the kernel is smooth and~$\F$ is compact, we know that
\[ \iint_{\F \times \F} |\D(x,y)|^2 d\rho(x)\, d\rho(y) < \infty\:. \]
This implies that~$\D_\rho$ is Hilbert-Schmidt (see~\cite[Theorem~2 in Section~16.1]{lax}).

Suppose that~$\D_\rho \psi = \lambda \psi$ with~$\lambda \neq 0$. Then the representation
\[ \psi(x) = \frac{1}{\lambda} \int_\F \D(x,y)\: \psi(y)\: d\rho(y) \]
shows that~$\psi \in C^\infty(\F)$ (recall that~$\D$ is smooth according to~\eqref{Ddef}).
\QED

The following notions characterize properties of~$\F$ and the function~$\D$
that will be needed later on.
\begin{Def} \label{defhomogen}
A measure~$\mu \in \M$ is a {\bf{homogenizer}} of~$\D$ if $\supp \mu = \F$ and both functions
\[ \ell_{[\mu]}(x) := \int_\F \L(x,y) \:d\mu(y) \qquad \text{and} \qquad
\dd_{[\mu]}(x) := \int_\F \D(x,y) \:d\mu(y)  \]
are constant on $\F$.
The function~$\D$ is called {\bf{homogenizable}} if a homogenizer exists.
\end{Def} \noindent
In Examples~(a) and~(b) in Section~\ref{secmath}, we can always choose
the standard normalized volume measure as the homogenizer.
More generally, in Example~(c) we choose for~$\mu$ the normalized Haar measure,
obtained by introducing a $\U(f)$-invariant metric on~$\F$ and taking the corresponding
volume form (see for example~\cite[Section~I.5]{broecker+tomdieck}).

The next proposition gives a sufficient condition for a homogenizer to be a minimizer.
\begin{Prp}\label{prp_hom_min}
If~$\Ll_\mu \geq 0$, the homogenizer~$\mu$ is a minimizer of the variational principle~\eqref{var}.
\end{Prp}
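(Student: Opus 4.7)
The plan is to expand the action around the homogenizer. Setting $\nu := \rho - \mu$, a signed Borel measure with $\nu(\F) = \rho(\F) - \mu(\F) = 0$, and inserting $\rho = \mu + \nu$ into~\eqref{Sdef}, the symmetry of $\L$ gives
\[ \Sact[\rho] \;=\; \Sact[\mu] \;+\; 2\int_\F \ell_\mu(x)\, d\nu(x) \;+\; \iint_{\F \times \F} \L(x,y)\, d\nu(x)\, d\nu(y) . \]
Since $\ell_\mu$ is constant on $\F$ by the homogenizer property and $\nu(\F)=0$, the linear term vanishes. The proposition thus reduces to showing that the quadratic form
\[ Q(\nu) \;:=\; \iint_{\F\times\F} \L(x,y)\, d\nu(x)\, d\nu(y) \]
is non-negative for every signed measure $\nu$ of the form $\rho-\mu$ with $\rho \in \M$.

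For this I would invoke a Mercer-type theorem. By Lemma~\ref{lemmacompact} and the hypothesis $\L_\mu \geq 0$, the operator $\L_\mu$ on $\H_\mu$ is self-adjoint, compact, and positive semi-definite, with Lipschitz-continuous eigenfunctions $\phi_n$ and non-negative eigenvalues $\lambda_n$. Since $\L$ is continuous on the compact space $\F \times \F$ and $\supp \mu = \F$, the generalization of Mercer's theorem to this setting furnishes the uniformly convergent expansion
\[ \L(x,y) \;=\; \sum_{n=1}^\infty \lambda_n\, \phi_n(x)\, \phi_n(y) \qquad \text{on } \F \times \F . \]
Substituting this into $Q(\nu)$ and interchanging summation with integration (justified by uniform convergence together with $|\nu|(\F) < \infty$) yields
\[ Q(\nu) \;=\; \sum_{n=1}^\infty \lambda_n \left( \int_\F \phi_n\, d\nu \right)^{\!2} \;\geq\; 0 , \]
which completes the argument.

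The main obstacle is justifying the Mercer expansion in this abstract setting, as the classical statement is formulated for kernels on an interval; one must appeal to the extension to continuous kernels on a compact Hausdorff space with a fully supported Borel measure. A softer alternative that bypasses Mercer is to weakly approximate $\rho$ by absolutely continuous probability measures $\rho_n := \psi_n\, d\mu$ with $\psi_n \in C^0(\F)$, $\psi_n \geq 0$, $\int \psi_n\, d\mu = 1$ (such approximants exist because $\supp \mu = \F$); then
\[ Q(\rho_n - \mu) \;=\; \lbra \psi_n - 1,\, \L_\mu(\psi_n - 1) \lket_{\H_\mu} \;\geq\; 0 \]
holds directly by the hypothesis, and continuity of $\L$ on the compact set $\F\times\F$ permits passing to the weak-$*$ limit to conclude $Q(\rho - \mu)\geq 0$. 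Either route reduces the proposition to the combination of the operator inequality $\L_\mu \geq 0$ with the cancellation of the linear term forced by the homogenizer condition.
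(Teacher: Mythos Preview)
Your argument is correct. Your second route---approximating $\rho$ weakly by densities $\psi_n\,d\mu$ with $\psi_n\in C^0(\F)$ and then reading off $Q(\psi_n\mu-\mu)=\lbra\psi_n-1,\L_\mu(\psi_n-1)\lket_{\H_\mu}\geq 0$---is precisely the paper's proof: the paper likewise reduces by a $C^0(\F)^*$-approximation to measures $\psi\mu$, then expands $\psi$ in an orthonormal eigenbasis of $\L_\mu$ with $\psi_0=1_\F$ to obtain $\Ss[\psi\mu]=\lambda_0+\sum_{i\geq 1}|c_i|^2\lambda_i\geq\lambda_0=\Ss[\mu]$. Your formulation via the signed measure $\nu=\rho-\mu$ and the vanishing of the linear term makes the structure slightly more transparent (and avoids writing out the eigenbasis), but the content is the same. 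Your first route via a Mercer expansion of $\L$ is a legitimate alternative that trades the density-approximation step for the generalization of Mercer's theorem to continuous non-negative kernels on a compact space with fully supported measure; the paper in fact invokes exactly that extension later for the kernel~$\D$, so the machinery is available, though it is heavier than what is needed here.
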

\begin{proof} We denote the constant function on~$\F$  by~$1_\F \equiv 1$.
If $\mu$ is a homogenizer, 
this function is an eigenfunction of $\L_\mu$, which can be completed to an orthonormal
eigenvector basis~$(\psi_i)_{i \in \N_0}$ of~$\Hil_\mu$ with~$\psi_0 = 1_\F$ and
corresponding eigenvalues $\lambda_i \geq 0$.

Using an approximation argument in the $C^0(\Ff)^*$-topology, it suffices to show that
\[ \Ss[\mu] \leq \Ss[\psi \mu] \]
for any~$\psi \in C^0(\F)$ with~$\psi \geq 0$ and~$\lbra \psi,1_\F\lket_\mu=1$.
We write~$\psi$ in the eigenvector basis~$\psi_i$,
\[ \psi = \sum_{i=0}^\infty c_i \, \psi_i \:. \]
The condition $\lbra\psi,1_\F \lket_\mu=1$ implies that~$c_0=1$. Thus 
$$ \Ss[\psi \mu] = \lbra\psi,\Ll_{\mu} \psi \lket_\mu=\lambda_0 + \sum_{i=1}^\infty
|c_i|^2 \lambda_i\geq\lambda_0 = \Ss[\mu]\:. $$
\vspace*{-2.5em}

\end{proof}

\subsection{The Euler-Lagrange Equations}
Let us assume that~$\rho$ is a minimizer of the variational principle~\eqref{var},
\[ \Ss[\rho] = \inf_{\tilde{\rho} \in \M} \Ss[\tilde{\rho}] =: \Smin \:. \]
We now
derive consequences of minimality. In the first lemma, we consider first variations of~$\rho$
to obtain the Euler-Lagrange equations corresponding to our variational principle.
The second lemma, on the other hand, accounts for a nonlinear effect.

\begin{Lemma} (The Euler-Lagrange equations) \label{lemmaEL}
\beq \label{EL1}
\ell|_{\supp \rho} \,\equiv\, \inf_\F \ell = \Smin \:.
\eeq
\end{Lemma}
\Proof Comparing~\eqref{Sdef} with~\eqref{ldef}, one sees that
\beq \label{Srep}
\Smin = \int_\F \ell \: d\rho \:.
\eeq

Since~$\ell$ is continuous and~$\F$ is compact, there clearly is~$y \in \F$ with
\beq \label{ly}
\ell(y) = \inf_\F \ell \:.
\eeq
We consider for~$t \in [0,1]$ the family of measures
\[ \tilde{\rho}_t = (1-t)\, \rho + t \, \delta_y \; \in \M \:, \]
where~$\delta_y$ denotes the Dirac measure supported at~$y$. Substituting this formula in~\eqref{Sdef}
and differentiating, we obtain for the first variation the formula
\[ \delta \Sact := \lim_{t \searrow 0} \frac{\Sact[\tilde{\rho}_t] - \Sact[\tilde{\rho}_0]}{t}
= -2 \Smin + 2 \ell(y)\:. \]
Since~$\rho$ is a minimizer, $\delta \Sact$ is positive. Combining this result with~\eqref{Srep}
and~\eqref{ly}, we obtain the relations
\[ \inf_\F \ell = \ell(y) \:\geq\: \Smin = \int_\F \ell\: d\rho \:. \]
It follows that~$\ell$ is constant on the support of~$\rho$, giving the result.
\QED
The statement of Lemma~\ref{lemmaEL} implies that for any~$x\in \supp\rho$,
\beq \label{1eigen}
(\L_\rho 1_\F)(x) = \int_\F \L(x,y)\: d\rho(y) = \ell(x) = \Smin\: 1_\F(x)\:,
\eeq
showing that the constant function~$1_\F$ is an eigenvector of the operator~$\L_\rho$ corresponding
to the eigenvalue~$\Smin \geq 0$. 
However, the statement~\eqref{EL1} is stronger because it gives information on~$\ell$ even
away from the support of~$\rho$.

\begin{Lemma} \label{lemmaP}
The operator~$\L_\rho$ is positive (but not necessarily strictly positive).
\end{Lemma}
\Proof Assume that the lemma is wrong. Then, as~$\L_\rho$ is a compact and self-adjoint
operator (see Lemma~\ref{lemmacompact}), there exists an eigenvector~$\psi$ corresponding to
a negative eigenvalue, $\L_\rho \psi = \lambda \psi$ with~$\lambda<0$.
We consider the family of measures
\[ \tilde{\rho}_t = (1_\F + t \psi) \, \rho\:. \]
In view of Lemma~\ref{lemmacompact}, $\psi$ is continuous and therefore bounded.
Thus for sufficiently small~$|t|$, the measure~$\tilde{\rho}_t$ is positive.
In view of~\eqref{1eigen}, the functions~$1_\F$ and~$\psi$ are eigenfunctions
corresponding to different eigenvalues. Hence these eigenfunctions are orthogonal.
Thus
\[ \tilde{\rho}_t(\F) = \int_F 1_\F \,(1_\F + t \psi)\: d\rho
= 1 +t\, \lbra 1_\F, \psi \lket_\rho = 1 \:, \]
showing that~$\tilde{\rho}_t$ is again normalized. Finally, again using the
orthogonality,
\[ \Sact[\tilde{\rho}_t] = \lbra (1_\F + t \psi), \L_\rho (1_\F + t \psi) \lket_\rho
= \Smin + \lambda \: t^2\: \lbra \psi, \psi \lket_\rho \:. \]
Thus~$\tilde{\rho}_t$ is an admissible variation which decreases the action, a contradiction.
\QED

This lemma has useful consequences. We first derive a positivity property
of the Lagrangian when evaluated on a finite number of points in the support of~$\rho$.
\begin{Corollary}\label{matrix}
For a finite family~$x_0, \ldots, x_N \in \supp \rho$ (with~$N \in \N$), the \textbf{Gram matrix}~$L$ defined by
\[ L=\Big(\Ll(x_i,x_j)\Big)_{i,j=0,\ldots,N} \]
is symmetric and positive semi-definite.
\end{Corollary}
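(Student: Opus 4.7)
The plan is to deduce the corollary directly from Lemma~\ref{lemmaP} by testing the non-negativity of~$\L_\rho$ against approximations of Dirac measures concentrated at the points~$x_0,\ldots,x_N$.

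Symmetry is immediate, since $\L(x,y) = \max(0,\D(x,y))$ and~$\D$ is symmetric by~\eqref{Ddef}. For positive semi-definiteness, I first observe that it suffices to treat the case of pairwise distinct points~$x_0,\ldots,x_N$: if some coincide, the resulting Gram matrix differs from the one built out of the distinct values only by duplication of rows and columns, an operation which preserves positive semi-definiteness.

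Assuming the~$x_i$ are distinct, I would choose pairwise disjoint open neighborhoods $U_i$ of~$x_i$ in~$\F$, and then for each small parameter~$\varepsilon>0$ an open neighborhood~$U_i^\varepsilon \subset U_i$ of~$x_i$ with diameter at most~$\varepsilon$. The crucial input is that~$x_i \in \supp \rho$, which gives $\rho(U_i^\varepsilon)>0$ for every~$\varepsilon>0$. Given arbitrary coefficients~$c_0,\ldots,c_N \in \R$, I would define the test function
\[ \psi^\varepsilon \::=\: \sum_{i=0}^N \frac{c_i}{\rho(U_i^\varepsilon)}\: \chi_{U_i^\varepsilon} \;\in\; \H_\rho\:. \]
Applying Lemma~\ref{lemmaP} yields
\[ 0 \;\leq\; \lbra \psi^\varepsilon,\, \L_\rho \psi^\varepsilon \lket_\rho
\;=\; \sum_{i,j=0}^N \frac{c_i\, c_j}{\rho(U_i^\varepsilon)\, \rho(U_j^\varepsilon)}
\iint_{U_i^\varepsilon \times U_j^\varepsilon} \L(x,y)\: d\rho(x)\, d\rho(y)\:, \]
where the disjointness of the~$U_i$ ensures that the cross terms really are of this product form.

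The final step is to let~$\varepsilon \searrow 0$. Since~$\L$ is continuous (indeed Lipschitz) on the compact set~$\F \times \F$, for each pair~$(i,j)$ the normalized double integral converges to~$\L(x_i,x_j)$, so the right-hand side tends to $\sum_{i,j} c_i c_j \L(x_i,x_j)$, which must therefore be non-negative. The only point requiring care is the limit argument, but it is routine: by uniform continuity of~$\L$, the integrand differs from~$\L(x_i,x_j)$ by at most a quantity going to zero with~$\varepsilon$, and the factors~$\rho(U_i^\varepsilon)^{-1}\rho(U_j^\varepsilon)^{-1}$ are cancelled by the measure of the domain of integration. I do not anticipate a genuine obstacle; the corollary is essentially a rewriting of Lemma~\ref{lemmaP} in the language of pointwise evaluations, with the support condition providing just enough mass to approximate Dirac measures.
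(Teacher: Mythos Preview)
Your proposal is correct and follows essentially the same approach as the paper: approximate Dirac measures at the~$x_i$ by normalized characteristic functions of shrinking neighborhoods, apply Lemma~\ref{lemmaP}, and pass to the limit using continuity of~$\L$. Your version is in fact slightly more careful than the paper's own argument, since you explicitly reduce to pairwise distinct points to ensure the neighborhoods are disjoint and you spell out the limiting step.
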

\begin{proof} Given~$\varepsilon>0$ and a vector~$u = (u_0, \ldots, u_N) \in \C^{N+1}$, we set
$$\psi_\epsilon (x)=\sum_{i=0}^N \frac{u_i}{\rho(B_\varepsilon(x_i))} \: \chi_{B_\varepsilon(x_i)}(x)\in \Hil_\rho\:, $$
where~$B_\varepsilon$ is a ball of radius~$\varepsilon$ (in any given coordinate system).
Lemma~\ref{lemmaP} implies that~$\lbra \psi_\varepsilon, \Ll_\rho \psi_\varepsilon\lket \geq 0$.
Taking the limit~$\varepsilon \searrow 0$, it follows that
$$ \lbra u ,L u \lket_{\C^{N+1}} = \lim_{\varepsilon \searrow 0} \lbra \psi_\varepsilon, \Ll_\rho 
\psi_\varepsilon\lket_\rho \geq 0 \:.$$
\vspace*{-2.1em}

\end{proof} \noindent
We next derive a simple criterion which guarantees that a minimizing measure~$\rho$
cannot be supported on the whole manifold~$\F$.
\begin{Corollary} Assume that there is a bounded regular Borel measure~$\nu$ 
(not necessarily positive) such that
\[ \iint_{\F \times \F} \L(x,y)\, d\nu(x)\, d\nu(y) < 0 \:. \]
Then the support of a minimizing measure cannot contain the support of~$\nu$,
\[ \supp \rho \not \supset \supp \nu \:. \]
\end{Corollary}
\Proof If we assume, conversely, that~$\supp \rho \supset \supp \nu$, the measure~$\nu$
can be approximated by measures of the form~$\psi_k \,\rho$ with~$\psi_k \in \Hil_\rho$
(in the sense that~$\psi_k\, \rho \rightarrow \nu$ in the weak-$C^0(\F)^*$-topology).
As a consequence, the expectation value~$\lbra \psi_k, \L_\rho \psi_k \lket$ is negative for large~$k$,
in contradiction to Lemma~\ref{lemmaP}.
\QED
This corollary explains why minimizers have the tendency of being supported on
proper subsets of~$\F$. But the argument is too weak for concluding discreteness.
In order to get more detailed information on the form of the minimizing measures, we
need more advanced notions and methods, which we now introduce.

\subsection{Generically Timelike Minimizers} \label{secgentime}

\begin{Def} \label{defgtl}
A minimizing measure~$\rho \in \M$ is called {\bf{generically timelike}}
if the following conditions hold:
\begin{itemize}
\item[(i)] $\D(x,y) \geq 0$ for all~$x,y \in \text{supp}\, \rho$.
\item[(ii)] The function~$\dd$ defined by~\eqref{ddef} is constant on~$\F$.
\end{itemize}
\end{Def} \noindent
This constant can easily be computed:
\begin{Lemma} \label{lemma48} Suppose that~$\rho$ is a generically timelike minimizer. Then
\[ \dd(x) = \Smin \qquad \text{for all~$x \in \F$}\:. \]
\end{Lemma}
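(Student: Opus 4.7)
The plan is to combine the two defining conditions of ``generically timelike'' with the Euler--Lagrange equation from Lemma~\ref{lemmaEL}. The key observation is that condition~(i) forces the Lagrangian~$\L$ and the kernel~$\D$ to agree on the support of~$\rho$, which then lets us replace~$\dd$ by~$\ell$ at points of~$\supp\rho$.

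More concretely, I would proceed in three short steps. First, I would show that $\dd(x) = \ell(x)$ for every $x \in \supp\rho$: since $\rho$ is concentrated on $\supp\rho$, the integrals defining $\dd(x)$ and $\ell(x)$ only see values $\D(x,y)$ and $\L(x,y) = \max(0,\D(x,y))$ with $y \in \supp\rho$; by condition~(i), $\D(x,y)\geq 0$ for such pairs, so $\L(x,y) = \D(x,y)$ and the two integrals coincide. Second, I would invoke Lemma~\ref{lemmaEL}, which gives $\ell(x) = \Smin$ for all $x \in \supp\rho$; combined with the previous step this yields $\dd(x) = \Smin$ on $\supp\rho$. Third, condition~(ii) asserts that $\dd$ is constant on all of $\F$, and since $\supp\rho$ is nonempty (as $\rho$ is a probability measure), the constant value must be $\Smin$.

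There is essentially no obstacle here; the statement is really just an accounting of what the two conditions in Definition~\ref{defgtl} together imply, once Lemma~\ref{lemmaEL} is available. The only point to be mildly careful about is ensuring that the identification $\L(x,y) = \D(x,y)$ is used only for $y$ in the support of~$\rho$ (so that it is justified by condition~(i)), not for arbitrary $y \in \F$.
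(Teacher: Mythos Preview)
Your proof is correct and follows essentially the same idea as the paper's: use condition~(i) to identify~$\L$ with~$\D$ on~$\supp\rho$, then use condition~(ii) to propagate the constant value to all of~$\F$. The only minor tactical difference is that you invoke Lemma~\ref{lemmaEL} pointwise to get~$\dd(x)=\ell(x)=\Smin$ on~$\supp\rho$, whereas the paper instead integrates directly, writing $\Smin=\iint \L\,d\rho\,d\rho=\iint \D\,d\rho\,d\rho=\int \dd\,d\rho$, and then reads off the constant; either route works equally well.
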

\Proof In view of property~(i), $\L$ and~$\D$ coincide on the support of~$\rho$. Thus
\[ \Smin = \iint_{\F \times \F} \L(x,y)\: d\rho(x)\: d\rho(y) \\
= \iint_{\F \times \F} \D(x,y)\: d\rho(x)\: d\rho(y) \:. \]
Carrying out one integral using~\eqref{ddef}, we obtain
\[ \Smin = \int_\F \dd(x)\: d\rho(x) \:. \]
Using property~(ii) gives the result.
\QED

In the remainder of this subsection, we assume that~$\D$ is homogenizable
(see Definition~\ref{defhomogen}) and denote the homogenizer by~$\mu \in \M$. 
\begin{Lemma} If $\D_{\mu}$ has only a finite number of negative eigenvalues, then the kernel~$\D(x,y)$
has the representation
\beq \label{Dspec}
\D(x,y) = \nu_0 + \sum_{n=1}^N \nu_n\: \phi_n(x)\: \overline{\phi_n(y)}
\eeq
with~$N \in \N \cup \{\infty \}$, $\nu_n \in \R$, $\nu_n\neq 0$, and~$\phi_n \in C^\infty(\F)$, where
in the case~$N=\infty$ the series converges uniformly.
\end{Lemma}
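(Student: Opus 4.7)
The plan is to diagonalize $\D_\mu$ and apply a Mercer-type argument, after first peeling off the finite-dimensional negative part of the spectrum.

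By Lemma~\ref{lemmacompact}, the operator $\D_\mu$ is a self-adjoint Hilbert-Schmidt operator on $\H_\mu$. Hence it admits an orthonormal eigenbasis $(\phi_n)_{n \geq 0}$ of $\H_\mu$ with real eigenvalues $\nu_n$ accumulating only at~$0$, and every eigenfunction $\phi_n$ with $\nu_n \neq 0$ extends to a smooth function on $\F$. Since $\mu$ is a homogenizer, $\dd_\mu$ is constant, so $\D_\mu 1_\F = \dd_\mu \cdot 1_\F$, and (using $\mu(\F)=1$) the normalized constant function may be taken as one of the eigenvectors, say $\phi_0 \equiv 1$ with eigenvalue $\nu_0 := \dd_\mu$.

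Next I would split off the negative part of the spectrum. By hypothesis, there are only finitely many indices~$k_1, \dots, k_K$ with $\nu_{k_j} < 0$; relabel these as $\nu_1^-, \dots, \nu_K^-$ with eigenfunctions $\phi_1^-, \dots, \phi_K^- \in C^\infty(\F)$. Set
\[ \tilde\D(x,y) \;:=\; \D(x,y) \;-\; \sum_{j=1}^K \nu_j^- \,\phi_j^-(x)\, \overline{\phi_j^-(y)} \:. \]
Then $\tilde\D$ is smooth and symmetric, and because the $\phi_j^-$ are continuous, the associated integral operator $\tilde\D_\mu$ on $\H_\mu$ differs from $\D_\mu$ by a finite-rank perturbation that annihilates exactly the negative eigenspaces. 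Hence $\tilde\D_\mu$ is a self-adjoint, positive semi-definite, Hilbert-Schmidt operator, and its nonzero eigenvalues are precisely the positive~$\nu_n$ of~$\D_\mu$ (with $\phi_0 \equiv 1$ contributing the eigenvalue $\nu_0$ if $\nu_0 > 0$, and absorbed separately otherwise).

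I would then invoke Mercer's theorem (in the standard generalization to continuous symmetric positive semi-definite kernels on a compact metric space with a finite Borel measure of full support; recall $\supp\mu=\F$) to conclude that
\[ \tilde\D(x,y) \;=\; \sum_{n\,:\,\nu_n > 0} \nu_n\, \phi_n(x)\, \overline{\phi_n(y)} \:, \]
with the series converging uniformly on $\F\times\F$. Re-adding the finite sum $\sum_j \nu_j^-\,\phi_j^-(x)\overline{\phi_j^-(y)}$, separating out the (constant) contribution corresponding to~$\phi_0$, and re-indexing the remaining nonzero eigenvalues as $\nu_1,\nu_2,\dots$, yields the claimed representation~\eqref{Dspec} with uniform convergence when $N = \infty$.

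The main obstacle is the invocation of Mercer's theorem in this setting: one must verify that the classical statement (for continuous positive kernels on a compact interval) extends to continuous positive kernels on the compact manifold~$\F$ equipped with the Borel measure~$\mu$ of full support; this is where $\supp\mu = \F$ is used in an essential way to ensure that $L^2$-eigenfunctions are determined uniquely by their continuous representatives and that the pointwise expansion is well-defined. All other steps---the rank-$K$ subtraction, the identification of the constant eigenfunction, and the uniform convergence---are then routine.
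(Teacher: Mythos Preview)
Your proposal is correct and follows essentially the same route as the paper: identify $1_\F$ as an eigenfunction of~$\D_\mu$, subtract off the finitely many negative eigenmodes to obtain a continuous positive semi-definite kernel, apply the (generalized) Mercer theorem to that kernel, and then add the finite sum back. Your discussion is in fact slightly more careful than the paper's, since you make explicit the role of $\supp\mu=\F$ in the Mercer step.
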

\Proof By definition of the homogenizer, the function~$1_\F \equiv 1$
is an eigenfunction of the operator~$\D_\mu$.
Denoting the corresponding eigenvalue by~$\nu_0$, we obtain the spectral representation~\eqref{Dspec}.

If $\D_\mu$ is positive, the uniform convergence is an immediate generalization of
Mercer's theorem (see \cite[Theorem~11 in Chapter~30]{lax}, where we replace the interval~$[0,1]$ by
the compact space~$\F$, and the Lebesgue measure by the measure~$\mu$).
In the case when~$\D_\mu$ has a finite number of negative eigenvalues, we apply Mercer's theorem
similarly to the operator with kernel $\D(x,y)-\sum_{i=1}^K \lambda_i\,
\psi_i(x)\overline{\psi_i(y)}$, where~$\lambda_1, \ldots \lambda_K$ are the negative eigenvalues
with corresponding eigenfunctions~$\psi_i$. By construction, this operator is positive,
and in view of Lemma~\ref{lemmacompact} its kernel is continuous.
\QED

\begin{Lemma} \label{lemmaspec}
Suppose that~$\rho$ is a generically timelike minimizer
and that the operator~$\D_{\mu}$ has only a finite number of negative eigenvalues. Then
\[ \Sact[\rho] =  \nu_0 \qquad \text{and} \qquad
\int_\F \phi_n(y) \:d\rho(y) = 0 \quad \text{for all~$n \in \{1, \ldots, N\}$} \:. \]
\end{Lemma}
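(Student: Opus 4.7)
The plan is to insert the spectral representation from the previous lemma into the defining integral for $\dd(x)$, then exploit the constancy of $\dd$ (part (ii) of Definition~\ref{defgtl}) together with the $L^2(\mu)$-orthogonality of the non-constant eigenfunctions to $1_\F$. Specifically, I would first use the uniform convergence in the representation
\[ \D(x,y) = \nu_0 + \sum_{n=1}^N \nu_n\, \phi_n(x)\, \overline{\phi_n(y)} \]
to justify interchanging integration against $\rho$ with the sum, obtaining
\[ \dd(x) = \nu_0 + \sum_{n=1}^N \nu_n\, \overline{c_n}\, \phi_n(x)\:, \qquad c_n := \int_\F \phi_n(y)\, d\rho(y)\:. \]

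Next I would use that $\rho$ is generically timelike, so $\dd$ is constant on~$\F$; hence the function $g(x) := \sum_{n=1}^N \nu_n\, \overline{c_n}\, \phi_n(x)$ is constant. Taking the $L^2(\mu)$-inner product of $g$ with an eigenfunction $\phi_m$ ($m\geq 1$) and recalling that the $\phi_n$ can be chosen as an orthonormal family in $L^2(\mu)$ orthogonal to $1_\F$ (they span eigenspaces of $\D_\mu$ distinct from, or complementary within, the eigenspace of the constant function), the right-hand side yields $\nu_m\, \overline{c_m}$ whereas the left-hand side vanishes because $\phi_m \perp 1_\F$ in $\H_\mu$. Since $\nu_m \neq 0$ by the hypothesis of the representation, this forces $c_m = 0$ for every $m \in \{1,\ldots,N\}$, which is the second assertion of the lemma.

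With all $c_n$ vanishing, we obtain $\dd(x) \equiv \nu_0$ on $\F$, and Lemma~\ref{lemma48} gives $\Sact[\rho] = \dd(x) = \nu_0$, establishing the first assertion.

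The only delicate step is the interchange of sum and integral in the expression for $\dd(x)$; this is where the assumption that $\D_\mu$ has only finitely many negative eigenvalues is used, since it secures the uniform convergence of the Mercer-type expansion from the preceding lemma. The remaining argument is a direct consequence of the spectral orthogonality of the eigenbasis of $\D_\mu$ against the constant function~$1_\F$ in $\H_\mu$, which is automatic from self-adjointness.
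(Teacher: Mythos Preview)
Your proof is correct and follows essentially the same route as the paper's: expand~$\dd(x)$ via the Mercer representation, use that~$\dd$ is constant (Lemma~\ref{lemma48}), and then read off both conclusions from the orthogonality of the~$\phi_n$ to~$1_\F$ in~$\H_\mu$. The paper's own proof is terser---it just writes down the expansion of~$\dd(x)$ and says ``Applying Lemma~\ref{lemma48} gives the claim''---but the underlying argument is the one you have spelled out in detail.
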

\Proof Using the decomposition of the kernel~\eqref{Dspec} and the uniform convergence, we obtain
\beq \label{drep}
\dd(x)=\nu_0+\sum_{n=1}^N \nu_n\: \phi_n(x)\:\int_\F \overline{\phi_n(y)} \,d \rho(y)\:.
\eeq
Applying Lemma~\ref{lemma48} gives the claim.	
\QED

\begin{Prp}\label{est_nu}
Suppose that~$\D_\mu$ is a positive operator on the Hilbert space~$\Hil_\mu$.
Then
\[ \Smin \geq \nu_0 \:. \]
In the case of equality, every minimizer is generically timelike.
\end{Prp}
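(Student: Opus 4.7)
The plan is to use the spectral representation \eqref{Dspec} together with the positivity of all its coefficients. Since $\D_\mu$ is positive semi-definite, the previous lemma expresses $\D$ as a uniformly convergent series
\[
\D(x,y) = \nu_0 + \sum_{n=1}^N \nu_n\, \phi_n(x)\, \overline{\phi_n(y)}
\]
with $\nu_0 \geq 0$ and $\nu_n > 0$ for $n \geq 1$ (these being the nonzero eigenvalues of the PSD operator $\D_\mu$). For any $\rho \in \M$, I would first use the pointwise bound $\L(x,y) = \max(0,\D(x,y)) \geq \D(x,y)$ to estimate
\[
\Sact[\rho] \;\geq\; \iint_{\F \times \F} \D(x,y)\, d\rho(x)\, d\rho(y).
\]
Uniform convergence of the series lets me exchange sum and integral; using $\rho(\F)=1$ and setting $c_n = \int_\F \phi_n(x)\, d\rho(x)$, the right-hand side becomes $\nu_0 + \sum_{n\geq 1} \nu_n |c_n|^2 \geq \nu_0$. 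Taking the infimum over $\rho$ yields $\Smin \geq \nu_0$.

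Now suppose $\Smin = \nu_0$, and let $\rho$ be any minimizer. Then both inequalities in the above chain must be equalities. The second one forces $\nu_n |c_n|^2 = 0$ for every $n \geq 1$, so $c_n = \int_\F \phi_n\, d\rho = 0$ for all $n$ with $\nu_n > 0$. Substituting this into
\[
\dd(x) = \nu_0 + \sum_{n=1}^N \nu_n\, \phi_n(x)\, \overline{c_n}
\]
(again using uniform convergence to exchange integral and sum) gives $\dd(x) = \nu_0$ for every $x \in \F$, which is condition~(ii) of Definition~\ref{defgtl}.

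The equality $\iint \L\, d\rho\, d\rho = \iint \D\, d\rho\, d\rho$, combined with $\L \geq \D$, forces $\L = \D$ almost everywhere with respect to $\rho \otimes \rho$, i.e.\ $(\rho\otimes\rho)(\{\D < 0\}) = 0$. The point I expect to require a little care is the passage from this $\rho\otimes\rho$-a.e. statement to condition~(i), namely $\D(x,y) \geq 0$ for \emph{all} $x,y \in \supp\rho$. This is where continuity of $\D$ enters: the set $\{(x,y) \in \F \times \F : \D(x,y) < 0\}$ is open, and $\supp(\rho \otimes \rho) = \supp\rho \times \supp\rho$, so an open set disjoint in measure from $\rho \otimes \rho$ cannot meet the product support. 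Hence $\D(x,y) \geq 0$ throughout $\supp\rho \times \supp\rho$, verifying~(i) and completing the proof that $\rho$ is generically timelike.
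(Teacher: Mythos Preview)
Your proof is correct and follows essentially the same route as the paper: bound $\Sact[\rho]$ below by $\iint \D\,d\rho\,d\rho$, expand via the spectral representation~\eqref{Dspec} to get $\nu_0 + \sum_n \nu_n |c_n|^2 \geq \nu_0$, and in the equality case read off $c_n = 0$ (hence $\dd \equiv \nu_0$) and $\L = \D$ on $\supp\rho \times \supp\rho$. Your treatment of the last step---passing from equality of the integrals to $\D \geq 0$ on $\supp\rho \times \supp\rho$ via the openness of $\{\D<0\}$ and the identification $\supp(\rho\otimes\rho)=\supp\rho\times\supp\rho$---is in fact more explicit than the paper's, which simply asserts that $\L$ and $\D$ coincide on the support.
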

\Proof If~$\D_\mu$ is positive, all the parameters~$\nu_n$ in~\eqref{Dspec}
are positive. It follows that for every measure~$\tilde{\rho} \in \M$,
\beq \label{Sest}
\Sact[\tilde{\rho}] = \iint_{\F \times \F} \Ll(x,y)\: d\tilde{\rho}(x)\: d\tilde{\rho}(y)
\geq \iint_{\F \times \F} \D(x,y)\: d\tilde{\rho}(x)\: d\tilde{\rho}(y) \\
\geq \nu_0\: \tilde{\rho}(\F)^2 = \nu_0 \:.
\eeq
Let us assume that equality holds. It then follows from~\eqref{Sest} that~$\Ll$ and~$\D$
coincide on the support of~$\tilde{\rho}$ and thus~$\D(x,y) \geq 0$ for all~$x, y \in \supp \tilde{\rho}$.
Moreover, we find from~\eqref{Dspec} or~\eqref{drep} that
\[ \nu_0 = \nu_0 + \sum_{n=1}^N \nu_n
\left| \int_{\F} \overline{\phi_n(y)}\: d\tilde{\rho} \right|^2 \]
and thus
\[ \int_{\F} \overline{\phi_n(y)} \: d\tilde{\rho} = 0 \qquad \text{for all~$n \geq 1$}\:. \]
It follows that~$\dd_{\tilde{\rho}}$ is a constant. We conclude that~$\tilde{\rho}$ is generically timelike.
\QED
This proposition can be used to construct generically timelike minimizers.
\begin{Corollary} \label{cor412}
Suppose that~$\D_\mu$ is a positive operator on~$\Hil_\mu$.
Assume that the function~$f\in \Hil_\mu$ has the following properties:
\begin{itemize}
\item[(a)] $\D(x,y)=\L(x,y)$ for all $x,y\in \supp f$.\\[-0.5em]
\item[(b)] $\displaystyle{ \int_\F \,f(x) \,d \mu(x)=1 \quad \text{and} \quad
\int_\F f(x) \,\phi_n(x) \,d \mu(x)=0 \quad \text{for all $n\in\{1,\ldots,N\}$}}.$
\end{itemize}
Then the measure~$d\rho = f\, d\mu$ is a generically timelike minimizer.
\end{Corollary}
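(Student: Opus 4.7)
The plan is to reduce everything to Proposition~\ref{est_nu}: I will compute $\Sact[\rho]$ explicitly and show it equals $\nu_0$. Since $\D_\mu$ is positive semi-definite, Proposition~\ref{est_nu} asserts $\Smin \geq \nu_0$, while trivially $\Sact[\rho] \geq \Smin$. Thus the identity $\Sact[\rho] = \nu_0$ forces both $\Sact[\rho] = \Smin$ (so $\rho$ is a minimizer) and places us in the equality case of the proposition, whence $\rho$ is generically timelike. (Positivity of $f$ is implicit in the statement, so that $d\rho = f\,d\mu$ is a positive measure; combined with the normalization in~(b), this gives $\rho \in \M$.)

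For the computation, hypothesis~(a) is used to swap $\L$ for $\D$: since $f(x)\,f(y)$ vanishes off $\supp f \times \supp f$, and there $\L = \D$ by~(a), we obtain
\[
\Sact[\rho] \;=\; \iint_{\F \times \F} \L(x,y)\, f(x)\, f(y)\, d\mu(x)\, d\mu(y)
\;=\; \iint_{\F \times \F} \D(x,y)\, f(x)\, f(y)\, d\mu(x)\, d\mu(y).
\]
Inserting the spectral representation~\eqref{Dspec} and interchanging summation with integration (justified by the uniform convergence on the compact manifold~$\F$ together with $f \in L^1(\F, d\mu)$), this becomes
\[
\Sact[\rho] \;=\; \nu_0 \left(\int_\F f\, d\mu\right)^{\!2}
 + \sum_{n=1}^{N} \nu_n \left|\int_\F f(x)\,\phi_n(x)\, d\mu(x)\right|^{2},
\]
where I have used that $f$ is real to combine the integrals against $\phi_n$ and $\overline{\phi_n}$ into a square modulus. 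Condition~(b) sends the first factor to $1$ and kills every term of the sum, giving $\Sact[\rho] = \nu_0$.

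No serious obstacle is expected. The only minor subtlety is that hypothesis~(a) is phrased in terms of $\supp f$ rather than $\supp \rho$; but this is precisely what is needed for $(\L - \D)(x,y)\, f(x)\, f(y)$ to vanish identically on $\F \times \F$, so the swap from $\L$ to $\D$ in the displayed integral is clean regardless of how $\supp f$ and $\supp \rho$ compare. Once $\Sact[\rho] = \nu_0$ is established, both conclusions follow at once from Proposition~\ref{est_nu}.
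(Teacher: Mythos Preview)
Your proof is correct and follows essentially the same route as the paper: use~(a) to replace~$\L$ by~$\D$ in the action, insert the spectral decomposition~\eqref{Dspec}, apply~(b) to reduce to~$\Sact[\rho]=\nu_0$, and conclude via Proposition~\ref{est_nu}. The only difference is that you spell out the justifications (uniform convergence, positivity of~$f$, the role of~$\supp f$) that the paper leaves implicit.
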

\Proof The assumption~(a) implies that
\[ \Sact[\rho] = \iint_{\F \times \F} \D(x,y)\: d\rho(x)\: d\rho(y) \:. \]
Using the decomposition~\eqref{Dspec} and the relations~(b),
we find that~$\Sact[\rho]=\nu_0$. We now apply Proposition~\ref{est_nu}.
\QED

We conclude this section by stating obstructions for the existence of generically timelike minimizers.
\begin{Prp}\label{Prpgt} Assume that one of the following conditions holds:
\begin{itemize}
\item[(I)] The operator~$\D_{\mu}$ has only a finite number of negative eigenvalues,
and the eigenvalue~$\nu_0$ in the decomposition~\eqref{Dspec} is non-positive.
\item[(II)] For every~$x \in \F$ there is a point~$y \in \F$ with~$\J(x) \cap \J(y) = \varnothing$
(``condition of disjoint lightcones'').
\item[(III)] For every~$x \in \F$ there is a point~$-x \notin \overline{\I(x)}$ with~$\J(x)=\overline{\I(x)}\cup \{-x\}$ and $\overline{\I(x)}\cap \overline{\I(-x)}=\varnothing$
(``condition of antipodal points'').
\end{itemize}
Then there are no generically timelike minimizers.
\end{Prp}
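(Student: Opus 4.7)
The plan is, in each of the three cases, to assume a generically timelike minimizer~$\rho$ exists and derive a contradiction by exhibiting some~$y \in \F$ with~$\dd(y) \leq 0$. This violates~$\dd \equiv \Smin$ from Lemma~\ref{lemma48}, since~$\Smin > 0$ strictly: picking~$x_0 \in \supp \rho$, the value~$\L(x_0, x_0) = \D(x_0, x_0) > 0$ from~\eqref{posdiagonal} and continuity give an open neighborhood~$U$ of~$x_0$ on which~$\L > 0$, and~$\rho(U) > 0$ shows~$\Smin = \iint \L \, d\rho \, d\rho > 0$.

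Condition~(I) follows at once from Lemma~\ref{lemmaspec}, which yields~$\Sact[\rho] = \nu_0 \leq 0$, contradicting~$\Smin > 0$. For condition~(II), I would fix~$x_0 \in \supp \rho$ and use the hypothesis to pick~$y$ with~$\J(x_0) \cap \J(y) = \varnothing$. Since~$\rho$ is generically timelike, $\supp \rho \subset \J(x_0)$, so for every~$z \in \supp \rho$ one has~$z \notin \J(y)$ and hence~$\D(y, z) < 0$; compactness of~$\supp \rho$ and continuity upgrade this to a uniform bound~$\D(y, z) \leq -\varepsilon$, so~$\dd(y) \leq -\varepsilon < 0$.

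Condition~(III) requires more work. Fix~$x_0 \in \supp \rho$, put~$y_0 := -x_0$, and note~$\supp \rho \subset \J(x_0) = \overline{\I(x_0)} \cup \{y_0\}$. I would split on the atom at~$y_0$. If~$\rho(\{y_0\}) = 0$, then~$\rho$ concentrates on~$\overline{\I(x_0)}$; for~$z \in \overline{\I(x_0)}$, the disjointness~$\overline{\I(x_0)} \cap \overline{\I(y_0)} = \varnothing$ gives~$z \notin \I(y_0)$, i.e., $\D(y_0, z) \leq 0$, so~$\dd(y_0) \leq 0 < \Smin$, contradiction. If on the other hand~$\rho(\{y_0\}) > 0$, then~$y_0 \in \supp \rho$, and applying the same setup to~$y_0$ yields~$\supp \rho \subset \J(y_0)$ as well. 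Expanding~$\J(x_0) \cap \J(y_0)$ with both disjointness clauses and~$y_0 \neq -y_0$ (from~$-y_0 \notin \overline{\I(y_0)} \ni y_0$) reduces the intersection to~$\{y_0, -y_0\}$; since~$x_0 \in \supp \rho \setminus \{y_0\}$, this forces~$x_0 = -y_0$ and~$\rho = p \delta_{x_0} + (1 - p) \delta_{y_0}$ for some~$p \in (0,1)$.

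The main obstacle is closing out this last subcase, and I expect this is where connectedness of~$\F$ genuinely enters the argument: the disjoint nonempty closed sets~$\overline{\I(x_0)}$ and~$\overline{\I(y_0)}$ cannot cover a connected~$\F$, so there exists~$z \in \F \setminus (\overline{\I(x_0)} \cup \overline{\I(y_0)})$. At such~$z$, the clause~$\J(x) = \overline{\I(x)} \cup \{-x\}$ from~(III) pins the only possible external zero of~$\D(x_0, \cdot)$ to~$y_0$ and of~$\D(y_0, \cdot)$ to~$x_0$, neither of which equals~$z$. Hence~$\D(x_0, z), \D(y_0, z) < 0$ strictly, so~$\dd(z) = p \D(x_0, z) + (1-p) \D(y_0, z) < 0$, contradicting~$\dd(z) = \Smin > 0$.
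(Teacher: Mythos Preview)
Your proof is correct and follows the paper's approach almost verbatim: the positivity of~$\Smin$, cases~(I) and~(II), and the first subcase of~(III) are handled identically. The only difference is in the final subcase of~(III), where the paper instead picks a boundary point~$y \in \partial \I(x)$ (giving~$\D(y,x)=0$ and~$\D(y,-x)\leq 0$) rather than a point outside both closures; both choices implicitly rely on connectedness of~$\F$, which you rightly flag and which the paper leaves unstated.
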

\Proof We first show that~$\Smin > 0$. Namely, choosing~$x$ in the support of a minimizing
measure~$\rho$, we know from~\eqref{posdiagonal} and the continuity of~$\D$ that
there is a neighborhood~$U$ of~$x$ and~$\delta>0$ such that~$\D(x,y) > \delta$ for all~$y \in U$. It
follows that
\[ \Smin \geq \int_{U \times U} \L(x,y)\: d\rho(x) \, d\rho(y) \geq \delta \, \rho(U)^2 > 0 \:. \]

Case~(I) is obvious in view of Lemma~\ref{lemmaspec} and the fact that~$\Smin > 0$.
To prove the remaining cases~(II) and~(III), we assume conversely that there exists a generically timelike minimizer~$\rho \in \M$. Choosing a point~$x \in \supp \rho$, we know from property~(i)
in Definition~\ref{defgtl} that~$\supp \rho \subset \J(x)$. In case~(II), we choose~$y \in \F$
with~$\J(x) \cap \J(y) = \varnothing$ to obtain
\[ \dd(y)=\int_{\J(x)} \D(y,z)\,d \rho(z)\leq 0<\Smin \:, \]
in contradiction to Lemma~\ref{lemma48}.

In case~(III), we know that $\supp \rho \subset \J(x)
= \overline{\I(x)} \cup \{-x\}$. If~$-x\notin \supp \rho$, the estimate
\[ \dd(-x)=\int_{\J(x)} \D(-x,z)\,d \rho(z) = \int_{\overline{\I(x)}} \D(-x,z)\,d \rho(z)
\overset{(*)}{\leq} 0<\Smin \]
again gives a contradiction, where in~(*) we used that~$\overline{\I(x)}\cap \overline{\I(-x)}=\varnothing$.
If conversely~$-x\in \supp \rho$,
then $\supp \rho \subset \J(x) \cap \J(-x) = \{x\} \cup \{-x\}$
(where we again used that~$\overline{\I(x)}\cap \overline{\I(-x)}=\varnothing$).
Hence the integral in~\eqref{ddef} reduces to a sum over two points,
\beq \label{drel}
\dd(y) = \rho( \{x \})\: \D(y,x) + \rho( \{-x \})\: \D(y,-x)\:.
\eeq
In view of our assumption~\eqref{posdiagonal}, we know that~$x \in \I(x)$ and~$-x \in \I(-x)$.
On the other hand, the relation~$\overline{\I(x)}\cap \overline{\I(-x)}=\varnothing$ shows
that~$-x \notin \I(x)$. Hence there is a point~$y \in \partial \I(x)$.
It follows that~$\D(y,x)=0$ (because~$y \in \partial \I(x)$) and also~$\D(y,-x) \leq 0$
(because~$y \in \overline{\I(x)}$ and thus~$y \notin \overline{\I(-x)}$).
Using these inequalities in~\eqref{drel}, we again find that~$\dd(y) \leq 0$, a contradiction.
\QED
It is interesting to ask what the support of a generically timelike minimizer~$\rho$ may look like.
The next proposition (which will not be used later on) quantifies that~$\supp \rho$
must be ``sufficiently spread out.''

\begin{Prp} \label{lemmaiff}
Assume that~$\rho$ is a generically timelike minimizer
and that the operator~$\D_{\mu}$ has only a finite number of negative eigenvalues.
Then every real function~$\psi \in \D_\mu(\Hil_\mu)$ with
\beq \label{intc}
\int_\F \psi(x)\: d\mu(x) = 0
\eeq
changes its sign on the support of~$\rho$ (here~$\mu$ is again the homogenizer
of Definition~\ref{defhomogen}).
\end{Prp}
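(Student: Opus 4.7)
The plan is to establish the stronger conclusion that $\int_\F \psi \, d\rho = 0$, from which the sign-change statement will follow immediately: since $\psi = \D_\mu \eta$ with $\D$ smooth and $\F$ compact, the function $\psi$ is continuous, and a function of one strict sign on $\supp \rho$ would yield a nonzero integral against the positive measure $\rho$.

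To compute $\int_\F \psi \, d\rho$, I would write $\psi = \D_\mu \eta$ for some $\eta \in \H_\mu$ and exploit the symmetry of the kernel. Fubini's theorem (applicable since $\D$ is smooth on the compact manifold $\F\times\F$) gives
\begin{align*}
\int_\F \psi \, d\rho &= \int_\F \eta(y) \, \dd(y) \, d\mu(y), \\
\int_\F \psi \, d\mu &= \int_\F \eta(y) \, \dd_\mu(y) \, d\mu(y),
\end{align*}
where $\dd_\mu$ is the analogue of $\dd$ with $\rho$ replaced by the homogenizer $\mu$. By Lemma~\ref{lemma48}, $\dd \equiv \Smin$ on $\F$, while the defining property of the homogenizer together with the spectral representation~\eqref{Dspec} gives $\dd_\mu \equiv \nu_0$. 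Hence
$$ \int_\F \psi \, d\rho \,=\, \Smin \int_\F \eta \, d\mu \qquad \text{and} \qquad \int_\F \psi \, d\mu \,=\, \nu_0 \int_\F \eta \, d\mu. $$
The hypothesis $\int_\F \psi \, d\mu = 0$, combined with $\nu_0 \neq 0$, will then force $\int_\F \eta \, d\mu = 0$, and therefore $\int_\F \psi \, d\rho = 0$.

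The main obstacle, and the reason the hypothesis of being generically timelike enters crucially, is to verify $\nu_0 \neq 0$. Here I would invoke Lemma~\ref{lemmaspec} to identify $\Smin = \nu_0$, and then combine this with the lower bound $\Smin > 0$ (which follows from the strict positivity~\eqref{posdiagonal} and the continuity of $\D$, as in the opening argument of the proof of Proposition~\ref{Prpgt}). This yields $\nu_0 > 0$, validates the computation above, and closes the argument. The only subtlety beyond this is the interpretation of ``changes sign'': a continuous $\psi$ with $\int_\F \psi\, d\rho = 0$ cannot be strictly positive (or strictly negative) throughout $\supp \rho$, which is the natural reading of the conclusion.
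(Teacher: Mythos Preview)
Your argument is correct and reaches the same key intermediate conclusion as the paper, namely $\int_\F \psi\,d\rho = 0$, but by a genuinely different route. The paper expands $\psi$ in the eigenbasis of~$\D_\mu$ using the Mercer-type decomposition~\eqref{Dspec}: the hypothesis~\eqref{intc} kills the constant mode, so $\psi = \sum_{n\ge 1} \kappa_n \phi_n$, and then Lemma~\ref{lemmaspec} gives $\int_\F \phi_n\,d\rho = 0$ for each $n\ge 1$, whence $\int_\F \psi\,d\rho = 0$. Your approach instead writes $\psi = \D_\mu \eta$ and swaps the order of integration via Fubini, reducing everything to the constancy of $\dd$ (Lemma~\ref{lemma48}) and of $\dd_\mu$ (Definition~\ref{defhomogen}). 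This is arguably cleaner, since it bypasses the explicit spectral expansion and its uniform convergence; the hypothesis ``finitely many negative eigenvalues'' is used only indirectly, through your appeal to Lemma~\ref{lemmaspec} for $\Smin = \nu_0$.

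One minor streamlining: once you have $\Smin = \nu_0$ from Lemma~\ref{lemmaspec}, the detour through $\nu_0 \neq 0$ is unnecessary. You can chain the identities directly:
\[
\int_\F \psi\,d\rho \;=\; \Smin \int_\F \eta\,d\mu \;=\; \nu_0 \int_\F \eta\,d\mu \;=\; \int_\F \psi\,d\mu \;=\; 0\,.
\]
This also shows that the positivity argument $\Smin > 0$ is not actually needed for your proof to go through.
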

\Proof We return to the spectral decomposition~\eqref{Dspec} of the operator~$\D_\mu$.
Since the eigenfunctions~$\phi_n$ are orthogonal in~$\Hil_\mu$, we know that
\[ \int_\F \phi_n \: d\mu = 0 \qquad \text{for all $n \geq 1$}\:. \]
Representing~$\psi$ in an eigenvector basis of~$\D_\mu$ and using~\eqref{intc}, we find
$$\psi=\sum_{n=1}^N\kappa_n \, \phi_n$$ with complex coefficients~$\kappa_n$.
Integrating with respect to~$\rho$, we can apply Lemma~\ref{lemmaspec} to obtain
\[ \int_{\F} \psi(x)\,d \rho(x) = \sum_{n=1}^N \kappa_n \int_\F \phi_n(x)\,d \rho(x)=0 \:. \]
Hence~$\psi$ changes its sign on the support of~$\rho$.
\QED

\subsection{Minimizers with Singular Support} \label{secsingular}
We now state results on the support of a minimizing measure.

\begin{Thm} \label{thmmain1} 
Let~$\F$ be a smooth compact manifold. Assume that~$\D(x,y)$
is symmetric~\eqref{Ddef} and equal to one on the diagonal, $\D(x,x)\equiv 1$.
Furthermore, we assume that for every~$x \in \F$ and~$y \in \K(x)$, there is a smooth curve~$c$ joining
the points~$x$ and~$y$, along which~$\D(.,y)$ has a non-zero derivative at~$x$, i.e.
\beq \label{gradient}
\frac{d}{dt} \D \big( c(t), y \big) \Big|_{t=0} \neq 0 \:,
\eeq
where we parametrized the curve such that~$c(0)=x$.
Then the following statements are true:
\begin{itemize}
\item[(A)] If~$\F$, $\D$ are real analytic, then a minimizing measure~$\rho$
is either generically timelike or~$\overset{\circ}{\supp \rho} = \varnothing$.
\item[(B)] If~$\D$ is smooth and if there is a differential operator~$\Delta$ (of any finite order)
on~$C^\infty(\F)$ which vanishes on the constant functions such that
\beq \label{lapcond}
\Delta_x\D(x,y) <0 \qquad \text{for all~$y\in \I(x)$} \:,
\eeq
then~$\overset{\circ}{\supp \rho} = \varnothing$.
\end{itemize}
\end{Thm}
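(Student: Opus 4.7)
Both parts share the same starting point: by Lemma~\ref{lemmaEL}, $\ell \equiv \Smin$ on $\supp \rho$. If $U := \mathring{\supp \rho}$ is nonempty, pick $x_0 \in U$, so $\ell \equiv \Smin$ on the open set $U$. For $x$ close to $x_0$, the set $\F \setminus \J(x_0)$ contributes nothing to $\ell(x)$ by continuity, so one can decompose
\[ \ell(x) = \int_{\I(x_0)} \D(x,y)\, d\rho(y) + \int_{\K(x_0)} \max(0, \D(x,y))\, d\rho(y) . \]
The first summand is smooth in $x$, while the second is Lipschitz, nonnegative, and vanishes at $x = x_0$; the transversality assumption \eqref{gradient} will be used to control its singular behavior.

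For part (B), I proceed by contradiction. Applying $\Delta$ to $\ell \equiv \Smin$ on $U$ gives $(\Delta\ell)(x_0) = 0$ since $\Delta$ annihilates constants. Computing the same quantity from the decomposition, the smooth piece contributes
\[ \int_{\I(x_0)} \Delta_x \D(x_0, y)\, d\rho(y) < 0, \]
because by \eqref{lapcond} the integrand is strictly negative on $\I(x_0)$ and $\rho(\I(x_0)) > 0$ (indeed $\I(x_0)$ is an open neighborhood of $x_0 \in \supp \rho$, as $\D(x_0, x_0) = 1 > 0$). The transversality \eqref{gradient} ensures that the kink contribution from $\K(x_0)$ has a definite sign and cannot cancel this strictly negative term, which yields the desired contradiction.

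For part (A), real analyticity substitutes for the differential operator. Since $\D$ is real analytic, so is $\dd$ on $\F$. I plan to prove that the local constancy of $\ell$ on $U$ forces $\dd \equiv \Smin$ on $U$; the identity theorem for real analytic functions then propagates this to $\dd \equiv \Smin$ on the connected component, yielding condition~(ii) of generically timelike. Condition~(i) will then follow from the identity
\[ \ell(x) - \dd(x) = \int_{\{y : \D(x,y) < 0\}} \bigl(-\D(x,y)\bigr)\, d\rho(y) \geq 0 , \]
because the equality $\ell = \dd = \Smin$ on $\supp \rho$ combined with continuity of $\D$ forces $\D \geq 0$ throughout $\supp \rho \times \supp \rho$.

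The main technical difficulty in both parts is the failure of smoothness of $\L = \max(0, \D)$ along the lightcone boundary $\K(x_0)$. Condition \eqref{gradient} is precisely designed to handle this: it guarantees $\nabla_x \D(x_0, y) \neq 0$ for $y \in \K(x_0)$, so that $\D$ vanishes transversely and the singular contribution in the expansion of $\ell$ has a definite (nonnegative) sign. This is exactly what is needed to prevent the boundary term from conspiring with the bulk term to produce the constant required by the Euler--Lagrange equation.
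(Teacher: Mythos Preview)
Your proposal has a genuine gap: you never establish that $\K(x_0) \cap \supp\rho = \varnothing$ for $x_0 \in \overset{\circ}{\supp\rho}$, and without this the rest of the argument does not go through. Concretely, your decomposition
\[
\ell(x) = \int_{\I(x_0)} \D(x,y)\, d\rho(y) + \int_{\K(x_0)} \max(0, \D(x,y))\, d\rho(y)
\]
is \emph{not} valid on a neighborhood of $x_0$ unless $\supp\rho$ stays bounded away from $\K(x_0)$. If there are support points $y$ in $\I(x_0)$ (or in the spacelike region) arbitrarily close to $\K(x_0)$, then as $x$ moves away from $x_0$ the sign of $\D(x,y)$ flips and the first integral no longer equals the corresponding piece of $\ell$. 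So you cannot apply $\Delta$ termwise as you do in~(B), nor identify $\ell$ with any analytic function as you need in~(A). Even granting the decomposition, your sign claim in~(B) fails: the boundary piece $g(x)=\int_{\K(x_0)}\max(0,\D(x,y))\,d\rho(y)$ is nonnegative and vanishes at $x_0$, so for a Laplace-type operator one has $\Delta g(x_0)\ge 0$ at this minimum---exactly the sign that \emph{can} cancel the strictly negative bulk term. For a general differential operator annihilating constants there is no sign information at all. In~(A), you assert that constancy of $\ell$ forces constancy of $\dd$, but $\dd-\ell = -\int_{\{\D<0\}} \D\,d\rho$ is a nonnegative function with no reason to be constant.

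The paper closes this gap by a completely different mechanism: it proves $\K(x_0)\cap\supp\rho=\varnothing$ via the positive semi\-definiteness of the Gram matrix $(\L(x_i,x_j))$ on $\supp\rho$ (Corollary~\ref{matrix}). One places a short ``chain'' $x_0,x_1,x_2$ along the curve $c$ near $x_0$ together with the putative boundary point $y\in\K(x_0)\cap\supp\rho$; the expansion $\D(x_i,x_j)=1+a_1|i-j|^2\varepsilon^2+\Oo(\varepsilon^3)$ (which uses $\D\equiv 1$ on the diagonal) together with the transversality~\eqref{gradient} giving $\L(x_0,y)=b\varepsilon+\Oo(\varepsilon^2)$ forces the $4\times 4$ Gram matrix to have a negative eigenvalue for small $\varepsilon$, a contradiction. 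Only \emph{after} this step does the paper apply the Laplacian (Lemma~\ref{lemma_concave}) or analytic continuation (Lemma~\ref{real_anal}) to the now-smooth representation~\eqref{ellrel}. This is the missing idea in your outline, and it is where hypothesis~\eqref{gradient} is actually used.
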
 \noindent
A typical example for~$\Delta$ is the Laplacian corresponding to
a Riemannian metric on~$\F$.
Note that the condition~\eqref{gradient} implies that for every~$y \in \F$,
the set~$\{x \:|\: y \in \K(x)\}$ is a smooth
hypersurface, which the curve~$c$ intersects transversely
(in the applications of Section~\ref{sec5} and~\ref{sec6}, this set will
coincide with~$\K(y)$, but this does not need to be true in general).

The condition~\eqref{gradient} can be removed if, instead, we make the following symmetry
assumption.
\begin{Def}
The function $\D$ is called \textbf{locally translation symmetric} at~$x$ with respect to a curve~$c(t)$
with~$c(0)=x$ if there is~$\varepsilon>0$ and a
function~$f \in C^\infty((-2 \varepsilon, 2 \varepsilon))$ such that the curve~$c$ is defined on the
interval~$(-\varepsilon, \varepsilon)$ and
\[ \D(c(t), c(t')) = f(t-t') \qquad \text{for all~$t,t' \in (-\varepsilon, \varepsilon)$} \:. \]
\end{Def} 
\begin{Thm} \label{thmmain2} 
Let~$\F$ be a smooth compact manifold. Assume that~$\D(x,y)$
is symmetric~\eqref{Ddef} and strictly positive on the diagonal~\eqref{posdiagonal}.
Furthermore, we assume that for every~$x \in \F$ and~$y \in \K(x)$, there is a smooth curve~$c$ joining
the points~$x$ and~$y$ such that~$\D$ is locally translation symmetric at~$x$ with respect to~$c$,
and such that the function~$\D(c(t), y)$ changes sign at~$t=0$
(where we again parametrize the curve such that~$c(0)=x$).
Then statement~(A) of Theorem~\ref{thmmain1} holds, provided that the curve~$c$ is
analytic in a neighborhood of~$t=0$. Assume, furthermore, that
there is~$p \in \N$ with
\beq \label{gradient_eq}
\frac{d^p}{dt^p} \D \big( c(t), y \big) \Big|_{t=0} \neq 0 \:.
\eeq
Then statement~(B) of Theorem~\ref{thmmain1} again holds.
\end{Thm}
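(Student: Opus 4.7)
The plan is to follow the proof of Theorem~\ref{thmmain1}, substituting the local translation symmetry $\D(c(t),c(s)) = f(t-s)$ for the transverse-gradient condition~\eqref{gradient}, and using the non-vanishing $p$-th derivative~\eqref{gradient_eq} in part~(B) as the analogue of $\nabla_x\D\neq 0$ on $\K(x)$. In both parts I assume $V := \overset{\circ}{\supp\rho}$ is non-empty and invoke Lemma~\ref{lemmaEL} to get $\ell \equiv \Smin$ on $V$. With $F(x) := \int_\F \max(0,-\D(x,y))\,d\rho(y) \geq 0$ and $\ell = \dd + F$, this becomes $F = \Smin - \dd$ on $V$.

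For part~(A), fix $x_0 \in V$ and any $y_0 \in \K(x_0)$, and let $c$ be the analytic curve from the hypothesis with $c(0) = x_0$ and $c(t_0) = y_0$. Split $\ell(c(t)) = A(t) + C(t)$ with
\[
A(t) := \int_{\I(x_0)} \D(c(t),y)\,d\rho(y), \qquad
C(t) := \int_{\K(x_0)} \max\!\bigl(0,\D(c(t),y)\bigr)\,d\rho(y).
\]
Since $\D$ is smooth, $c$ is analytic near $0$, and $\I(x_0)$ is fixed, $A$ is real-analytic in $t$; hence on the (open, non-trivial) interval where $c(t)\in V$, also $C = \Smin - A$ is real-analytic. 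Together with $C \geq 0$, $C(0) = 0$, and the sign change $\D(c(\cdot),y_0) = f(\cdot - t_0)$ at $t=0$, this forces $C \equiv 0$ near $t=0$, hence $\rho(\K(x_0)) = 0$ and $A \equiv \Smin$. In particular $F(x_0) = 0$, so $F \equiv 0$ on $V$ and $\dd \equiv \Smin$ on $V$. Propagating $\dd \equiv \Smin$ to all of $\F$ via the analyticity of $\dd$ (as in the analytic framework of Theorem~\ref{thmmain1}(A)) and using $\ell \equiv \Smin$ on $\supp\rho$ then yields conditions~(i) and~(ii) of Definition~\ref{defgtl}.

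For part~(B), since $\ell \equiv \Smin$ on $V$ and $\Delta$ annihilates constants, $\Delta\ell \equiv 0$ on $V$. At $x_0 \in V$ the distributional computation gives
\[
\Delta\ell(x_0) \;=\; \int_{\I(x_0)} \Delta_x\D(x_0,y)\,d\rho(y) + B(x_0),
\]
where the bulk integral is strictly negative by $\Delta_x\D < 0$ on $\I(x)$ together with $\rho(\I(x_0)) > 0$ (which follows from $x_0 \in \supp\rho \cap \I(x_0)$ by~\eqref{posdiagonal} and continuity of $\D$), while $B(x_0)$ collects the contribution from $\K(x_0)$. To force $B(x_0) = 0$ I differentiate $\ell(c(t)) \equiv \Smin$ to order $p$ along the curve $c$ joining $x_0$ to any putative $y_0 \in \K(x_0)\cap\supp\rho$: by translation symmetry and~\eqref{gradient_eq}, any such $\rho$-mass would contribute a non-cancelling term of order exactly $p$ to $\frac{d^p}{dt^p}\ell(c(t))|_{t=0}$, contradicting its vanishing. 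Hence $B(x_0) = 0$ and $\Delta\ell(x_0) < 0$, contradicting $\Delta\ell(x_0) = 0$. The main technical obstacle in both parts is the rigorous treatment of the $\K(x_0)$-contribution when $\rho$ has singular (and in particular non-atomic) components on the lightcone boundary; the translation symmetry together with~\eqref{gradient_eq} is what allows the analogue of the transversality argument from Theorem~\ref{thmmain1} to go through.
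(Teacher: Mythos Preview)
Your approach differs fundamentally from the paper's, and it has gaps that I do not see how to close. In part~(A), the splitting $\ell(c(t)) = A(t) + C(t)$ is not valid as written: for $y \in \I(x_0)$ close to $\partial\I(x_0)$ the quantity $\D(c(t),y)$ may become negative for small $|t|$, so $\max(0,\D(c(t),y)) \neq \D(c(t),y)$; conversely, points $y$ just outside $\J(x_0)$ near $\K(x_0)$ may acquire $\D(c(t),y)>0$ and contribute to $\ell$ without appearing in either $A$ or $C$. Even granting the split, the inference ``$C$ analytic, $C\geq 0$, $C(0)=0$ forces $C\equiv 0$'' is false (take $C(t)=t^2$), and the sign change of $\D(c(\cdot),y_0)$ is a property of the single integrand at the \emph{one} point $y_0$ to which the curve $c$ is tailored; for other $y\in\K(x_0)$ nothing is assumed about $\D(c(t),y)$, so neither $C\equiv 0$ nor $\rho(\K(x_0))=0$ follows. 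In part~(B), the pointwise identity for $\Delta\ell(x_0)$ with a ``boundary term'' $B(x_0)$ is not justified: $\ell$ is only Lipschitz, and the paper's Lemma~\ref{lemma_concave} applies $\Delta$ precisely \emph{after} one knows $\K(x)\cap\supp\rho=\varnothing$, so that $\ell$ is smooth near $x$. Your $p$-th derivative argument again controls only the contribution of the single $y_0$, not other $\rho$-mass on $\K(x_0)$.

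The paper proceeds entirely differently. It reduces both (A) and (B) to the single statement $\K(x)\cap\supp\rho=\varnothing$ for all $x\in\overset{\circ}{\supp\rho}$ (after which Lemmas~\ref{real_anal} and~\ref{lemma_concave} finish exactly as for Theorem~\ref{thmmain1}), and proves this by contradiction via the Gram-matrix positivity of Corollary~\ref{matrix}. Given $y\in\K(x)\cap\supp\rho$, one places a chain $x_0,\ldots,x_k$ (with $k=p+1$) equidistantly along $c$ near $x$; the local translation symmetry yields the even-power expansion $\D(x_i,x_j)=f(\varepsilon|i-j|)$, and a Vandermonde-type linear system produces a vector $v$ with $\langle v, Lv\rangle = \Oo(\varepsilon^{2k})$. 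Appending $y$ and using $\L(x_0,y)=b\,\varepsilon^p+\Oo(\varepsilon^{p+1})$ with $b\neq 0$ from~\eqref{gradient_eq} makes the enlarged Gram matrix indefinite for small~$\varepsilon$, contradicting Corollary~\ref{matrix}. This second-variation (operator positivity) mechanism is what replaces the transversality~\eqref{gradient}; your first-variation/Laplacian approach does not supply an analogous substitute.
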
 \noindent

In the smooth setting, the above theorems involve quite strong additional
assumptions (see~\eqref{gradient}, \eqref{lapcond} and~\eqref{gradient_eq}).
The following counter example shows that some conditions of this type
are necessary for the statements of these theorems to be true\footnote{We would like to thank Robert
Seiringer for pointing out a similar example to us.}.
\begin{Example} {\em{
Let~$f, g \in C^\infty_0([-\pi, \pi])$ be non-negative even functions with
\[ \supp f \subset \left[-\tfrac{\pi}{8}, \tfrac{\pi}{8} \right] \:,\qquad
\supp g \subset \left(- \pi, -\tfrac{\pi}{2} \right] \cup \left[ \tfrac{\pi}{2}, \pi \right)\:. \]
We introduce the function~$\D \in C^\infty(S^2 \times S^2)$ by
\beq \label{Ddefex}
\D(x,y)= -g \big( \dist(x,y) \big) + \int_{S^2} f \big( \dist(x,z) \big)\: f \big( \dist(z,y) \big) \:d \mu(z) \:,
\eeq
where~$d\mu$ is the standard volume measure, and~$\dist$ denotes the geodesic distance
(taking values in~$[0, \pi]$).
Note that the two summands in~\eqref{Ddefex} have disjoint supports, and thus the corresponding
Lagrangian~\eqref{Lform} is simply
\[ \L(x,y)= \int_{S^2} f \big( \dist(x,z) \big)\: f \big( \dist(z,y) \big) \:d \mu(z) \:. \]
We again consider~$\D(x,y)$ and~$\L(x,y)$ as the integral kernels of corresponding operators~$\D_\mu$
and~$\L_\mu$ on the Hilbert space~$\Hil_\mu = L^2(S^2, d\mu)$.

First, it is obvious that~$\D(x,y)$ is symmetric and constant on the diagonal. Next, it is clear
by symmetry that the measure~$\mu$ is a homogenizer (see Definition~\ref{defhomogen}).
Moreover, writing~$\L_\mu$ as~$\L_\mu = f_\mu^2$, where~$f_\mu$ is the operator with
integral kernel~$f$, one sees that the operator~$\L_\mu$ is positive.
Thus by Proposition~\ref{prp_hom_min}, the measure~$\mu$ is minimizing.
If the function~$g$ is non-trivial, there are points~$x, y$ which are spacelike separated, so that
this minimizer is not generically timelike. Also, its support obviously has an empty interior.
We have thus found a minimizing measure which violates statement~(A) of Theorem~\ref{thmmain1}.
}} \QEDrem
\end{Example}

The remainder of this section is devoted to the proof of the above theorems.
We begin with a simple but very useful consideration. Suppose that for given~$x \in \F$,
the boundary of the light cone~$\K(x)$ does not intersect the support of~$\rho$.
As the support of~$\rho$ is compact, there is neighborhood~$U$ of~$x$ such that
\[ \K(z) \cap \supp \rho = \varnothing \qquad \text{for all~$z \in U$}\:. \]
Thus introducing the measure~$\hat{\rho} = \chi_{\I(x)}\, \rho$, the function~$\ell$
can for all~$z \in U$ be represented by
\beq \label{ellrel}
\ell(z) = \int_{\F} \Ll(z,\xi)\, d\hat{\rho}(\xi) 
= \int_{\F} \D(z,\xi)\, d\hat{\rho}(\xi) \:.
\eeq
In the following two lemmas, we make use of this identity in the smooth and analytic settings.
\begin{Lemma}
If~\eqref{lapcond} holds, then for every~$x \in \supp \rho$ the set~$\K(x) \cap \supp \rho$ is nonempty.
\end{Lemma}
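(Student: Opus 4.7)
The strategy is proof by contradiction: I assume there is $x \in \supp\rho$ with $\K(x) \cap \supp\rho = \varnothing$ and derive an impossibility by applying $\Delta$ to $\ell$ at $x$. The key preliminary is to identify a neighborhood $U$ of $x$ on which $\ell$ is smooth with a clean integral representation. Since $\supp\rho$ is compact and disjoint from $\K(x)$, continuity of $\D$ supplies a neighborhood $U$ of $x$ such that $\supp\rho \cap \K(z) = \varnothing$ for every $z \in U$ and, moreover, the partition $\supp\rho = (\supp\rho \cap \I(z)) \sqcup (\supp\rho \cap (\F \setminus \J(z)))$ does not depend on $z \in U$. Setting $A := \supp\rho \cap \I(x)$, the representation \eqref{ellrel} becomes
\[
\ell(z) = \int_A \D(z,y)\, d\rho(y) \qquad \text{for all } z \in U,
\]
exhibiting $\ell|_U$ as a $C^\infty$ function.

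Next I would differentiate under the integral. By hypothesis \eqref{lapcond}, the integrand $\Delta_x \D(x,y)$ is strictly negative on $A \subset \I(x)$; moreover $\rho(A) > 0$, since $\D(x,x) > 0$ places $x$ in the open set $\I(x)$ and $x \in \supp\rho$. Hence
\[
\Delta\ell(x) = \int_A \Delta_x \D(x,y)\, d\rho(y) < 0.
\]
To close the argument, I would invoke Lemma~\ref{lemmaEL}: $\ell \geq \Smin$ on $\F$ with equality at $x \in \supp\rho$, so $x$ is an interior minimum on $U$ of the smooth non-negative function $\ell - \Smin$. Since $\Delta$ vanishes on constants, $\Delta\ell(x) = \Delta(\ell - \Smin)(x)$, and for the operators the paper has in mind---the Laplacian of a Riemannian metric, and more generally second-order elliptic operators---the minimum principle yields $\Delta\ell(x) \geq 0$, contradicting the previous display.

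The main obstacle is this last step: the hypothesis as written only says $\Delta$ vanishes on constants, yet what is really used is a minimum principle ensuring $\Delta u(x) \geq 0$ at a smooth interior minimum of a non-negative $u$. For a generic higher-order or non-elliptic $\Delta$ this can fail, so I would interpret the hypothesis in the elliptic sense suggested by the remark following Theorem~\ref{thmmain1}. A minor technical check is the local constancy of the partition of $\supp\rho$ giving smoothness of $\ell$ on $U$; this goes through because, after discarding the spacelike part, only the smooth kernel $\D(\cdot,y)$ enters under the integral.
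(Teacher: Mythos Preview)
Your proof is correct and follows the same approach as the paper: assume $\K(x)\cap\supp\rho=\varnothing$, use the representation~\eqref{ellrel} to write $\ell$ smoothly near $x$ via $\hat\rho=\chi_{\I(x)}\rho$, compute $\Delta\ell(x)<0$ from~\eqref{lapcond}, and contradict the minimality in Lemma~\ref{lemmaEL}. Your observation that the final contradiction really rests on a minimum principle for $\Delta$---not merely on $\Delta$ annihilating constants---is a valid point that the paper also leaves implicit.
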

\Proof Applying the differential operator~$\Delta_x$ to~\eqref{ellrel} gives
\[ \Delta_x \ell(x) = \int_{\F} \Delta_x \D(x,z)\, d\hat{\rho}(z) <0\:, \]
where in the last step we used~\eqref{lapcond} and the fact that~$x \in \supp \rho$.
This is a contradiction to Lemma~\ref{lemmaEL}.
\QED

\begin{Lemma}
Suppose that~$\F$ and~$\D$ are real analytic. Assume that there exists a
point~$x \in \overset{\circ}{\supp \rho}$
such that~$\K(x) \cap \supp \rho = \varnothing$. Then~$\rho$ is generically timelike
and~$\supp\rho \subset \I(x)$.
\end{Lemma}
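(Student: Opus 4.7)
The plan is to exploit the hypothesis $\K(x) \cap \supp\rho = \varnothing$ to split $\rho$ into a ``timelike'' part and a ``spacelike'' part (relative to $x$), extend the piece of $\ell$ coming from the timelike part to a globally real analytic function on $\F$, and then use the identity theorem to propagate the Euler-Lagrange equation from a neighborhood of $x$ to all of $\F$. First, since $\supp\rho$ is compact and $\D$ is continuous, the hypothesis $\K(x)\cap\supp\rho=\varnothing$ together with $x\in\overset{\circ}{\supp\rho}$ gives an open neighborhood $U$ of $x$ with $U\subset\supp\rho$ and $\K(z)\cap\supp\rho=\varnothing$ for all $z\in U$. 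Setting $\hat\rho := \chi_{\I(x)}\rho$ and $\check\rho := \rho - \hat\rho$, one checks as in~\eqref{ellrel} that $\ell(z) = \tilde\ell(z)$ for all $z\in U$, where
\[ \tilde\ell(z) \;:=\; \int_\F \D(z,\xi)\, d\hat\rho(\xi) \:. \]

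Now $\tilde\ell$ is defined on all of $\F$, and because $\D$ is real analytic on $\F\times\F$ and the integration domain does not depend on $z$, $\tilde\ell$ is real analytic on $\F$ (in local coordinates, one expands $\D(\,\cdot\,,\xi)$ in a power series with coefficients uniformly bounded in $\xi$ on compact sets, and integrates term by term). By Lemma~\ref{lemmaEL}, $\ell\equiv\Smin$ on $U\subset\supp\rho$, hence $\tilde\ell\equiv\Smin$ on $U$. The identity theorem for real analytic functions (applied on the connected component containing $x$, which suffices in our setting) then yields $\tilde\ell\equiv\Smin$ on $\F$.

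Integrating this identity against $\rho$ and applying Fubini,
\[ \Smin \;=\; \int_\F \tilde\ell(z)\, d\rho(z) \;=\; \int_\F \dd(\xi)\, d\hat\rho(\xi) \:. \]
On the other hand, $\dd\leq\ell$ pointwise (since $\D\leq\L$), and $\ell=\Smin$ on $\supp\rho\supset\supp\hat\rho$, so $\int \dd\, d\hat\rho \leq \Smin\,\hat\rho(\F)$. Combined with $\Smin>0$ (which follows from~\eqref{posdiagonal} exactly as in the proof of Proposition~\ref{Prpgt}), this forces $\hat\rho(\F)\geq 1$, and since $\hat\rho\leq\rho$ is a probability measure we conclude $\hat\rho=\rho$, i.e.\ $\check\rho=0$. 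Using $\K(x)\cap\supp\rho=\varnothing$ once more, this gives $\supp\rho\subset\I(x)$, proving the second claim.

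Finally, with $\hat\rho=\rho$ the function $\tilde\ell$ coincides with $\dd$, so $\dd\equiv\Smin$ on $\F$, which is condition~(ii) of Definition~\ref{defgtl}. For condition~(i), note that for every $\xi\in\supp\rho$ we have $\ell(\xi)=\Smin=\dd(\xi)$, i.e.
\[ \int_\F \bigl(\L(\xi,\eta)-\D(\xi,\eta)\bigr)\, d\rho(\eta) \;=\; \int_\F \max\bigl(0,-\D(\xi,\eta)\bigr)\, d\rho(\eta) \;=\; 0\:. \]
Since the integrand is non-negative and $\D(\xi,\cdot)$ is continuous, $\D(\xi,\eta)\geq 0$ for all $\eta\in\supp\rho$, completing the proof that $\rho$ is generically timelike. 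The main technical point is the passage from the Lipschitz function $\ell$ to the real analytic auxiliary function $\tilde\ell$; the rest is a straightforward Fubini computation exploiting that $\dd\leq\ell=\Smin$ on $\supp\rho$ together with $\Smin>0$.
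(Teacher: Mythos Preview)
Your proof is correct and follows essentially the same route as the paper's: your auxiliary function~$\tilde\ell$ is exactly the paper's~$\hat{\dd}$, and the chain of inequalities you obtain via Fubini and~$\D\leq\L$ is the same computation as~\eqref{iestim}, just with the intermediate double integral collapsed to~$\int\dd\,d\hat\rho$. Your explicit invocation of~$\Smin>0$ and your derivation of condition~(i) from~$\ell(\xi)=\dd(\xi)$ on~$\supp\rho$ are slightly more detailed than the paper's treatment, but the argument is the same in substance.
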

\begin{proof} We introduce on~$\F$ the function
\[  \hat{\dd}(y)=\int_\Ff \D(y,z) \:d\hat{\rho}(z)\:. \]
Then~$\hat{\dd}$ is real analytic and, according to~\eqref{ellrel}, it coincides on~$U$ with
the function~$\ell$. Since~$x \in \overset{\circ}{\supp \rho}$, the Euler-Lagrange equations in
Lemma~\eqref{lemmaEL} yield that~$\ell \equiv \Smin$ in a neighborhood of~$x$.
Hence~$\hat{\dd}\equiv \Smin$ in a neighborhood of~$x$, and the real analyticity implies that
\[ \hat{\dd} \equiv \Smin \qquad \text{on~$\F$}\:. \]
It follows that
\beq \label{iestim}
\begin{split}
\Smin &=\int_\Ff \hat{\dd}(x) \:d\rho(x)=\iint_{\F \times \F}  \D(x,y) \:d \hat{\rho}(x) \:d\rho(y)\\
& \leq \iint_{\F \times \F}  \Ll(x,y) \:d \hat{\rho}(x) \:d\rho(y)
=\int_\F \ell(x)\:d\hat{\rho}(x) = \Smin\; \hat{\rho}(\Ff) \:,
\end{split}
\eeq
and thus~$\hat{\rho}(\Ff)=1$. Since~$\hat{\rho} \leq \rho$ and~$\rho$ is normalized,
we conclude that~$\rho=\hat{\rho}$. Thus~$\dd \equiv \hat{\dd} \equiv \Smin$.
Moreover, the inequality in~\eqref{iestim} becomes an equality,
showing that~$\Ll \equiv \D$ on the support of~$\rho$.
Thus~$\rho$ is indeed generically timelike.
\end{proof}


To complete the proof of Theorems~\ref{thmmain1} and~\ref{thmmain2}, it remains to show the
following statement:
\beq \label{remains}
\K(x) \cap \supp \rho = \varnothing \qquad \text{for all~$x \in \overset{\circ}{\supp \rho}$}\:.
\eeq
We proceed indirectly and assume that there is a point~$y \in \K(x) \cap \supp \rho$.
Our strategy is to choose points~$x_0, \ldots, x_k$ in a neighborhood of~$x$
such that~$\L$ restricted to the set~$\{x_0, \ldots, x_k, y\}$ is not positive semi-definite,
in contradiction to Corollary~\ref{matrix}. The points~$x_0, \ldots, x_k$ will all lie on a fixed
smooth curve~$c$ which joins~$x$ and~$y$ chosen as in the statement of the theorems.
We parametrize~$c$ such that~$c(0)=x$ and~$c(1)=y$, and by extending the curve we 
can arrange that the curve is defined on the interval~$(-k \varepsilon, 1]$ for suitable~$\varepsilon>0$.
By the assumptions in Theorems~\ref{thmmain1}
and~\ref{thmmain2}, we know that~$\D(c(t), y)$ changes sign at~$t=0$. 
Depending on the sign of~$\D(c(\varepsilon), y)$,
we introduce the ``chain'' of points
\beq \label{chain}
\hspace*{-.3cm} \begin{cases}
x_0 = c(\varepsilon), \;\; x_1 = c(0), \;\ldots\;, x_k=c \big(-(k-1) \varepsilon \big)
& \text{if~$\D(c(\varepsilon), y)>0$} \\[0.3em]
x_0 = c(-\varepsilon), \;\; x_1 = c(0), \;\ldots\; , x_k=c \big( (k-1) \varepsilon \big)
& \text{if~$\D(c(\varepsilon), y)<0$}
\end{cases}
\eeq
(thus~$y$ has timelike separation from~$x_0$, lightlike separation from~$x_1=x$,
and spacelike separation from~$x_2, \ldots, x_k$).
Then by construction, $x_0 \in \I(y)$, whereas all the other points of the chain
are spacelike or lightlike separated from~$y$.

For the proof of Theorem~\ref{thmmain1}, it suffices to
consider a chain of three points.

\begin{Lemma} \label{lemma420}
Assume that~$\D(x,y)$
is symmetric~\eqref{Ddef} and equal to one on the diagonal, $\D(x,x)\equiv 1$.
Then for~$x_0,x_1,x_2$ as given by~\eqref{chain} in the case~$k=2$, there is a real constant~$a_1$
such that, for all sufficiently small~$\varepsilon$,
\begin{equation}\label{D_exp_3}
\D(x_i,x_j)=1 + a_1\: |i-j|^2  \:\varepsilon^2+\Oo(\varepsilon^3) \qquad 
\text{for all~$i,j \in \{0, 1, 2\}$}\:.
\end{equation}
\end{Lemma}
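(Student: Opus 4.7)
The plan is to reduce everything to a Taylor expansion of the two-variable function
\[ F(s,t) := \D(c(s), c(t)) \]
around $(0,0)$. Since $c$ is smooth and $\D$ is smooth, so is $F$, and the three points in the chain correspond to the values $s \in \{-\varepsilon, 0, \varepsilon\}$ (possibly after flipping the sign convention), for which $|s_i - s_j| = |i-j|\,\varepsilon$.

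The key is that the hypotheses $\D(x,x) \equiv 1$ and $\D(x,y) = \D(y,x)$ translate into two strong constraints on $F$: namely $F(t,t) \equiv 1$ and $F(s,t) \equiv F(t,s)$. Differentiating $F(t,t) \equiv 1$ once at $t = 0$ gives $F_s(0,0) + F_t(0,0) = 0$, while the symmetry gives $F_s(0,0) = F_t(0,0)$; hence both vanish. Differentiating $F(t,t) \equiv 1$ a second time yields $F_{ss}(0,0) + 2F_{st}(0,0) + F_{tt}(0,0) = 0$, and the symmetry gives $F_{ss}(0,0) = F_{tt}(0,0)$; therefore
\[ F_{st}(0,0) = -F_{ss}(0,0) . \]
With all first-order terms gone and the second-order Hessian determined up to the single scalar $F_{ss}(0,0)$, Taylor's theorem gives
\[ F(s,t) = 1 + \tfrac{1}{2}\, F_{ss}(0,0)\, (s - t)^2 + \Oo\bigl((|s|+|t|)^3\bigr) . \]

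Finally I would substitute $s = s_i$, $t = s_j$ with the chain parameters. Whichever of the two cases in \eqref{chain} occurs, we have $s_i - s_j = \pm(i-j)\,\varepsilon$, so $(s_i - s_j)^2 = (i-j)^2 \,\varepsilon^2$ and the remainder is uniformly $\Oo(\varepsilon^3)$ since $|s_i|, |s_j| \leq 2\varepsilon$. Setting $a_1 := \tfrac{1}{2} F_{ss}(0,0)$ then gives exactly \eqref{D_exp_3}; the case $i = j$ is automatic from $\D(x,x) \equiv 1$. I do not expect any real obstacle here: the proof is essentially a disciplined use of the chain $\D(x,x) \equiv 1$ plus symmetry to kill the unwanted first-order and cross terms, leaving the quadratic form $(s-t)^2$ as the only possible second-order behavior. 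The only care required is that the implicit constant in the $\Oo(\varepsilon^3)$ term depends on bounds for the third derivatives of $F$ on a fixed neighborhood of the origin, which is harmless because those are uniform once $\varepsilon$ is small.
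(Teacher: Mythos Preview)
Your proof is correct and follows essentially the same approach as the paper: both define the two-variable function~$F(s,t)=\D(c(s),c(t))$ and use the constraints~$F(t,t)\equiv 1$ and~$F(s,t)=F(t,s)$ to force the second-order Taylor polynomial to depend only on~$(s-t)^2$. The paper organizes the expansion slightly differently---first expanding~$f(t_0,t)$ in~$t$ around~$t=t_0$ (where the linear term vanishes by the diagonal condition), and then expanding the resulting second-order coefficient~$g(t_0)$ around~$t_0=0$---whereas you compute the full Hessian at the origin directly; but the content is the same.
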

\begin{proof} We set~$f(t, t') = \D(c(t), c(t'))$ for~$t, t' \in (-2 \varepsilon, 2 \varepsilon)$.
Using that $\D$ is symmetric and that $\D(x,x)\equiv 1$, we know that
\[ 0 = \frac{d}{dt} f(t, t) \big|_{t=t_0} = 2 \:\frac{d}{dt} f(t_0, t) \big|_{t=t_0}\:. \]
Thus the linear term in a Taylor expansion vanishes,
\[ f(t_0, t) = 1 + \frac{1}{2}\: g(t_0)\:
(t-t_0)^2 + \Oo \big(|t-t_0|^3 \big) \:, \]
where we set
\[ g(t_0) = \frac{d^2}{dt^2} f(t_0, t)  \Big|_{t=t_0} \:. \]
As the function~$g$ is smooth, we can again expand it in a Taylor series,
\[ g(t_0) = g(0) + \Oo(t_0) \:. \]
We thus obtain
\[ f(t_0, t) = 1 + \frac{1}{2}\: g(0)\:
(t-t_0)^2 + \Oo \big( |t_0|\: |t-t_0|^2 \big) + \Oo \big( |t-t_0|^3 \big) \:. \]
Setting~$a_1 = g(0)/2$ and using that~$|t|, |t_0| \leq 2 \varepsilon$, the result follows.
\QED

\begin{Lemma}
Under the assumptions of Theorem~\ref{thmmain1}, the statement~\eqref{remains} holds.
\end{Lemma}
\Proof 
Assume, conversely, that for $x \in \overset{\circ}{\supp \rho}$  there is a point $y\in \supp\rho\cap\K(x)$.
We choose the chain~$x_0, x_1=x, x_2$ as in Lemma~\ref{lemma420}.
We use the notation of Corollary~\ref{matrix} in case~$N=3$, setting~$x_3=y$.
Choosing the vector~$u \in \C^4$ as~$u = (1,-2,1,0)$, we can apply Lemma~\ref{lemma420}
to obtain
\begin{align*}
\langle u, L u \rangle_{\C^4} = 6-4\, \D(x_0,x_1)+2\, \D(x_0,x_2)-4\, \D(x_1,x_2)=\Oo(\varepsilon^3)\:.
\end{align*}
Furthermore, using~\eqref{gradient}, we know that
$$\D(x_0,y)=b\, \varepsilon +\Oo(\varepsilon^2)$$ with $b\neq 0$. Thus, choosing~$u=(\alpha, -2 \alpha,
\alpha, \beta)$ with~$\alpha, \beta \in \R$, it is
\begin{align*}
\langle u, L u \rangle_{\C^4} =
\bigg\langle \bpm \alpha \\ \beta \epm,
\bpm \Oo(\varepsilon^3) & b \varepsilon +\Oo(\varepsilon^2) \\ b \varepsilon +\Oo(\varepsilon^2) & 1 \epm \bpm \alpha \\ \beta \epm \bigg\rangle_{\C^2}\:.
\end{align*}
For sufficiently small~$\varepsilon$, the matrix in this equation has a negative determinant,
in contradiction to Corollary~\ref{matrix}.
\QED
This completes the proof of Theorem~\ref{thmmain1}.

In order to finish the proof of Theorem~\ref{thmmain2}, we first remark that,
combining the symmetry of~$\D$ with the assumption that~$\D$ is locally translation symmetric
at~$x$ with respect to~$c$, we know that~$\D(c(t), c(t')) = f(|t-t'|)$. After rescaling, we can assume that $f(0)=1$. A Taylor expansion of~$f$ then yields
the following simplification and generalization of Lemma~\ref{lemma420},
\begin{equation}\label{expand_D}
\D(c(t),c(t'))=1+\sum_{i=1}^K a_i \,(t-t')^{2i} +\Oo \Big( (t-t')^{2(K+1)} \Big),
\end{equation}
where the real coefficients~$a_i$ only depend on~$\D$ and the curve~$c$.
\begin{Lemma}
Under the assumptions of Theorem~\ref{thmmain2}, the statement~\eqref{remains} holds.
\end{Lemma}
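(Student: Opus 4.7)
The plan is to adapt the argument of Lemma~\ref{Int_not_linkable}, now using a longer equidistant chain together with carefully chosen finite-difference coefficients in order to exploit the extra cancellation afforded by local translation symmetry and the expansion~\eqref{expand_D}. I would proceed indirectly: assuming there is $x \in \overset{\circ}{\supp \rho}$ and $y \in \K(x) \cap \supp \rho$, the goal is to exhibit a vector $u$ making the Gram form negative, contradicting Corollary~\ref{matrix}.

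First, I pick a smooth curve $c$ joining $x$ and $y$ as supplied by the hypothesis, and by~\eqref{gradient_eq} I extract the smallest order $p \in \N$ with $\frac{d^p}{dt^p}\D(c(t),y)\big|_{t=0} \neq 0$. Since $\D(c(0),y) = 0$ and $\D(c(\cdot),y)$ changes sign at $t=0$, $p$ must be odd and
\[ \D(c(t),y) = \gamma\, t^p + O(t^{p+1}) \qquad \text{with}\ \gamma \neq 0. \]
I then take $k = 2p$ and form the chain $x_0, x_1, \ldots, x_{2p}$ as in~\eqref{chain}, so that $\L(x_0,y) = \D(x_0,y) = \gamma \varepsilon^p + O(\varepsilon^{p+1}) > 0$ while $\L(x_j,y) = 0$ for every $j \geq 1$ (lightlike at $j=1$, spacelike for $j \geq 2$ once $\varepsilon$ is small). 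Because $x$ lies in the interior of $\supp \rho$, the whole chain lies in $\supp \rho$ for small $\varepsilon$, and all pairs within the chain are timelike, so $\L(x_i,x_j) = \D(x_i,x_j)$ there.

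The decisive algebraic step is to choose $\alpha_i = (-1)^i \binom{2p}{i}$, the coefficients of the $2p$-th order finite-difference operator. These annihilate every polynomial of degree less than $2p$, so the moments $\mu_s := \sum_i \alpha_i i^s$ vanish for $s = 0,1,\ldots, 2p-1$. Combined with the translation-symmetric expansion~\eqref{expand_D},
\[ A := \sum_{i,j=0}^{2p} \alpha_i \alpha_j \, \D(x_i,x_j) = \sum_{\ell=0}^{K} a_\ell \varepsilon^{2\ell} \sum_{i,j} \alpha_i \alpha_j (i-j)^{2\ell} + O(\varepsilon^{2K+2}), \]
and expanding $(i-j)^{2\ell} = \sum_m \binom{2\ell}{m}(-1)^m i^{2\ell-m} j^m$ shows that each inner sum is $\sum_m \binom{2\ell}{m}(-1)^m \mu_{2\ell-m}\mu_m$; this vanishes whenever $2\ell < 4p$ because one of $m$ or $2\ell-m$ is then forced below $2p$. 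Hence $A = O(\varepsilon^{4p})$.

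Finally, I apply Corollary~\ref{matrix} to the $(2p+2)\times(2p+2)$ Gram matrix of $\{x_0,\ldots,x_{2p},y\}$ evaluated on test vectors $u = (\lambda \alpha_0, \ldots, \lambda \alpha_{2p}, \beta)$. The quadratic form reduces to the $2\times 2$ block
\[ \langle u, L u\rangle = A\, \lambda^2 + 2\lambda\beta \bigl(\gamma \varepsilon^p + O(\varepsilon^{p+1})\bigr) + \D(y,y)\, \beta^2, \]
whose determinant is $A\, \D(y,y) - \bigl(\gamma \varepsilon^p + O(\varepsilon^{p+1})\bigr)^2 = -\gamma^2 \varepsilon^{2p} + O(\varepsilon^{2p+1})$. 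For $\varepsilon$ small enough this is negative, so suitable $(\lambda, \beta)$ give $\langle u, Lu \rangle < 0$, contradicting Corollary~\ref{matrix}. The main obstacle is precisely the combinatorial estimate $A = O(\varepsilon^{4p})$: it requires both local translation symmetry (to reduce $\D(c(t),c(t'))$ to a function of $t-t'$, i.e.\ the form~\eqref{expand_D}) and the classical vanishing of low-order moments of the $2p$-th finite-difference coefficients, which together win a factor $\varepsilon^{2p+\text{something}}$ over the cross-term $\sim \varepsilon^{2p}$ that drives the contradiction.
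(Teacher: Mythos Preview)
Your proof is correct and follows essentially the same strategy as the paper: both choose an equidistant chain along the curve~$c$, exploit the translation-symmetric expansion~\eqref{expand_D} to show that a suitable finite-difference combination of chain points has Gram expectation~$O(\varepsilon^{2k})$ (beating the cross-term~$\sim\varepsilon^{2p}$), and then adjoin~$y$ to produce a~$2\times 2$ block with negative determinant, contradicting Corollary~\ref{matrix}. The paper differs only in details---it uses a shorter chain of~$p+2$ points with coefficients obtained by solving the Vandermonde-type system~\eqref{star} rather than your~$2p+1$ points with explicit binomial coefficients, and it first records (in the analytic case~(A)) that a finite~$p$ satisfying~\eqref{gradient_eq} exists before proceeding exactly as you do.
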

\Proof Let us first verify that in the real analytic case, there is a~$p$ such that~\eqref{gradient_eq}
holds. Namely, assuming the contrary, all the $t$-derivatives of the function~$\D(c(t), y)$
vanish. As the function~$\D(c(t), y)$ is real analytic in a neighborhood of~$t=0$
(as the composition of analytic functions is analytic), it follows that this function is
locally constant. This contradicts the fact that~$\D(c(t), y)$ changes sign at~$t=0$.

Assume conversely that for $x \in \overset{\circ}{\supp \rho}$  there is a point~$y\in \supp\rho\cap\K(x)$.
We choose the chain~$x_0, x_1=x, x_2, \ldots, x_k$ as in~\eqref{chain} with~$k=p+1$.
We use the notation of Corollary~\ref{matrix} in case~$N=k$. Then the Gram matrix~$L$
becomes
\[ L=\big(f(\varepsilon|i-j|) \big)_{i,j=0,...,k}=\bpm 1 & f(\varepsilon) & \cdots & f(k \varepsilon) \\
f(\varepsilon) & 1 & & \vdots \\ \vdots & &\ddots  & \\ f(k \varepsilon) &  \cdots & &1 \epm . \]
Using the expansion~\eqref{expand_D} for~$K=k-1$, we obtain
\beq \label{Lexp} \begin{split}
L & =E+a_1 \varepsilon^2 \big(|i-j|^2 \big) +a_2 \varepsilon^4 \big( |i-j|^4 \big) \\
& \qquad \qquad
+\ldots+a_{k-1} \varepsilon^{2(k-1)} \big( |i-j|^{2(k-1)} \big) +\Oo\big(\varepsilon^{2k}\big) \:,
\end{split}
\eeq
where~$E$ denotes the matrix where all the matrix entries (also the off-diagonal entries) are equal to one,
and~$(|i-j|^q)$ is the matrix whose element~$(i,j)$ has the value~$|i-j|^q$.

Let us construct a vector~$v \in  \C^{k+1}$ such that the expectation value~$\langle v,Lv\rangle$
is of the order~$\Oo(\varepsilon^{2k})$. To this end, we take for~$v=(v_i)_{i=0}^{k}\in \C^{k+1}$ a 
non-trivial solution of the~$k$ linear equations
\beq \sum_{i=0}^{k} v_i =0,\quad \sum_{i=0}^{k}  i v_i =0,\quad \sum_{i=0}^{k}  i^2 v_i =0,\quad \ldots, \sum_{i=0}^{k}  i^{k-1}  v_i =0\:. \label{star}
\eeq
Then~$\langle v,E v\rangle=0$ and for all $l\in\{1,\ldots,k-1\}$,
\begin{align*}
 \langle v,(|i-j|^{2l}) v\rangle=& \sum_{i,j=0}^{k} v_i v_j |i-j|^{2l}=\sum_{i,j=0}^{k} v_i v_j \sum_{\nu=0}^{2l} \binom{2l}{\nu} i^\nu j^{2l-\nu}=\\
=& \sum_{i,j=1}^{k} v_i v_j \left(i^{2l}+2l \;i^{2l-1}j+\ldots+\binom{2l}{l}i^l j^l+\ldots+j^{2l}\right) .
\end{align*}
Each summand involves a power of~$i$ and a power of~$j$, where always one of these powers is
smaller than~$k$. Thus all summands vanish according to~\eqref{star},
\[ \langle v,(|i-j|^{2l}) v\rangle = 0 \:. \]
We conclude that in the expectation value~$\langle v,Lv\rangle$ with~$L$ according to~\eqref{Lexp}
all the terms except for the error term vanish,
\[ \langle v,Lv\rangle = \Oo(\varepsilon^{2k})\:. \]
Moreover, the solution $v$ can always be normalized by $v_0=1$, because setting~$v_0$ to zero, the system of
equations~\eqref{star}
can be rewritten with the square Vandermonde matrix, which has a trivial kernel. 

We next consider the setting of Corollary~\ref{matrix}, but now with~$N=k+1$ and~$x_{k+1}=y$.
Using~\eqref{gradient_eq} together with the fact that the points~$y$ and~$x_0$ are timelike
separated, we find that
\beq \label{bdef}
\L(x_0, y) = b \,\varepsilon^p + \Oo(\varepsilon^{p+1})
\eeq
for~$b \neq 0$. We choose the vector~$u \in \C^{k+2}$ as~$u = (\alpha v_0, \ldots, \alpha v_k, \beta)$
with~$\alpha, \beta \in \R$ (and $v=(v_0, \ldots, v_k)$ as above). Then
\begin{align*}
\langle u, L u \rangle_{\C^{k+2}} =
\bigg\langle \bpm \alpha \\ \beta \epm,
\bpm \Oo(\varepsilon^{2k}) & b \varepsilon^p +\Oo(\varepsilon^{p+1}) \\
b \varepsilon^p +\Oo(\varepsilon^{p+1}) & \L(y,y) \epm \bpm \alpha \\ \beta \epm \bigg\rangle_{\C^2}\:,
\end{align*}
where we combined~\eqref{bdef} with our normalization~$v_0=1$, and used 
that~$y$ is not timelike separated from~$x_1, \ldots, x_k$.
For sufficiently small~$\varepsilon$, the matrix in this equation has a negative determinant,
in contradiction to Corollary~\ref{matrix}.
\QED
This completes the proof of Theorem~\ref{thmmain2}.

\section{The Variational Principles on the Circle} \label{sec5}
As a simple starting point for a more detailed analysis, we now
consider the variational principles on the circle (see Example~(b) on page~\pageref{Bcircle}).
We first discuss numerical results, which again show the ``critical behavior''
discussed in Section~\ref{sec3} for the variational principle on~$S^2$.
Applying the previous structural results, we will prove this critical behavior
and show, under generic assumptions, that the minimizing measure is supported at a finite number
of points. Moreover, we will give many minimizers in closed form.

Our numerical methods and results are similar to those on~$S^2$, as we now describe.
We again consider the weighted counting measure~\eqref{weightcount}.
As the starting configuration we choose,
in analogy of the Tammes distribution on~$S^2$
a uniform distribution of~$m$ points on the circle,
\beq \label{Xmdef}
X_m=\{x_k=e^{i(k-1) \vartheta_m},k=1,\ldots,m\} \:,\qquad
\vartheta_m=\frac{2\pi}{m} \:,
\eeq
with uniform weights~$\rho_k=1/m$. Minimizing as in
Section~\ref{sec3} with a simulated annealing algorithm,
we obtain the results shown in Figure~\ref{vgl_S1}.
\begin{figure}
 \centering
 \includegraphics[width=8cm]{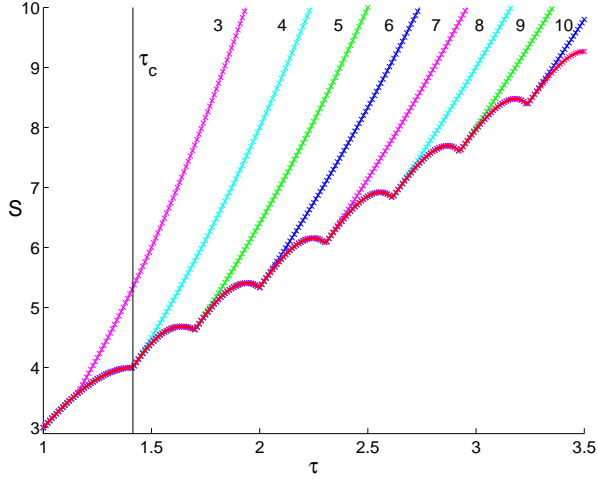}
\caption{Numerical minima for the weighted counting measure on the circle.}
\label{vgl_S1}
 \end{figure}
The numerical findings indicate that the minimizing measure is supported at a finite number of points $m_0$.
This number can be stated explicitly by
 \begin{equation}\label{m0}
m_0=\min \left\{ n\in \N\;:\; n\geq   \frac{2\pi}{\vartheta_{\max}}\right\}\:,                                                 \end{equation} 
where $\vtmax$, as given by \eqref{thetamax}, denotes the opening angle of the lightcone.
The number $m_0$ increases with~$\tau$, with discontinuous ``jumps'' at the values
\begin{equation} \label{taumS1}
\tau_m := \sqrt{\frac{2}{1-\cos(\vartheta_m)}}.
\end{equation}
Besides the discrete nature of the minimizers, the numerical results reveal that
at~$\tau=\tau_c = \sqrt{2}$ (corresponding to $\vtmax =\frac{\pi}{2}$), the structure of
the minimizers changes completely.
Just as in Section~\ref{sec3}, this effect can be understood as a phase transition.
More precisely, if~$\tau \leq \tau_c$, every minimizer is generically timelike.
If we further decrease~$\tau$ (i.e.,\ for every fixed~$\tau < \tau_3$), 
we even found a large number of minimizing measures, supported at
different numbers of points with strikingly different positions.
However, if~$\tau > \sqrt{2}$, the minimizer is unique (up to rotations on~$S^1$),
is supported at~$m_0$ points, and is not generically timelike.

In the remainder of this section, we make this picture rigorous.
First, the operator~$\D_\mu$ can be diagonalized explicitly by plane
waves~$\phi_n(x) = e^{i n \vartheta_x}$ (where~$n \in \Z$, and~$\vartheta_x$ is the
angle). This gives rise to the decomposition
\[ \D(x,y)=\nu_0 +\sum_{n=1}^2 \nu_n \left(e^{in (\vartheta_x - \vartheta_y)}+e^{-in (\vartheta_x - \vartheta_y)}\right),  \]
where
\beq \label{nu0form}
\nu_0=\iint_{S^1\times S^1} \D(x,y) \:d \mu(x)\, d \mu(y)=4 \tau^2-\tau^4
\eeq
and similarly~$\nu_1=2\tau^2$ and~$\nu_2=\frac{1}{2}\tau^4$.
In the case $\tau \leq 2$, all eigenvalues~$\nu_0$, $\nu_1$ and~$\nu_2$ are non-negative,
and we can apply Proposition~\ref{est_nu} to obtain
\[ \Sact_{\min} \geq \nu_0 \:. \]

For sufficiently small~$\tau$, the uniform distribution of points on the circle~\eqref{Xmdef}
gives a family of generically timelike minimizers:
\begin{Lemma}
If~$m\geq 3$ and~$\tau$ is so small that~$\L(x,y)=\D(x,y)$ for all~$x,y\in X_m$,
then $\rho=\frac{1}{m}\sum_{i=1}^m \delta_{x_i}$ is a generically timelike minimizer.
Every other minimizer is also generically timelike.
\end{Lemma}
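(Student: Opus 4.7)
The plan is to show that $\rho$ attains the lower bound $\nu_0$ provided by Proposition~\ref{est_nu}; once $\Sact[\rho] = \nu_0$ and $\D_\mu$ is positive semi-definite, the equality clause of that proposition gives both that $\rho$ is a minimizer and that every other minimizer is generically timelike.

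First I would verify directly that $\rho$ itself satisfies the two conditions of Definition~\ref{defgtl}. Condition~(i) is exactly the standing hypothesis $\L(x,y)=\D(x,y)$ on $X_m\times X_m$. For condition~(ii), I would insert the spectral representation
\[ \D(x,y)=\nu_0 +\sum_{n=1}^2 \nu_n \left(e^{in(\vartheta_x-\vartheta_y)}+e^{-in(\vartheta_x-\vartheta_y)}\right) \]
into the definition~\eqref{ddef} of~$\dd$, obtaining a plane-wave expansion on the equispaced set $X_m$. The crucial point is that for $m \geq 3$ and $n \in \{1,2\}$ one has $m \nmid n$, so the finite geometric sum $\sum_{k=1}^m e^{-in(k-1)\vartheta_m}$ vanishes; the $\nu_1$ and $\nu_2$ contributions drop out and $\dd \equiv \nu_0$ on all of $\F$.

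Second, I would compute the action. Using the hypothesis to replace $\L$ by $\D$ on the support of $\rho$ and then Fubini together with $\dd \equiv \nu_0$, I get
\[ \Sact[\rho]=\iint_{\F\times\F}\D(x,y)\,d\rho(x)\,d\rho(y)=\int_{\F}\dd(x)\,d\rho(x)=\nu_0. \]

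Third, I would apply Proposition~\ref{est_nu}. Its hypothesis is that $\D_\mu$ is positive semi-definite; here the only eigenvalues of $\D_\mu$ are $\nu_0=\tau^2(4-\tau^2)$, $\nu_1=2\tau^2$ and $\nu_2=\tau^4/2$, so this reduces to $\tau\leq 2$, which is implicit in the assumption that $\tau$ be so small that $\L$ and $\D$ coincide on all pairs of $X_m$ (the angle $\pi$ or close to $\pi$ occurring among equispaced points forces $\vartheta_{\max}$ close to $\pi$, hence $\tau$ close to $1$). Proposition~\ref{est_nu} then gives $\Smin \geq \nu_0 = \Sact[\rho]$, which forces $\rho$ to be a minimizer and the bound to be saturated; the equality clause of the same proposition yields that every other minimizer is generically timelike. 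I expect no serious obstacle: the only step requiring real care is verifying the plane-wave orthogonality on $X_m$ and cross-checking that ``$\tau$ so small'' genuinely keeps us in the regime $\tau\leq 2$ where $\D_\mu \geq 0$.
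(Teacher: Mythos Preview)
Your proposal is correct and follows essentially the same route as the paper: compute $\dd\equiv\nu_0$ via the vanishing of the exponential sums $\sum_{k} e^{\pm in(k-1)\vartheta_m}$ for $n=1,2$ (the paper carries out this same calculation directly from the quadratic formula for~$\D$ rather than through the spectral decomposition, but the two are identical), deduce $\Sact[\rho]=\nu_0$, note that the hypothesis forces $\tau<2$ so that $\D_\mu\geq 0$, and apply Proposition~\ref{est_nu}. The only point the paper handles slightly more crisply is the bound $\tau<2$: rather than arguing heuristically about angles ``close to~$\pi$'', one can simply observe that for $m\geq 3$ the set $X_m$ always contains a pair at angle~$\geq \pi/2$ and~$<\pi$, so $\vartheta_{\max}\geq\pi/2$ and hence $\tau\leq\sqrt{2}<2$.
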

\Proof
A straightforward calculation using the identities
\[ \sum\limits_{k=0}^{m-1} e^{ i k \vartheta_m}=0 \qquad \text{and} \qquad
\sum\limits_{k=0}^{m-1} \left( e^{i k\vartheta_m}\right)^2=0 \]
yields for any~$x \in S^1$,
\begin{align*}
 \dd(x)&=\frac{1}{m}\: 2 \tau^2 \sum_{k=0}^{m-1}
 \Big( 2+2 \, \langle x,x_k\rangle -\tau^2+\tau^2 \,\langle x,x_k\rangle^2 \Big) \\
&= \frac{1}{m}\: 2 \tau^2 \left(2m-m \tau^2+ \frac{m}{2}\: \tau^2 \right)=\nu_0 \:.
\end{align*}
In particular, one sees that~$\Sact[\rho]=\nu_0$.

The assumption~$\L(x,y)=\D(x,y)$ for all~$x,y\in X_m$ can be satisfied only if~$\tau < 2$.
Thus in view of~\eqref{nu0form}, the operator~$\D_\mu$ is positive.
We finally apply Proposition~\ref{est_nu}.
\QED
Applying this lemma in the case~$m=4$ gives the following result.
\begin{Corollary} \label{corS1}
If~$\tau \leq \tau_c$, every minimizer is generically timelike.
\end{Corollary}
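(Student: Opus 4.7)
The plan is to apply the preceding lemma directly with $m=4$. The four equally spaced points $X_4=\{1,i,-1,-i\}$ on $S^1$ have pairwise angles in $\{0, \pi/2, \pi\}$. By the formula \eqref{DS2} for $\D$ together with the identification \eqref{thetamax} of the lightcone opening angle, the inequality $\tau\leq\tau_c=\sqrt{2}$ is exactly equivalent to $\vartheta_{\max}\geq \pi/2$. Hence $\D(\pi/2)\geq 0$ in this regime, while $\D(\pi)=0$ for every $\tau$, so that $\D\geq 0$ on all pairs from $X_4$ and consequently $\L\equiv\D$ on $X_4\times X_4$.

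With this verification, the previous lemma applies with $m=4$ and yields directly that the uniform counting measure on $X_4$ is a generically timelike minimizer, and, more importantly, that every other minimizer is also generically timelike. This is exactly the statement of the corollary.

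I do not expect any genuine obstacle, since the main work has already been carried out in the lemma. The only subtle point is the boundary case $\tau=\tau_c$, at which the pairs of antipodally adjacent points of $X_4$ become lightlike rather than strictly timelike (i.e.\ $\D(\pi/2)=0$). This case is still covered, however, because the hypothesis of the lemma requires only the equality $\L=\D$ on $X_4\times X_4$, and lightlike separation fits this requirement. Thus the argument applies uniformly for all $\tau$ in the closed interval $[1,\tau_c]$.
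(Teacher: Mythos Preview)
Your proof is correct and follows exactly the paper's approach: the paper's proof consists of the single sentence ``Applying this lemma in the case~$m=4$ gives the following result,'' and you have simply spelled out the verification that the hypothesis of the lemma is met for~$X_4$ when $\tau\le\tau_c$.
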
 \noindent
More general classes of generically timelike minimizers can be constructed explicitly
with the help of Corollary~\ref{cor412}. In particular, one can find minimizing measures
which are not discrete. For the details we refer to the analogous measure
on~$S^2$ given in Example~\ref{ex61}.

Having explored the case $\tau \leq \tau_c$, we proceed with the case~$\tau > \tau_c$. As already stated, the closed lightcones are given by
\[ \J(x)=\Big\{y\;:\; \langle x,y\rangle  \geq 1-\frac{2}{\tau^2}=\cos(\vtmax) \Big\} \cup \{-x \} \:. \]
Therefore, if $\tau>\sqrt{2}=\tau_c$ (or equivalently $\vtmax<\frac{\pi}{2}$),
the condition of antipodal points (see Proposition~\ref{Prpgt}) is satisfied.
Thus there are no generically timelike minimizers. As the condition~\eqref{gradient}
is obvious, we can apply Theorem~\ref{thmmain1}~(A) and conclude that 
\beq \label{rhodiscrete}
\text{if~$\tau > \tau_c$, every minimizing measure is discrete}\:.
\eeq
Using results and methods from Section~\ref{secsingular}, we 
we will be able to explicitly construct all minimizers under the additional technical assumption that
\[ \tau > \tau_d := \sqrt{3+\sqrt{10}} \:. \]
We first introduce a descriptive notation:
\begin{Def}
 A \textbf{chain} of length $k$ is a sequence~$x_1,\ldots,x_k\in S^1$
 of pairwise distinct points such that $\langle x_i,x_{i+1}\rangle = \cos(\vartheta_{\max})$ for all $i=1,\ldots,k-1$.
\end{Def}

\begin{Thm} \label{thmS1}
If $\tau> \tau_d$, then the support of every minimizer~$\rho$ is a chain~$\{x_1, \ldots, x_{m_0} \}$
(with~$m_0$ as given by~\eqref{m0}).
The minimal action is 
\beq \label{Svalue}
\Smin= \frac{\Ll(0)(\Ll(0)+\Ll(\gamma))}{(m_0-2)(\Ll(0)+\Ll(\gamma))+2\Ll(0)}\;,
\eeq
where $\gamma=\arccos(\langle x_1,x_{m_0}\rangle)\in (0,\vartheta_{\max}]$.
The minimizing measure is unique up to rotations on~$S^1$.
\end{Thm}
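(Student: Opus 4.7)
Since $\tau > \tau_c = \sqrt{2}$, the opening angle satisfies $\vartheta_{\max} < \pi/2$, so for every $x \in S^1$ the antipodal point $-x$ satisfies $\J(x) = \overline{\I(x)} \cup \{-x\}$ with $\overline{\I(x)} \cap \overline{\I(-x)} = \varnothing$; condition~(III) of Proposition~\ref{Prpgt} is then verified and rules out generically timelike minimizers. As $S^1$ and $\D$ are real analytic and the gradient condition~\eqref{gradient} is obvious on the circle, Theorem~\ref{thmmain1}(A) applies and yields $\overset{\circ}{\supp \rho} = \varnothing$, so $\supp \rho$ is a closed nowhere dense subset of $S^1$.

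The heart of the proof is to upgrade ``empty interior'' to ``chain of exactly $m_0$ points''. My plan has three substeps. First, Lemma~\ref{lemmaEL} together with $\Smin > 0$ (as shown at the start of the proof of Proposition~\ref{Prpgt}) forces every $x \in \supp \rho$ to have at least one further support point within $\overline{\I(x)}$: otherwise $\ell$ near $x$ would equal $\rho(\{x\})\, \L(\dist(\cdot, x))$, which is strictly less than $\Smin = \rho(\{x\})\, \L(0)$ at small rotations of $x$, contradicting $\ell \geq \Smin$. Second, I would apply a Gram-matrix positivity argument in the spirit of Lemma~\ref{Int_not_linkable} and the proof of Theorem~\ref{thmmain2}: if $\supp \rho$ had an accumulation point, one could select closely spaced support points $x_0, \ldots, x_k$ together with a lightlike ``witness'' $y \in \K(x_0) \cap \supp \rho$ (whose existence is provided by the first substep), choose a test vector as in~\eqref{star}, and use the expansion~\eqref{expand_D} to violate Corollary~\ref{matrix}. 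Precisely here the technical bound $\tau > \tau_d$ enters, as the threshold at which the leading negative contribution in the expectation value dominates all positive corrections. Third, once $\supp \rho$ is finite, a packing argument orders $\supp \rho = \{x_1, \ldots, x_N\}$ cyclically on $S^1$; the first substep forces every cyclically adjacent pair except at most one to be at separation exactly $\vartheta_{\max}$, and closing up on $S^1$ with a single residual gap $\gamma \in (0, \vartheta_{\max}]$ yields $(N-1)\vartheta_{\max} + \gamma = 2\pi$, hence $N = m_0$ by~\eqref{m0}.

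With the chain structure in hand, the Gram matrix $L_{ij} = \L(x_i, x_j)$ has $\L(0)$ on the diagonal, $\L(\gamma)$ at positions $(1,m_0)$ and $(m_0,1)$, and $0$ elsewhere, since every other angular separation in the chain lies in the interval $[\vartheta_{\max}, \pi]$ where $\L$ vanishes. The Euler-Lagrange equation $L \vec\rho = \Smin\,\vec 1$ together with $\sum_i \rho_i = 1$ is a linear system invariant under the reflection $i \mapsto m_0+1-i$, and the ansatz $\rho_1 = \rho_{m_0} = a$, $\rho_2 = \cdots = \rho_{m_0-1} = b$ reduces it to $a(\L(0)+\L(\gamma)) = \Smin = b\,\L(0)$ with $2a + (m_0-2)b = 1$, whose solution yields~\eqref{Svalue}. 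Uniqueness up to rotation follows because the chain configuration is rigid up to rotation and the linear system above has a unique solution once the rotation is fixed. The most delicate step will be the Gram-matrix positivity argument in the second substep, where the threshold $\tau_d = \sqrt{3+\sqrt{10}}$ must emerge from a precise coefficient inequality refining Lemma~\ref{lemma420}.
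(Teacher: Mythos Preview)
Your proposal has a genuine gap in the second and third substeps, and it misidentifies where the threshold $\tau_d$ actually enters.

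The third substep does not follow from the first. Knowing that every support point has \emph{some} other support point in its closed light cone does not force cyclically adjacent pairs to sit at separation exactly $\vartheta_{\max}$. For instance, nothing you have said rules out two nearby pairs of points, each pair at mutual distance $\vartheta_{\max}/2$, with the pairs far apart on $S^1$; every point has a close neighbour, yet no adjacent separation equals $\vartheta_{\max}$. The paper obtains the chain structure by an entirely different mechanism: one takes a maximal chain $K\subset\supp\rho$ and \emph{rotates it rigidly} as a variation. The second variation picks up $\D''(\vartheta)$ on the cross terms between $K$ and $L=\supp\rho\setminus K$, and for $\tau>\sqrt{6}$ one has $\D''<0$ on $[0,\vartheta_{\max}]$ (concavity). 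Minimality then forces $\L(x,y)=0$ for all $x\in K$, $y\in L$, after which one slides $K$ until it absorbs a point of $L$ and iterates. No Gram-matrix expansion near an accumulation point is needed here.

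Correspondingly, your guess about $\tau_d$ is off. The value $\tau_d=\sqrt{3+\sqrt{10}}$ does \emph{not} arise from ruling out accumulation points; it appears when bounding the chain length from above. If a chain had length $>m_0$, then the three points $x_1,x_{m_0+1},x_2$ would have Gram matrix
\[
\begin{pmatrix} \L(0) & \L(\vartheta_{\max}-\gamma) & 0 \\ \L(\vartheta_{\max}-\gamma) & \L(0) & \L(\gamma) \\ 0 & \L(\gamma) & \L(0) \end{pmatrix},
\]
and the inequality $\L(\gamma)^2+\L(\vartheta_{\max}-\gamma)^2>\L(0)^2$, valid precisely for $\tau>\tau_d$, makes its determinant negative, contradicting Corollary~\ref{matrix}. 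The lower bound $\geq m_0$ on the chain length is then a direct action comparison, and your computation of the weights and of~\eqref{Svalue} is correct.
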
 \noindent
An example for the support of the minimizing measure is shown in Figure~\ref{figchain_S1}.
\begin{figure}\label{figchain_S1}
\centering
\includegraphics[width=4cm]{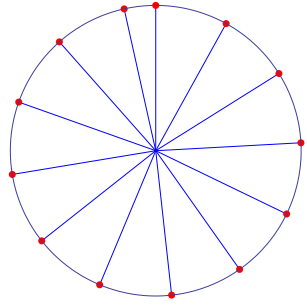}
\caption{A minimizer for $\tau=4$.}
\end{figure} %
Up to rotations, the points of the chain can be written as
\begin{equation}\label{x_chain}
x_k=e^{i (k-1) \vtmax },\quad k=1,\ldots,m_0.
\end{equation} 
In the special cases $\tau=\tau_m$, the minimizer is the measure with equal weights supported on the uniform distribution $X_{m}$. In the general case, the weights will not all be the same, as will be specified below.

For the proof of Theorem~\ref{thmS1} we proceed in several steps.
\begin{Lemma}
If $\tau>\sqrt{6}$, the minimal action is attained for a measure supported on a
chain~$x_1, \ldots, x_k$. In the case~$k=m_0$, every minimizing measure is
a chain.
\end{Lemma}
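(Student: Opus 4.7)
The plan builds on \eqref{rhodiscrete}: since $\tau>\sqrt{6}>\tau_c$, every minimizing measure $\rho$ is discrete, so I may write its support cyclically as $\{x_1,\ldots,x_k\}\subset S^1$ with weights $\rho_i>0$. By Lemma~\ref{lemmaEL}, the function $\ell$ achieves its global minimum $\Smin$ at each $x_i$ and $\ell\geq\Smin$ everywhere on $S^1$.

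The analytic heart of the argument is a concavity property of the kernel. Viewing $\D$ as a function of the angular distance $\vartheta\in[0,\pi]$, direct differentiation gives
\[
\D''(\vartheta)=-4\tau^2\cos\vartheta-4\tau^4\cos(2\vartheta),\qquad \D''(\vartheta_{\max})=-4(\tau^2-1)(\tau^2-6),
\]
together with $\D'''(\vartheta)=4\tau^2\sin\vartheta+8\tau^4\sin(2\vartheta)>0$ on $(0,\pi/2)$. Since $\vartheta_{\max}<\pi/2$ for $\tau>\sqrt{2}$ and $\D''(0)=-4\tau^2(1+\tau^2)<0$, the hypothesis $\tau>\sqrt{6}$ forces $\D''<0$ throughout $[0,\vartheta_{\max}]$. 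Thus $\D$ is strictly concave on the closed lightcone; combined with $\D'(\vartheta_{\max})=-4\tau^2(\tau^2-1)\sin\vartheta_{\max}<0$, this produces a strictly positive upward jump in the first derivative of $\L(\cdot,x_j)$ at each boundary point of $\I(x_j)$.

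Now I examine $\ell$ globally. On every maximal interval between two consecutive kink positions of the form $\vartheta_j\pm\vartheta_{\max}$, the function $\ell$ is a nonnegative linear combination of strictly concave smooth pieces and is hence itself strictly concave, while $\ell'$ jumps upward across every kink. Because a strictly concave function cannot attain an interior local minimum, the condition that $\ell$ be globally minimized at $x_i$ forces $x_i$ to coincide with one of the kink positions, i.e.\ there exists $j\neq i$ with $\vartheta(x_i,x_j)=\vartheta_{\max}$. The antipodal case $\vartheta=\pi$ is ruled out because $\D$ has a local maximum of value $0$ there, so $\L\equiv 0$ in a neighborhood of $\pi$ and no kink arises.

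Equipped with this ``lightlike partner'' property I would assemble a chain in the sense of Definition~\ref{defchain} by starting at any $x_i$ and repeatedly moving to a lightlike partner. In the case $k=m_0$, the combinatorial constraint $(m_0-1)\vartheta_{\max}<2\pi\leq m_0\vartheta_{\max}$ leaves only the rigid configuration with $m_0-1$ cyclic gaps equal to $\vartheta_{\max}$ and one closure gap of $2\pi-(m_0-1)\vartheta_{\max}\in(0,\vartheta_{\max}]$, pinning down the support up to rotation. The principal obstacle I anticipate is converting ``every $x_i$ has \emph{some} lightlike partner'' into a genuine chain ordering: one must rule out branched or skipping patterns in which the lightlike partner of $x_i$ is not its cyclic neighbor. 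The natural remedy is to reapply the strict concavity analysis to any support point lying strictly between an already-identified lightlike pair, combined with the Gram-matrix positivity of Corollary~\ref{matrix} to exclude such branching configurations.
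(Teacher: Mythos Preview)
Your computation of the strict concavity of $\D$ on $[0,\vartheta_{\max}]$ for $\tau>\sqrt{6}$ is correct and matches the paper's~\eqref{Dconc}. Your use of the Euler-Lagrange equations together with the piecewise strict concavity of $\ell$ to conclude that every support point $x_i$ must sit at a kink of $\ell$, and hence has a lightlike partner $x_j$ at angular distance exactly $\vartheta_{\max}$, is also sound. This is a different route from the paper's and yields a clean necessary condition on $\supp\rho$.

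The gap you flag yourself, however, is genuine and is not repaired by the remedy you propose. Knowing that each support point has a lightlike partner only tells you that the ``lightlike graph'' on $\supp\rho$ has minimum degree one; since each point on $S^1$ has at most two points at distance $\vartheta_{\max}$, this graph is a disjoint union of paths and possibly cycles. Nothing in your argument excludes, for instance, two separate chains $K_1,K_2\subset\supp\rho$, where each point of $K_1$ has its lightlike partner inside $K_1$ and likewise for $K_2$, while $K_1$ and $K_2$ may or may not have timelike cross-connections. The Gram-matrix test of Corollary~\ref{matrix} does not rule this out: if $\L(K_1,K_2)\equiv 0$ the Gram matrix is block-diagonal with positive semi-definite blocks, and if there are timelike cross-terms there is no evident sign obstruction. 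Reapplying your concavity analysis to an ``in-between'' point only reproduces the lightlike-partner conclusion and does not force a single chain.

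The paper closes exactly this gap by a different mechanism. It takes a \emph{maximal} chain $K\subset\supp\rho$, sets $L=\supp\rho\setminus K$, and rigidly rotates $K$ as a block. Maximality makes this variation smooth at $\vartheta=0$; minimality gives $\delta\Ss=0$ and $\delta^2\Ss=\sum_{x\in K,\,y\in L}2\rho(x)\rho(y)\,\delta^2\L(x,y)\geq 0$; and the concavity~\eqref{Dconc} forces each summand $\leq 0$. Hence $\L(x,y)=0$ for all $x\in K$, $y\in L$. One can then rotate $K$ freely without changing the action until it touches a point of $L$, thereby extending the chain, and iterate. This second-variation rotation argument (not Corollary~\ref{matrix}) is the missing ingredient in your outline; it is also what delivers the uniqueness statement when $\#K=m_0$, since a chain of length $m_0$ leaves no room for a nonempty $L$ with $\L(K,L)\equiv 0$.
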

\Proof Let~$\rho$ be a minimizing measure.
We first note that every chain~$K$ in the support of~$\rho$ must have
finite length, because otherwise~$\vtmax/\pi$ would have to be irrational.
As a consequence, $K$ would be a dense set of~$S^1$,
in contradiction to the discreteness of~$\rho$ (see~\eqref{rhodiscrete}).
Let us assume that the support of~$\rho$ is not a chain.

We let~$K \subset \supp \rho$ be a chain, which is maximal in the sense that it cannot be extended.
Set~$L= \supp \rho \setminus K$. We consider variations of~$\rho$ where we rotate~$K$ by a small
angle~$\vartheta$, leaving the weights on~$K$ as well as $\rho|_L$ unchanged.
The fact that~$K$ cannot be extended implies that that these variations are smooth in~$\vartheta$
at~$\vartheta=0$. The minimality of~$\rho$ implies that
\beq \label{Svar}
\delta \Ss = 0 \qquad \text{and} \qquad
\delta^2 \Ss=\sum_{x\in K,y \in L} 2 \,\rho(x)\, \rho(y)\, \delta^2\Ll(x,y) \geq 0 
\:.
\eeq
On the other hand, differentiating~\eqref{DS2}, one finds that the function~$\D$
restricted to~$[0,\vartheta_{\max}]$ is concave,
\[ \D''(\vartheta)=-4\tau^2(\cos(\vartheta)+\tau^2 \cos(2\vartheta))<0 \qquad
(\text{if~$\tau > \sqrt{6}$})\:. \]
Comparing with~\eqref{Svar}, we conclude that~$\Ll(x,y)$ vanishes for all~$x \in K$
and~$y \in L$. In the case that~$\# K=m_0$, this implies that~$L=\varnothing$, a contradiction.
In the remaining case~$\# K < m_0$, we can subdivide the circle into two
disjoint arcs~$A_K$ and~$A_L$ such that~$K \subset A_K$ and~$L \subset A_L$.
The opening angle of~$A_K$ can be chosen larger than~$\vtmax$ times the length of~$K$,
giving an a priori upper bound on the length of~$K$.

By further rotating~$K$, we can arrange that the chain~$K$ can be extended by a point in~$L$,
without changing the action. If the extended chain equals the support of~$\rho$, the proof is 
finished. Otherwise, we repeat the above argument with~$K$ replaced by
its extension. In view of our a priori bound on the length of~$K$, this process ends after a finite
number of steps.
\QED

\begin{Lemma}
Suppose that~$\rho$ is a minimizing measure supported on a chain.
If $\tau>\sqrt{3+\sqrt{10}}$, the length of this chain is at most~$m_0$.
\end{Lemma}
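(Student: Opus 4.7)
The plan is to apply Corollary~\ref{matrix} to the 4-point subset $\{x_1,x_2,x_{m_0},x_{m_0+1}\}$ of the chain (which is automatically contained in $\supp\rho$ as soon as $k\geq m_0+1$) and derive a contradiction with its positive semi-definiteness. Suppose $k\geq m_0+1$, parametrize the chain as $x_i = e^{i(i-1)\vtmax}$, and set $\gamma := 2\pi-(m_0-1)\vtmax \in (0,\vtmax]$ as in Theorem~\ref{thmS1}. Inspecting the pairwise angular distances on $S^1$ one finds that $(x_1,x_2)$ and $(x_{m_0},x_{m_0+1})$ are lightlike; $(x_2,x_{m_0})$ is spacelike (the short arc has length $\vtmax+\gamma > \vtmax$); $(x_1,x_{m_0})$ and $(x_2,x_{m_0+1})$ are timelike at angle $\gamma$; and $(x_1,x_{m_0+1})$ is timelike at angle $\vtmax-\gamma$. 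Abbreviating $a := \L(0)$, $b := \L(\gamma)$, $c := \L(\vtmax-\gamma)$, the induced Gram matrix is
\[
L' \;=\; \begin{pmatrix} a & 0 & b & c \\ 0 & a & 0 & b \\ b & 0 & a & 0 \\ c & b & 0 & a \end{pmatrix},
\]
and a direct cofactor expansion yields $\det L' = (a^2-b^2)^2 - a^2 c^2 = (a^2-b^2-ac)(a^2-b^2+ac)$.

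Since $\L$ is strictly decreasing on $[0,\vtmax]$ (because $\D'(\vartheta) = -4\tau^2 \sin\vartheta(1+\tau^2\cos\vartheta) < 0$ for $\tau>1$ and $\vartheta\in(0,\vtmax]$), one has $a > b \geq 0$ and $c > 0$, so $a^2-b^2+ac > 0$. It therefore suffices to establish the strict inequality $a^2-b^2 < ac$ for $\tau > \sqrt{3+\sqrt{10}}$: this would force $\det L' < 0$, contradicting the positive semi-definiteness of the Gram matrix from Corollary~\ref{matrix}. Substituting the explicit form $\L(\vartheta) = 2\tau^2(1+\cos\vartheta)(2-\tau^2(1-\cos\vartheta))$ together with $\cos\vtmax = 1-2/\tau^2$ and the addition formula for $\cos(\vtmax-\gamma)$, the inequality reduces to a polynomial inequality in $\tau^2$ and $u_\gamma := 1-\cos\gamma \in (0,2/\tau^2]$. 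A case analysis over the intervals $\tau \in (\tau_m,\tau_{m+1}]$ (across which $m_0 = m+1$ is constant and $\gamma$ sweeps $(0,\vtmax]$) then identifies the worst-case configuration and pins the critical threshold to exactly $\tau^2 = 3+\sqrt{10}$.

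For chains of length $k > m_0+1$ the very same 4-point subset $\{x_1,x_2,x_{m_0},x_{m_0+1}\}$ still lies in $\supp\rho$, so the identical obstruction rules out every such chain and no separate argument is required. I expect the main obstacle to be the polynomial-inequality case analysis leading to the precise threshold $\sqrt{3+\sqrt{10}}$, as opposed to a weaker sufficient bound (the concavity threshold $\sqrt{6}$ from \eqref{Dconc} is plainly not enough); equivalently, by the relation of $\det L'$ to the values $\Smin_{m_0+1}$ and $\Smin_{m_0}$ computed from the Euler--Lagrange equations, one may reinterpret the same threshold as the $\tau$ above which $\Smin_{m_0+1} > \Smin_{m_0}$ strictly.
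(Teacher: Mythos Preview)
Your approach is correct and close in spirit to the paper's, but the paper takes a cleaner three–point route. Instead of your four points $\{x_1,x_2,x_{m_0},x_{m_0+1}\}$, the paper uses only $\{x_1,x_{m_0+1},x_2\}$. In your notation $a=\L(0)$, $b=\L(\gamma)$, $c=\L(\vtmax-\gamma)$, the corresponding $3\times 3$ Gram matrix is
\[
\begin{pmatrix} a & c & 0 \\ c & a & b \\ 0 & b & a \end{pmatrix},
\]
with determinant $a(a^2-b^2-c^2)$. Thus the paper reduces everything to the single inequality
\[
\L(\gamma)^2+\L(\vtmax-\gamma)^2>\L(0)^2 \qquad \text{for all }\gamma\in(0,\vtmax),
\]
which it verifies by an elementary calculation and which is precisely what produces the threshold $\tau>\sqrt{3+\sqrt{10}}$. (The degenerate case $\tau=\tau_{m_0}$, i.e.\ $\gamma=\vtmax$, is dismissed separately since then $x_{m_0+1}=x_1$ and no chain of length $m_0+1$ exists; you should note this too, as your stated range $\gamma\in(0,\vtmax]$ is slightly off.)

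Your $4\times 4$ computation is correct, and your reduced inequality $a^2-b^2<ac$, i.e.\ $\L(\gamma)^2+\L(0)\,\L(\vtmax-\gamma)>\L(0)^2$, is strictly \emph{weaker} than the paper's (since $\L(0)\geq\L(\vtmax-\gamma)$). Hence your route does succeed for $\tau>\sqrt{3+\sqrt{10}}$ once one grants the paper's inequality. Two remarks, however. First, you have not actually carried out the verification; ``a case analysis over the intervals $(\tau_m,\tau_{m+1}]$'' is a plan, not a proof, and this is the one substantive gap. Second, because your inequality is strictly weaker, the threshold it yields may well be \emph{below} $\sqrt{3+\sqrt{10}}$; your assertion that the case analysis ``pins the critical threshold to exactly $\tau^2=3+\sqrt{10}$'' is unjustified for the four–point Gram matrix. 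The three–point choice is what makes $\sqrt{3+\sqrt{10}}$ appear naturally and keeps the algebra to a minimum.
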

\begin{proof}
For all $\gamma\in (0, \vartheta_{\max})$ an elementary calculation shows that
\begin{equation}\label{ineq_L}
 \Ll(\gamma)^2+\Ll(\vartheta_{\max}-\gamma)^2> \Ll(0)^2.
\end{equation} 
In the case~$\tau = \tau_{m_0}$ there is nothing to prove.
Thus we can assume that~$\tau\neq \tau_{m_0}$. For a chain~$x_1, \ldots, x_k$ with~$k>m_0$,
the Gram matrix corresponding to the points $x_1,x_{m_0+1},x_2$ has the form
\beq \label{Gramm}
\bpm \L(0) & \L(\vartheta_{\max}-\gamma) & 0 \\ \L(\vartheta_{\max}-\gamma) & \L(0) & \L(\gamma) \\ 0 & \L(\gamma) & \L(0) \epm .
\eeq
Using~\eqref{ineq_L}, its determinant is negative, in contradiction to Corollary~\ref{matrix}.
\end{proof}

From the last two lemmas we conclude that every minimizer~$\rho$
is supported on one chain of length at most~$m_0$.
If we parametrize the points as in~\eqref{x_chain}, the only contributions to the action come
from~$\Ll(x_l, x_l)$ and~$\Ll(x_1,x_{m_0})$. Using Lagrange multipliers,
the optimal weights $\rho_i=\rho(x_i)$ are calculated to be
\[ \rho_1=\rho_{m_0}=\frac{\lambda}{\Ll(0)+\Ll(\gamma)}\qquad \text{and}
 \qquad \rho_i=\frac{\lambda}{\Ll(0)}
\quad \text{for~$i=2,\ldots,m_0-1$}\:, \]
where we set
\begin{equation*}
 \lambda=\frac{\Ll(0)\: (\Ll(0)+\Ll(\gamma))}{(m_0-2)(\Ll(0)+\Ll(\gamma))+2\Ll(0)}\:.
\end{equation*} 
The corresponding action is computed to be~$ \Ss[\rho]=\lambda$, giving the formula 
in~\eqref{Svalue}. Using this explicit
value of the action, we obtain the following result.

\begin{Lemma} \label{lemmaS11}
Suppose that~$\rho$ is a minimizing measure supported on a chain.
Then the length of this chain is at least~$m_0$.
\end{Lemma}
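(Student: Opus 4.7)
The plan is to argue by contradiction: suppose the chain $\{x_1, \ldots, x_k\} = \supp \rho$ has length $k \leq m_0 - 1$, and deduce $\Ss[\rho] > \Ss_{m_0}$, where $\Ss_{m_0}$ denotes the action of the explicit $m_0$-chain with the optimal weights computed just above the lemma. Since $\Ss_{m_0}\geq \Smin$ by minimality, this contradicts $\Ss[\rho] = \Smin$ and forces $k \geq m_0$.

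The first step is to exploit the geometry of short chains. Since $k \leq m_0 - 1 < 2\pi/\vtmax$ (the second inequality from \eqref{m0}), a chain of length $k$ on $S^1$ is determined up to rotation by $x_i = e^{i(i-1)\vtmax}$ for $i=1,\ldots,k$; in particular, every pair with $|i-j|\geq 2$ has short-arc separation $\min(|i-j|\vtmax,\,2\pi-|i-j|\vtmax) > \vtmax$ and is therefore strictly spacelike, so $\Ll(x_i,x_j) = 0$. Combined with $\Ll(x_i, x_{i+1}) = 0$ from Definition~\ref{defchain}, only diagonal terms contribute and
\[ \Ss[\rho] = \Ll(0)\sum_{i=1}^k \rho_i^2 \;\geq\; \frac{\Ll(0)}{k} \;\geq\; \frac{\Ll(0)}{m_0-1}, \]
using Cauchy--Schwarz with the constraint $\sum_i \rho_i = 1$.

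The second step is to invoke \eqref{Svalue}, which, with $\gamma := 2\pi - (m_0-1)\vtmax \in (0,\vtmax]$, gives
\[ \Ss_{m_0} \;=\; \frac{\Ll(0)\bigl(\Ll(0)+\Ll(\gamma)\bigr)}{m_0\,\Ll(0)+(m_0-2)\,\Ll(\gamma)}, \]
and to show that $\Ss_{m_0} < \Ll(0)/(m_0-1)$. A direct cross-multiplication reduces this inequality to $\Ll(\gamma) < \Ll(0)$, which holds strictly because $\gamma > 0$ (built into \eqref{m0}) and because $\Ll$ is strictly decreasing on $[0,\vtmax]$ (a short calculation differentiating \eqref{DS2}, using $\tau > \sqrt{2}$). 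Combining the two displays,
\[ \Smin = \Ss[\rho] \geq \frac{\Ll(0)}{m_0-1} > \Ss_{m_0} \geq \Smin, \]
which is the desired contradiction.

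The only delicate point is the strict monotonicity $\Ll(\gamma) < \Ll(0)$: were $\gamma = 0$ or $\Ll$ not strictly decreasing on $[0,\vtmax]$, the key comparison would fail. Both scenarios are ruled out --- the first by the definition \eqref{m0} of $m_0$, the second by the explicit form of $\D$ in \eqref{DS2} together with $\tau > \tau_d > \sqrt{2}$. Every other step is routine bookkeeping from the Lagrange-multiplier calculation already carried out in the paragraph immediately preceding the lemma.
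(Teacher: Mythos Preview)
Your proof is correct and follows essentially the same approach as the paper: for a chain of length $n<m_0$ only the diagonal terms $\Ll(x_i,x_i)=\Ll(0)$ contribute, giving action at least $\Ll(0)/(m_0-1)$, which exceeds the value in~\eqref{Svalue}. The paper leaves that final comparison as ``easily verified,'' whereas you carry it out explicitly by cross-multiplying and reducing to $\Ll(\gamma)<\Ll(0)$ via the strict monotonicity of $\Ll$ on $[0,\vtmax]$.
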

\begin{proof}
For a chain of length~$n<m_0$, the only contributions to the action come
from~$\Ll(x_l, x_l)$, $l=1, \ldots, n$. The corresponding optimal weights are computed
by~$\rho_i = 1/n$. The resulting action is
\[ \Ss = \sum_{i=1}^n \frac{1}{n^2} \Ll(x_i,x_i) = \frac{1}{n}\Ll(0)\:. \]
This is easily verified to be strictly larger than the value of the action in~\eqref{Svalue}.
\end{proof} \noindent
This completes the proof of Theorem~\ref{thmS1}.

We finally remark that if~$\tau$ lies in the interval~$(\sqrt{2}, \sqrt{3+\sqrt{10}})$ where
Theorem~\ref{thmS1} does not apply, the numerics show that the minimizing $\rho$ is again the measure supported on the chain of length $m_0$, with one exception:
If~$\tau$ is in the interval~$(1.61988, \tau_5)$ with~$\tau_5=\sqrt{2+\frac{2}{\sqrt{5}}}$,
a chain of length $m_0+1=6$ gives a lower action than the chain of length $5$. In this case,
the Gram matrix~\eqref{Gramm} is indeed positive definite, so that the argument in Lemma~\ref{lemmaS11}
fails.

\section{The Variational Principles on the Sphere} \label{sec6}
We now come to the analysis of the variational principles on the sphere (see Example~(a) on page 4).
Applying Theorem~\ref{thmmain1}~(A) with the curve~$c$ chosen as the grand circle joining~$x$ and~$y$,
we immediately obtain that every minimizing measure~$\rho$ on~$S^2$ is either generically timelike
or else~$\overset{\circ}{\supp \rho} = \varnothing$.
The numerics in Section~\ref{sec3} indicated that these two cases are separated by
a ``phase transition'' at~$\tau=\tau_c = \sqrt{2}$. We will now prove that this phase transition
really occurs. Moreover, we will develop methods for estimating 
the minimal action from above and below. Many of these methods apply just as well to
the general setting introduced in Section~\ref{secmath} (see~\eqref{Ddef}--\eqref{var}).

\subsection{Generically Timelike Minimizers}
We first decompose~$\D$ in spherical harmonics. A short calculation yields
in analogy to~\eqref{Dspec} the decomposition
\[ \D(x,y)=\nu_0+4\pi  \sum_{l=1}^{2}\nu_l \sum_{m=-l}^l Y_l^m(x)\, \overline{Y_l^m(y)} \:, \]
where the eigenvalues are given by
\beq \label{nu0}
\nu_0= 4\,\tau^2-\frac{4}{3}\:\tau^4 \:,\qquad\nu_1=\frac{4}{3}\:\tau^2 \:,\qquad
\nu_2=\frac{4}{15}\:\tau^4\:.
\eeq
In particular, the operator $\D_\mu$ is positive if $\tau\leq\sqrt{3}$.

If~$\tau \leq \tau_c$, there is a large family of minimizers, as we now discuss.
The simplest example is the {\em{octahedron}}: 
Denoting the unit vectors in $\R^3$ by~$e_1,e_2,e_3$, we
consider the measure $\rho$ supported at $\pm e_i$ with equal weights $\tfrac{1}{6}$.
Obviously, the condition~(i) in Definition~\ref{defgtl} is satisfied. Moreover,
for any $x\in S^2$ one calculates 
\begin{align*}
 \dd(x)&=\frac{1}{6}\sum_{y\in \supp\rho} 2\tau^2\left(2+2 \langle x,y\rangle -\tau^2+\tau^2 \langle x,y\rangle^2\right)=\\
&=\frac{1}{3}\tau^2 \left(12 -6\tau^2+2\tau^2( x_1^2+x_2^2+x_3^2)\right)=\nu_0 \:.
\end{align*}
Thus Proposition~\ref{est_nu} yields that~$\rho$ is a
generically timelike minimizer. Moreover, from Proposition~\ref{est_nu} we conclude that
every minimizer is generically timelike.
If conversely~$\tau> \tau_c$, the condition of antipodal points is fulfilled,
and thus Proposition~\ref{Prpgt} shows that no generically timelike minimizers exist.
We have thus proved the following result:
\begin{Corollary} \label{cor61}
If~$\tau \leq \tau_c$, every minimizing measure~$\rho$ on~$S^2$ is generically timelike,
and the minimal action is equal to~$\nu_0$ as given by~\eqref{nu0}.
If conversely~$\tau > \tau_c$, every minimizing measure~$\rho$ is not generically timelike
and~$\overset{\circ}{\supp \rho} = \varnothing$.
\end{Corollary}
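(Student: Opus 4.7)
My plan is to combine the dichotomy already in hand for $S^2$ with two independent arguments tailored to the ranges $\tau \leq \tau_c$ and $\tau > \tau_c$. The dichotomy, namely that every minimizer is either generically timelike or has $\overset{\circ}{\supp \rho} = \varnothing$, follows from Theorem~\ref{thmmain1}(A) applied with $\F = S^2$ and $c$ chosen as a great-circle arc from $x$ through $y \in \K(x)$: the restriction $\D(\cdot, y)$ is real analytic along~$c$ and changes sign at~$x$, so the transversality condition~\eqref{gradient} holds. What remains is to decide, for each value of~$\tau$, which of the two alternatives actually occurs.

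For $\tau \leq \tau_c$, I would exhibit the octahedron measure $\rho = \tfrac{1}{6}\sum_{i=1}^{3}(\delta_{e_i} + \delta_{-e_i})$ as an explicit generically timelike minimizer saturating the spectral lower bound $\Smin \geq \nu_0$ from Proposition~\ref{est_nu}, and then apply the equality statement of that proposition to conclude that \emph{every} minimizer is generically timelike. This requires three checks. First, condition~(i) of Definition~\ref{defgtl}: distinct points of $\supp \rho$ have angular separation $\pi/2$ or $\pi$, and since $\vartheta_{\max} = \arccos(1-2/\tau^2) \geq \pi/2$ for $\tau \leq \tau_c$, formula~\eqref{DS2} gives $\D \geq 0$ on all such pairs (with equality at the antipodes because of the factor $(1+\langle x,y\rangle)$). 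Second, condition~(ii): expanding $\D(x,\pm e_i)$ via~\eqref{DS2} and using $x_1^2+x_2^2+x_3^2=1$ yields $\dd \equiv \nu_0$ on $S^2$, hence $\Sact[\rho] = \nu_0$ by Lemma~\ref{lemma48}. Third, positivity of $\D_\mu$: the eigenvalues in~\eqref{nu0} are all nonnegative as soon as $\tau \leq \sqrt{3}$, which is certainly the case here.

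For $\tau > \tau_c$, my plan is to verify the condition of antipodal points in Proposition~\ref{Prpgt}(III), which directly rules out generically timelike minimizers; combined with the dichotomy above this forces $\overset{\circ}{\supp \rho} = \varnothing$. The verification is immediate from the causal structure spelled out in Example~(a): for $\tau > \sqrt{2}$ one has $\vartheta_{\max} < \pi/2$, so $\overline{\I(x)}$ is a spherical cap of angular radius strictly less than $\pi/2$ around~$x$, disjoint from $\overline{\I(-x)}$, while $\J(x) = \overline{\I(x)} \cup \{-x\}$ follows from the factorization of~\eqref{DS2}.

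The genuine content is already packaged in Propositions~\ref{est_nu} and~\ref{Prpgt}, so the remaining work is essentially the bookkeeping above. The only mildly delicate point is the borderline $\tau = \tau_c$, where $\vartheta_{\max} = \pi/2$: adjacent octahedron vertices are then lightlike rather than timelike and antipodal pairs have $\D = 0$, so condition~(i) holds only with equality on the support of~$\rho$. This is harmless for the argument since generically timelike requires only $\D \geq 0$ on $\supp \rho$, but it is the one place where the choice of the octahedron as test configuration is tight and where I would be most careful.
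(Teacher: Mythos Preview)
Your argument is correct and follows the paper's proof essentially line for line: the octahedron together with Proposition~\ref{est_nu} for $\tau \leq \tau_c$, the antipodal-points condition of Proposition~\ref{Prpgt}(III) for $\tau > \tau_c$, combined with the dichotomy from Theorem~\ref{thmmain1}(A) along great circles. One tiny slip worth fixing: do not invoke Lemma~\ref{lemma48} to obtain $\Sact[\rho]=\nu_0$ for the octahedron, since that lemma already presupposes $\rho$ is a generically timelike minimizer; instead, having checked $\D\geq 0$ on $\supp\rho\times\supp\rho$ you get $\L=\D$ there, so $\Sact[\rho]=\iint \D\,d\rho\,d\rho=\int \dd\,d\rho=\nu_0$ directly.
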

%

Using Corollary~\ref{cor412}, one can also construct minimizers which are not discrete,
as is illustrated by the following example.
\begin{Example} \label{ex61}
{\em{ We introduce the function~$f\in L^2(S^2)$ by
 \[ f(\vartheta,\varphi)=
\begin{cases}
 \frac{5}{3} & \text{if } \vartheta \in [0,\arccos(0.8)] , \\[0.1em]
 \frac{35}{9} & \text{if } \vartheta \in [\arccos(0.4),\arccos(0.2)] \\[0.1em]
 \frac{40}{9} & \text{if } \vartheta \in [\arccos(-0.5),\arccos(-0.7)],\\[0.1em]
 0 & \text{otherwise.}
\end{cases} \]
Then if $\tau<1.00157$, a straightforward calculation shows that~$f$
has the properties (a) and (b) of Corollary~\ref{cor412}.
Thus the measure $d \rho=f d \mu$ is a minimizing generically timelike measure with $\overset{\circ}{\supp \rho}\neq \varnothing$. }} \QEDrem
\end{Example}

\subsection{Estimates of the Action}
As not even the solution of the Tammes problem is explicitly known, we cannot expect to
find explicit minimizers for general~$\tau$. Therefore, we need good estimates of the action
from above and below. We now explain different methods for getting estimates, which are
all shown in Figure~\ref{figestimates}.
\begin{figure}
 \centering
 \includegraphics[width=12cm]{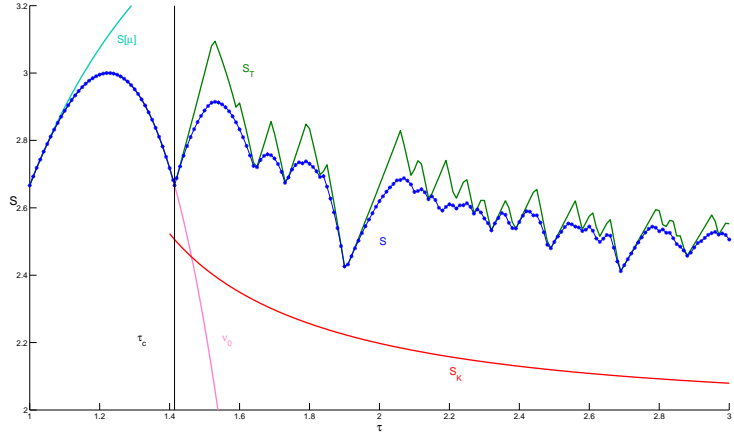}
\caption{Estimates of the action on~$S^2$: Upper bounds obtained from the volume measure~$S[\mu]$
and from the Tammes distribution~$S_T$, lower bounds by~$\nu_0$ and by the heat kernel
estimate~$S_K$.}
\label{figestimates}
 \end{figure} 

Estimates from above can be obtained simply by computing the action for suitable test measures.
For example, the action of the normalized volume measure is
\[ \Ss[\mu]= \frac{1}{4\pi} \int_0^{2\pi} d \varphi \;\int_0^{\vartheta_{\max}} d\vartheta\;\sin\vartheta\; \D (\vartheta)=4-\frac{4}{3\tau^2}\geq \Smin\: . \]
As one sees in Figure~\ref{figestimates}, this estimate is good if~$\tau$ is close to one.
Another example is to take the measure supported at the Tammes distribution for~$K$ points,
with equal weights. We denote the corresponding action by~$\Ss_T^K$. We then obtain the estimate
\[ \Smin \leq \Ss_T := \min_{K} \Ss_T^K\:. \] 
One method is to compute~$\Ss_T$ numerically using the tables in~\cite{sloane}.
This gives quite good results (see Figure~\ref{figestimates}), with the obvious disadvantage that
the estimate is not given in closed form. Moreover, the Tammes distribution is useful for
analyzing the asymptotics for large~$\tau$. To this end, for every Tammes-distribution $X_K$
we introduce~$\tau_K$ as the minimal value of~$\tau$ for which all distinct points in $X_K$ are 
spacelike separated. In analogy to~\eqref{taumS1}, the value of~$\tau_K$ is  given by
\[ \tau_K=\sqrt{\frac{2}{1-\cos(\vartheta_K) }} \:, \]
where~$\vartheta_K$ now denotes the minimal angle between the points of the Tammes distribution,
$$\vartheta_K=\max_{x_1,\ldots,x_K\in S^2} \min_{i\neq j}\; \arccos(\langle x_i,x_j\rangle) \:. $$
Using an estimate by W.\ Habicht and B.L.\ van der Waerden for the solution $\vartheta_K$
(see~\cite[page 6]{saff+kuijlaars}), we obtain
\[ 4 \left(\bigg (\frac{8 \pi}{\sqrt{3} K}\right)^{1/2} -\frac{C}{K^{2/3}} \bigg)^{-2}
\;\geq\; \tau_K^2 \geq 4 \,\frac{\sqrt{3} K}{8 \pi} \]
for some constant~$C>0$.
For given $\tau>1$ we choose $K\in \N$ such that $\tau_{K-1}\leq \tau<\tau_K$. Then 
\begin{align*}
\Smin \leq \Ss_T^{K-1} =
\frac{8\tau^2}{K-1}<\frac{8\tau^2_{K}}{K-1}\leq 32 \:\frac{K}{K-1}\bigg( \bigg(\frac{8 \pi}{\sqrt{3} }\bigg)^{1/2} -\frac{C}{K^{1/6}}\bigg)^{-2} .
\end{align*}
In the limit~$\tau \rightarrow \infty$, we know that~$K \rightarrow \infty$, and thus
\[ \limsup_{\tau \rightarrow \infty} \Smin \leq \frac{4\sqrt{3}}{\pi}\:.  \]

Constructing a lower bound is more difficult. From~\eqref{nu0} it is obvious that
the operator $\D_\mu$ is positive if $\tau \leq \sqrt{3}$.
Thus we can apply Proposition~\ref{est_nu} to obtain
$$\Smin\geq\nu_0 \qquad \text{if~$\tau \leq \sqrt{3}$}\:. $$
If~$\tau \leq \sqrt{2}$, this lower bound is even equal to~$\Smin$ according to
Corollary~\ref{cor61}. As shown in Figure~\ref{figestimates}, the estimate
is no longer optimal if~$\tau > \sqrt{2}$.

Another method for obtaining lower bounds is based on the following observation:
\begin{Prp}\label{prp_k_gen}
Assume that~$K_\mu$ is an integral operator on~$\Hil_\mu$ with integral
kernel~$K\in C^0(S^2\times S^2,\R)$ with the following properties:
\begin{itemize}
\item[(a)] $K(x,y)\leq \L(x,y)$ for all $x,y\in S^2$. \\[-1em]
\item[(b)] The operator~$K_\mu$ is positive.
\end{itemize}
Then the minimal action satisfies the estimate
\[ \Smin\geq  \iint_{S^2\times S^2} K(x,y) \:d \mu(x) \,d \mu(y)\:. \]
\end{Prp}
\Proof For any~$\rho \in \M$, our assumption~(a) gives rise to the estimate
\[ \Ss[\rho] =\iint_{S^2\times S^2} \L(x,y) \,d \rho(x)\, d \rho(y)
\geq \iint_{S^2\times S^2} K(x,y) \,d \rho(x)\, d \rho(y) \:. \]
Next, using property~(b), we can apply Proposition~\ref{prp_hom_min} to conclude that
the volume measure~$\mu$ is a minimizer of the variational principle corresponding to~$K$, i.e.
\[ \iint_{S^2\times S^2} K(x,y)\: d\rho(x)\, d\rho(y) \geq 
\iint_{S^2\times S^2} K(x,y)\: d\mu(x)\, d\mu(y) \:. \]
Combining these inequalities gives the result.
\QED
In order to construct a suitable kernel, we first consider the heat kernel~$h_t$ on $S^2$,  
\[ h_t(x,y) = \left( e^{t \Delta_{S^2}} \right)(x,y) =
4\pi \sum_{l=0}^{\infty} e^{-t\,l(l+1)} \sum_{m=-l}^l Y_l^m(x) \,\overline{Y_l^m(y)}\:.  \]
The heat kernel has the advantage that condition~(b) is satisfied, but condition~(a) is violated.
This leads us to choosing~$K$ as the difference of two heat kernels,
\[ K(x,y)=\lambda \,\big( h_{t_1}(x,y)-\delta h_{t_2}(x,y) \big) \:. \]
For given~$t_1<t_2$, we choose~$\delta$ and~$\lambda$ such that~$K(x,x)=1$ and~$K(\vtmax)=0$, i.e.
\[ \delta=\frac{h_{t_1}(\vartheta_{\max})}{h_{t_2}(\vartheta_{\max})} < 1
\qquad \text{and} \qquad
\lambda=\frac{\L(0)}{h_{t_1}(0)-\delta \:h_{t_2}(0)} > 0 \:. \]
By direct inspection, one verifies that condition~(a) is satisfied (see Figure~\ref{plot_L_K}
for a typical example).
\begin{figure}
\centering
\includegraphics[width=7cm]{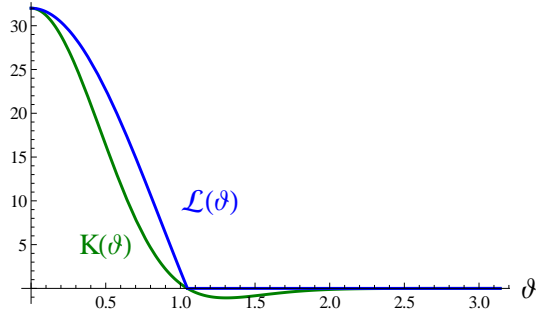}
\caption{The Lagrangian~$\L$ and the function~$K$ in the heat kernel estimate for~$\tau=2$.}
\label{plot_L_K}
\end{figure}
The eigenvalues of the operator~$K_\mu$ are computed to be
\[ \lambda\, (e^{-t_1\,l (l+1)}-\delta \,e^{-t_2\, l (l+1)}) \:, \]
showing that the operator $K_\mu$ is indeed positive.
Thus we can apply Proposition~\ref{prp_k_gen}. Using that
\[ \iint_{S^2 \times S^2} 
h_t(x,y)\: d\mu(x)\: d\mu(y) = 
\iint_{S^2 \times S^2} 4\pi \:Y_0^0(x) \overline{Y_0^0(y)}\: d\mu(x)\: d\mu(y) = 1 \:, \]
we obtain the {\bf{heat kernel estimate}}
$$ \Smin \geq S_K = \lambda\, (1-\delta) \:. $$
In this estimate, we are still free to choose the parameters~$t_1$ and~$t_2$.
By adjusting these parameters, one gets the lower bound shown in Figure~\ref{figestimates}.
Thus the heat kernel estimate differs from the minimal action only by an error of about $20 \%$,
and describes the qualitative dependence on~$\tau$ quite well. But of course, it does not
take into account the discreteness of the minimizers.

\section{The Variational Principles on the Flag Manifold~$\F^{1,2}(\C^f)$} \label{sec7}
We finally make a few comments on the variational principles on the flag manifold~$\F^{1,2}(\C^f)$
in the case~$f >2$ (see Example (c) on page~\pageref{Bcircle}). We first apply our main
Theorems~\ref{thmmain1} and~\ref{thmmain2} to obtain the following general result.

\begin{Thm}  \label{corfm} Every minimizer~$\rho$ on~$\F^{1,2}$ is either generically timelike
or~$\overset{\circ}{\supp \rho} = \varnothing$.
\end{Thm}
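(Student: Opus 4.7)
The plan is to apply Theorem~\ref{thmmain2}(A) (and, where the gradient is non-zero, also Theorem~\ref{thmmain1}(A)) to the flag manifold $\F=\F^{1,2}(\C^f)$ with the real analytic kernel $\D$ of~\eqref{Dsimple}. The basic hypotheses of both theorems are immediate: $\F^{1,2}$ is a compact real analytic manifold (in fact a complex projective variety); $\D$ is a polynomial in the matrix entries of $x$ and $y$, hence real analytic; it is symmetric by cyclicity of the trace; and it is strictly positive on the diagonal, since the eigenvalue characterization yields
$\D(x,x) = \frac{1}{2}\bigl((1+\tau)^2-(1-\tau)^2\bigr)^2 = 8\tau^2 > 0.$

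The key ingredient is the transitive action of $\U(f)$ on $\F$ by conjugation, which preserves $\D$. For any Hermitian matrix $A$ the orbit $c_A(t) = e^{itA}\,x\,e^{-itA}$ is a real analytic curve through $x$, and the $\U(f)$-invariance of $\D$ yields
\[
\D(c_A(t),c_A(t')) = \D\bigl(e^{i(t-t')A}\,x\,e^{-i(t-t')A},\,x\bigr) =: f_A(t-t'),
\]
so $\D$ is locally translation symmetric at $x$ with respect to $c_A$, in the sense of the definition preceding Theorem~\ref{thmmain2}. Transitivity implies that the linear map $A\mapsto \dot c_A(0)=i[A,x]$ from Hermitian matrices onto $T_x\F$ is surjective, so one-parameter subgroup orbits realize every tangent direction at $x$.

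It remains to produce, for every $x\in\F$ and every $y\in\K(x)$, a Hermitian $A$ such that $\D(c_A(t),y)$ changes sign at $t=0$. Using invariance to rewrite this as $\D(c_A(t),y)=\D(x,e^{-itA}\,y\,e^{itA})$, and using the symmetry of $\D$ to turn $y\in\K(x)$ into $x\in\K(y)$, the task becomes: find a one-parameter subgroup orbit through $y$ along which the real analytic function $\D(x,\cdot)$ changes sign at $y$. In the generic case, when the differential of $\D(x,\cdot)$ at $y$ does not vanish, any Hermitian $A$ for which $i[A,y]$ is not in the kernel of this differential produces a first-order sign change, verifying the transversality hypothesis~\eqref{gradient} of Theorem~\ref{thmmain1}; the curve $c_A$ can then be extended smoothly to join $x$ and $y$ and the theorem applies.

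The main obstacle is the possibly degenerate case when $y$ is a critical point of $\D(x,\cdot)$. Real analyticity of $\D(x,\cdot)$, combined with the presence of both signs in every neighborhood of $y$, guarantees via the curve selection lemma that some analytic curve witnesses the sign change at $y$, but this curve need not be a one-parameter subgroup orbit --- a straight-line direction in the tangent space need not detect the sign change, as already happens for $u^2-v^2$ at the origin. One therefore has to analyze higher-order $t$-Taylor coefficients of $\D(x,e^{-itA}\,y\,e^{itA})$ at $t=0$ and exploit the explicit polynomial form~\eqref{Dsimple} of $\D$ together with $\U(f)$-equivariance at $y$ to find an $A$ for which the first non-vanishing coefficient is of odd order. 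Once such an $A$ is produced, the analytic curve $c_A$ has local translation symmetry and the requisite sign change, so Theorem~\ref{thmmain2}(A) delivers the claimed dichotomy.
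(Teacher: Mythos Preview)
Your strategic observation is genuinely useful: one-parameter subgroup orbits $c_A(t)=e^{itA}xe^{-itA}$ are automatically locally translation symmetric at $x$ by $\U(f)$-invariance, and via the rewriting $\D(c_A(t),y)=\D(x,e^{-itA}ye^{itA})$ the task becomes finding a direction at $y$ along which $\D(x,\cdot)$ changes sign. This cleanly explains why Theorem~\ref{thmmain2}(A) is the right tool and handles the transverse case immediately.

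However, your proposal stops precisely where the actual work begins. You yourself note the obstacle: when $y$ is a critical point of $\D(x,\cdot)$, a sign change in a neighborhood of $y$ does \emph{not} guarantee a sign change along any one-parameter subgroup orbit. Pulling back to the Lie algebra, you are asking whether a real analytic function on $\R^n$ that changes sign near the origin must do so along some straight line through the origin, and the answer is no (e.g.\ $u^2-v^4$). So the existence of the required $A$ is not a formality; one must exploit the explicit form of $\D$. You assert that ``one therefore has to analyze higher-order Taylor coefficients\ldots to find an $A$ for which the first non-vanishing coefficient is of odd order,'' but you do not carry this out. This is the substantive gap.

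The paper's proof fills exactly this gap by a case analysis on the rank of $\pi_I y\pi_I$ (where $I=\text{Im}\,x$). In the rank-two case an explicit computation shows $\D$ has only transverse zeros, so Theorem~\ref{thmmain1}(A) applies. In the degenerate rank-one and rank-zero cases the paper constructs, by hand, explicit analytic curves (involving carefully tuned parameters $\alpha,\beta,\gamma$) along which the leading Taylor term is odd, so that Theorem~\ref{thmmain2}(A) applies. The construction is not conceptual; it proceeds by writing $\D$ in coordinates adapted to $x$ and $y$ and adjusting free parameters to force an odd-order leading coefficient. Your framework would streamline the transverse case, but for the degenerate cases you would still need an argument of comparable specificity.
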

\Proof As a homogeneous space, the flag manifold~$\F^{1,2}(\C^f)$ has a real analytic structure
(see~\cite[Chapter II, \S4]{helgason2}). Then the function $\D$ is obviously real analytic. Moreover,
it is symmetric and constant on the diagonal. In order to apply Theorem~\ref{thmmain1},
for given~$y \in \K(x)$ we must find a curve~$c$ joining~$x$ and~$y$ which satisfies~\eqref{gradient}.
Alternatively, in order to apply Theorem~\ref{thmmain2}, our task is to construct a curve~$c(t)$
with~$c(0)=x$ and~$c(1)=y$ which is analytic in a neighborhood of~$t=0$, such that the
function~$\D(c(t), y)$ changes sign at~$t=0$.

We denote the range of~$x$ by~$I \subset \C^f$ and the orthogonal projection to~$I$ by~$\pi_I$.
Choosing an orthonormal basis~$(e_1, e_2)$ of~$I$, the matrix~$x|_I$ can be represented in terms of
Pauli matrices by
\[ x|_I = \1 + \tau \, \vec{u} \vec{\sigma} \qquad \text{with~$\vec{u} \in S^2$} \:. \]
Similarly, the operator~$\tilde{y} := \pi_I y \pi_I$ has the representation
\[ \tilde{y}|_I = \rho \1 + \kappa \, \vec{v} \vec{\sigma} \qquad \text{with~$\vec{v} \in S^2$}\:, \]
where the real parameters~$\rho$ and~$\kappa$ satisfy the inequalities
\[ 1-\tau \leq \rho-\kappa \leq 0 \leq \rho+\kappa \leq 1+\tau\:. \]
Using~\eqref{Dsimple}, the function~$\D$ is computed by
\[ \D(x,y) = 2 \,\Big( (\rho \tau + \kappa \cos \vartheta)^2 - \kappa^2 \, (\tau^2 -1)\: \sin^2 \vartheta \Big) , \]
where~$\vartheta$ denotes the angle between~$\vec{u}$ and~$\vec{v}$.
The operator~$\tilde{y}$ has rank two if and only if~$\kappa > |\rho|$.
A short calculation shows that in this case, $\D$ has only transverse zeros. Thus we can
choose a direction~$\dot{c}(0)$ where the condition~\eqref{gradient} is satisfied. Choosing a smooth
curve starting in this direction which joins~$x$ and~$y$, we can apply Theorem~\ref{thmmain1}~(A) to
conclude the proof in this case.

It remains to consider the situation when~$\tilde{y}$ has rank at most one. This leads us to
several cases. We begin with the case when~$y|_{I}$ vanishes. In this case,
we may restrict attention to
the four-dimensional subspace $U=\Im \,x \oplus \Im \,y$. In a suitable basis~$(e_1, \ldots, e_4)$ of
this subspace, the operators~$x$ and~$y$ have the matrix representations
\[ x=\bpm 1 & 0 \\ 0 & 0 \epm \otimes (\1+\tau \, \vec{u} \vec{\sigma} ) \:,\qquad
y=\bpm 0 & 0 \\ 0 & 1\epm \otimes (\1+\tau\; \vec{v} \vec{\sigma} ) \:, \]
where again~$\vec{u}, \vec{v} \in S^2$. A unitary transformation of the basis vectors~$e_1$
and~$e_2$ describes a rotation of the vector~$\vec{u}$ in~$\R^3$.
By a suitable transformation of this type, we can arrange that the angle between~$\vec{u}$ and~$\vec{v}$
equals~$\vtmax$ (see~\eqref{thetamax}).
We now define the curve $c:[0,\pi]\rightarrow \F^{1,2}$ by
\beq \label{cform}
c(t)=\bpm \cos(t)^2 & \sin(t) \cos(t) \\ \sin(t) \cos(t) & \sin(t)^2 \epm \otimes (\1+\tau \,\vec{w}(t)
\, \vec{\sigma}),
\eeq
where $\vec{w} :[0,\pi]\rightarrow S^2$ is the geodesic on $S^2$ with $\vec{w}(0)=\vec{u}$
and $\vec{w}(\pi)=\vec{v}$.
The curve $c$ is a real analytic function with $c(0)=x$ and $c(\pi)=y$, which is obviously
translation symmetric. Furthermore, one computes
\[ \D(c(t),y)=\sin(t)^4 \:\D_{S^2} (\vec{w}(t), \vec{v}), \]
where $\D_{S^2}$ is the corresponding function on the unit sphere~\eqref{DS2}.
As~$\D_{S^2}(\vartheta)$ changes sign at $\vtmax$, the function $\D(c(t),y)$ changes sign at $t=0$. Thus Theorem~\ref{thmmain2}~(A) applies, completing the proof in the case~$y|_{I}=0$.

We next consider the case that~$\tilde{y}$ has rank one. We choose the basis~$(e_1, e_2)$
of~$I$ such that~$\tilde{y}$ is diagonal,
\[ \tilde{y} = \bpm a & 0 \\ 0 & 0 \epm \qquad \text{with~$a \neq 0$} . \]
An elementary consideration shows that we can extend the basis of~$I$ to an orthonormal
system~$(e_1, e_2, e_3)$ such that on the subspace~$J := \langle \{e_1, e_2, e_3\} \rangle$,
the operator~$\hat{y} := \pi_J y \pi_J$ has the form
\beq \label{y3form}
\hat{y} |_{\langle \{e_1, e_2, e_3\} \rangle} = \bpm a & 0 & \overline{b} \\
0 & 0 & 0 \\
b & 0 & c \epm \qquad \text{with~$a \neq 0$ and~$ac \neq |b|^2$} \:.
\eeq
We let~$U$ be the unitary transformation
\[ U(t)|_J  = \bpm 1 & 0 & 0 \\
0 & \cos t & \sin t \\ 0 & -\sin t & \cos t
\epm  \quad \text{and} \quad
U(t)|_{J^\perp} = \1\:. \]
Setting~$y(t) = U(t) \,y\, U(t)^{-1}$, the matrix~$\tilde{y}$ becomes
\[ \tilde{y}(t) = \bpm 1 & 0 \\ 0 & \sin t \epm \left( \rho\,\1 + \kappa\, \vec{v} \vec{\sigma} \right)
\bpm 1 & 0 \\ 0 & \sin t \epm , \]
where~$\rho$ and~$\kappa$ are new parameters with
\beq \label{ineqs}
\kappa > |\rho| \qquad \text{and} \qquad \rho + \kappa \,v_3 = a \neq 0
\eeq
and~$\vec{v} \in S^2$ is again a unit vector.
The function~$\D$ is now computed by
\begin{align}
\D(x, y(t)) &= \frac{1}{2} \: \Tr \big( x|_I\, \tilde{y}(t) \big)^2 - 2 \det \big( x|_I\, \tilde{y}(t) \big) 
\nonumber \\
&= \frac{1}{2} \Big(  \Tr \big( x|_I\, \tilde{y}(t) \big)^2 - 4\: (\tau^2-1)\: (\kappa^2 - \rho^2)\,\sin^2 t \Big) .
\label{Dform}
\end{align}
In order to simplify the trace, we transform the phase of~$e_3$. This changes the phase of~$b$
in~\eqref{y3form}, thus describing a rotation of
the vector~$\vec{v}$ in the $(1,2)$-plane. This makes it possible to arrange that
the vectors~$(u_1, u_2)$ and~$(v_1, v_2)$ are orthogonal in~$\R^2$. We thus obtain
\[ \Tr \big( x|_I\, \tilde{y}(t) \big) = 
(1+\tau u_3) (\rho + \kappa v_3) + (1-\tau u_3)(\rho - \kappa v_3)\: \sin^2 t\:. \]
We now have two subcases:
\begin{itemize}
\item[(1)] $v_3\neq \pm 1$: 
We vary the vectors~$\vec{u}$ and~$\vec{v}$ as functions of~$t$ such that
the above orthogonality relations remain valid and
\[ u_3 = \cos(\vartheta + \alpha t) \:, \qquad v_3 = \cos(\varphi + \beta t) \]
with free ``velocities'' $\alpha$ and~$\beta$. Since~$\L(x, y)=0$ at~$t=0$, we know that
\beq \label{sinth}
\cos \vartheta = -\frac{1}{\tau} \:,\qquad \sin \vartheta = \frac{\sqrt{\tau^2-1}}{\tau} \neq 0 \:.
\eeq
A Taylor expansion yields
\begin{align}
\qquad\quad  \Tr \big( &x|_I\, \tilde{y}(t) \big) = -t \:\alpha \tau \: (\rho + \kappa v_3) \: \sin \vartheta \label{Trlin} \\
&+ \frac{t^2}{2} \Big( (4 + \alpha^2)\: \rho + (-4+\alpha^2) \kappa \cos \varphi
+ 2 \alpha \beta \kappa \tau\: \sin \vartheta \: \sin \varphi \Big)
+ {\mathcal{O}}(t^3)\:. \label{Trquad}
\end{align}
As the factor~$(\rho + \kappa v_3)$ is non-zero in view of~\eqref{ineqs}, the linear term~\eqref{Trlin} does not vanish whenever~$\alpha \neq 0$.
By suitably adjusting~$\alpha$, we can arrange that
the square of this linear term compensates the last term in~\eqref{Dform}
(which is also non-zero in view of our assumption~$\kappa>\rho$).
Next, we know from~\eqref{sinth} and our assumptions
that the term~$\sim \alpha \beta$ in~\eqref{Trquad} is non-zero. Thus by a suitable
choice of~$\beta$, we can give the quadratic term~\eqref{Trquad} any value we want.
Taking the square, in~\eqref{Dform} we get a contribution~$\sim t^3$.
Thus the function~$\D$ changes sign.
Transforming to a suitable basis where~$y$ is a fixed matrix, we obtain a curve~$x(t)$
which is locally translation symmetric. Extending this curve to a smooth curve~$c$
which joins the point~$y$, we can apply Theorem~\ref{thmmain2}~(A).
\item[(2)] $v_3 =\pm 1$: We know that the matrix~$\tilde{y}$
is diagonal,
\[ \tilde{y}(t) = \bpm \rho \pm \kappa & 0 \\ 0 & (\rho \mp \kappa) \sin^2 t \epm \:. \]
Now we keep~$v$ fixed, while we choose the curve~$u(t)$ to be a great circle which is inclined
to the~$(1,3)$-plane by an angle~$\gamma \neq 0$, i.e.
\[ u_3 = \cos (\vartheta + \alpha t)\, \cos \gamma \:. \]
Repeating the above calculation leading to~\eqref{Trlin} and~\eqref{Trquad}, one sees
that we again get a non-zero contribution to~$\D$ of the order~$\sim t^3$.
Thus~$\D$ again changes sign, making it possible to apply Theorem~\ref{thmmain2}~(A).
\end{itemize}

It remains to consider the case when~$\tilde{y}$ vanishes but~$y|_I \neq 0$.
A short consideration shows that~$y|_I$ cannot have rank two. Thus we
can choose the orthonormal basis~$(e_1, e_2)$ of~$I$ such that~$y e_1 \neq 0$
and~$y e_2 = 0$. By suitably extending this orthonormal system by~$e_3$ and~$e_4$,
we can arrange that the operator~$y$ is invariant
on the subspace~$\langle \{e_1, e_2, e_3, e_4 \} \rangle$ and has the matrix representation
\[ y|_{\langle \{e_1, e_2, e_3, e_4\} \rangle} = \bpm 0 & 0 & \overline{a} & 0 \\
0 & 0 & 0 & 0 \\
a & 0 & c & \overline{b} \\
0 & 0 & b & 0 \epm . \]
If~$b \neq 0$, we can again work with the curve~\eqref{cform}.
If on the other hand~$b =0$, the operator~$y$ is invariant on~$\langle \{e_1, e_2, e_3\} \rangle$
and has the canonical form
\[ y|_{\langle \{e_1, e_2, e_3\} \rangle} = \bpm 0 & 0 & \sqrt{\tau^2-1} \\
0 & 0 & 0 \\
\sqrt{\tau^2-1} & 0 & 2 \epm . \]
Transforming~$y$ by the unitary matrix
\[ V(\tau) \bpm e_1 \\ e_3 \epm = \bpm \cos \tau & \sin \tau \\ -\sin \tau & \cos \tau
\epm \bpm e_1 \\ e_3 \epm \:, \]
we can arrange that~$y$ is again of the form~\eqref{y3form}, but now with
coefficients depending on~$\tau$. Setting~$t=\tau^2$, we can again use the construction
after~\eqref{y3form}. This completes the proof.
\QED

For sufficiently large~$\tau$, we can rule out one of the cases in Theorem~\ref{corfm},
showing that the minimizing measures do have a singular support.
\begin{Thm} \label{thmfm} There are no generically timelike minimizers if
\[ \tau^2 > \frac{3 f + 2 \sqrt{3\:(f^2-1)}}{(2 + f)} \:. \]
\end{Thm}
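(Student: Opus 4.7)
My plan is to apply Proposition~\ref{Prpgt}~(I). First, I claim that $\D_\mu$ has only finitely many negative eigenvalues. Since $\D(x,y)=\Tr((xy)^2)-\tfrac{1}{2}(\Tr(xy))^2$ is polynomial of degree two in the matrix entries of each of~$x$ and~$y$, it may be written as a finite sum $\D(x,y)=\sum_\alpha u_\alpha(x)\,v_\alpha(y)$ of products of smooth functions on~$\F$. The resulting operator $\D_\mu$ therefore has finite rank, and only finitely many of its eigenvalues are nonzero. By Lemma~\ref{lemmaspec}, it then suffices to show $\nu_0\le 0$ under the hypothesis of the theorem, for then any hypothetical generically timelike minimizer would satisfy $\Smin=\nu_0\le 0$, contradicting $\Smin>0$ (established in the proof of Proposition~\ref{Prpgt}).

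To compute $\nu_0=\iint_{\F\times\F}\D\,d\mu\,d\mu$, I exploit that~$\mu$ is the normalized $\U(f)$-invariant measure, so the operator
\[ M:=\int_\F y\otimes y\,d\mu(y) \]
on~$\C^f\otimes\C^f$ commutes with $U\otimes U$ for every $U\in\U(f)$. By Schur-Weyl duality, the commutant of this action is spanned by $\1$ and the swap operator $P$, so $M=\alpha\1+\beta P$ for real scalars $\alpha,\beta$. These are determined by the trace conditions $\Tr M=\int(\Tr y)^2\,d\mu=4$ and $\Tr(PM)=\int\Tr(y^2)\,d\mu=2+2\tau^2$ (using that $\Tr(y)=2$ and $\Tr(y^2)=2+2\tau^2$ for every $y\in\F$, together with $\Tr\1=f^2$ and $\Tr P=f$). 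Solving the resulting $2\times 2$ linear system yields
\[ \alpha=\frac{2(2f-1-\tau^2)}{f(f^2-1)}\:, \qquad \beta=\frac{2(f(\tau^2+1)-2)}{f(f^2-1)}\:. \]

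Next, the identities $(\Tr(xy))^2=\Tr_{\C^f\otimes\C^f}((x\otimes x)(y\otimes y))$ and $\Tr((xy)^2)=\Tr_{\C^f\otimes\C^f}(P\,(x\otimes x)(y\otimes y))$ allow me to integrate first in~$y$ (producing~$M$) and then in~$x$, giving
\[ \iint(\Tr(xy))^2\,d\mu\,d\mu=4\alpha+(2+2\tau^2)\beta \:, \qquad \iint\Tr((xy)^2)\,d\mu\,d\mu=(2+2\tau^2)\alpha+4\beta \:. \]
Combining these,
\[ \nu_0 = 2\tau^2\alpha+(3-\tau^2)\beta = \frac{-2(f+2)\tau^4+12f\tau^2+6(f-2)}{f(f^2-1)}\:. \]
The inequality $\nu_0\le 0$ is then equivalent to the quadratic condition $(f+2)\tau^4-6f\tau^2-3(f-2)\ge 0$, whose discriminant equals $48(f^2-1)$; the positive root of the corresponding equation is precisely $\tau^2=\frac{3f+2\sqrt{3(f^2-1)}}{f+2}$, matching the threshold in the theorem.

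There is no real obstacle beyond the algebra; the only non-routine step is the Schur-Weyl reduction $M=\alpha\1+\beta P$, which collapses the otherwise unwieldy integral over the flag manifold to a two-variable linear system.
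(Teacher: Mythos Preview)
Your proof is correct and reaches the same value of~$\nu_0$ as the paper, but by a genuinely different and more streamlined route. The paper establishes finite rank of~$\D_\mu$ (Lemma~\ref{lemmafr}) by an explicit bra/ket construction showing $\D_\mu = K^* B K$ with~$K$ mapping into~$\C^{3f^4}$, whereas you observe directly that~$\D(x,y)$ is a polynomial of bidegree~$(2,2)$ in the matrix entries, which is both shorter and conceptually clearer. For the computation of~$\nu_0$, the paper fixes~$x$ in diagonal form and carries out the integral over the flag manifold by writing~$d\mu$ explicitly with delta functions and reducing to a multiple real integral; your Schur--Weyl reduction $M=\int y\otimes y\,d\mu=\alpha\1+\beta P$ collapses the whole calculation to a $2\times 2$ linear system and two trace identities. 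Both methods yield $\nu_0=\frac{-2(f+2)\tau^4+12f\tau^2+6(f-2)}{f(f^2-1)}$, and the quadratic analysis giving the threshold is then identical. Your approach has the advantage of being coordinate-free and extending more readily to higher-degree invariants; the paper's direct integration has the minor advantage of not invoking representation theory, but at the cost of a considerably longer computation.
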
 \noindent
The method of proof is to apply Proposition~\ref{Prpgt}~(I). In the next two lemmas we
verify the necessary assumptions and compute~$\nu_0$.
\begin{Lemma}
The operator $\D_\mu$ has rank at most $3 f^4$.
\end{Lemma}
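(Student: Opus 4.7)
The strategy is to express the kernel~$\D(x,y)$ as a finite sum of separable tensors $\sum_n f_n(x)\, g_n(y)$ and then observe that the range of the associated integral operator must lie in the finite-dimensional span of the~$f_n$. Since~$\F \subset \Mat(\C^f)$, the matrix entries~$x_{ij}$ restrict to smooth coordinate functions on~$\F$, and likewise for~$y_{ij}$, so all functions that appear below are automatically smooth and (by compactness of~$\F$) belong to~$\H_\mu$.

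Starting from the closed-form expression~\eqref{Dsimple}, I would expand the two traces in matrix entries:
\[ \Tr\big((xy)^2\big) = \sum_{i,j,k,l=1}^{f} x_{ij}\, x_{kl}\, y_{jk}\, y_{li} \:,\qquad
\big(\Tr(xy)\big)^2 = \sum_{i,j,k,l=1}^{f} x_{ij}\, x_{kl}\, y_{ji}\, y_{lk}\:. \]
Re-grouping each summand as a product (quadratic monomial in~$x$)$\,\times\,$(quadratic monomial in~$y$), each of these two sums becomes a separable expansion with at most~$f^4$ terms. Inserting them into~\eqref{Dsimple} produces a representation
\[ \D(x,y) = \sum_{n=1}^{N} f_n(x)\, g_n(y) \qquad \text{with } N \leq 2 f^4 \leq 3 f^4\:, \]
in which every~$f_n$ and~$g_n$ is a quadratic monomial in the matrix entries of~$x$, respectively~$y$.

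From this the rank bound is immediate. For every~$\psi \in \H_\mu$,
\[ (\D_\mu \psi)(x) = \sum_{n=1}^{N} f_n(x)\, \int_\F g_n(y)\, \psi(y)\, d\mu(y) \:, \]
so the image of~$\D_\mu$ is contained in the subspace~$\mathrm{span}\{f_1,\ldots,f_N\} \subset \H_\mu$, whose dimension is at most~$N$. Hence~$\D_\mu$ has rank at most~$3 f^4$.

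There is no real obstacle in this argument; the only point worth noting is that one must check that the individual monomials $x_{ij}x_{kl}$ really do extend to bona fide elements of~$\H_\mu$, which follows at once from compactness of~$\F$ and smoothness of coordinates. The factor~$3$ in the stated bound is generous: a slightly tighter count using the symmetry $x_{ij}x_{kl} = x_{kl}x_{ij}$ would bring it down to $f^2(f^2+1)$, but the coarser estimate $3f^4$ is all that is needed for the application to Proposition~\ref{Prpgt}~(I) in the proof of Theorem~\ref{thmfm}.
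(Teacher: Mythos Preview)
Your argument is correct and in fact slightly sharper than what the paper does. The paper proceeds by writing each point as~$x = |u\rangle\langle u| - |v\rangle\langle v|$ in terms of its eigenvectors, then packages the pair~$(u,v)$ into a vector in~$\C^{3f^4}$ built from the three tensor products~$u\otimes u^*\otimes u\otimes u^*$, $u\otimes u^*\otimes v\otimes v^*$, $v\otimes v^*\otimes v\otimes v^*$, and exhibits a matrix~$B$ on this space with~$\D_\mu = K^* B K$; the~$3$ in~$3f^4$ comes from the three tensor blocks. Your route---expanding the two traces in~\eqref{Dsimple} directly in the matrix entries~$x_{ij}$---is more elementary, avoids the eigenvector parametrization altogether, and actually yields the tighter count~$f^4$ once you observe that both sums share the same~$x$-factor~$x_{ij}x_{kl}$. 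The eigenvector machinery in the paper mirrors the approach of~\cite[Lemma~1.10]{continuum} and may be more natural if one later wants to compute the spectrum explicitly, but for the mere rank bound needed in Theorem~\ref{thmfm} your argument is cleaner.
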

\Proof We extend the method used in the proof of~\cite[Lemma~1.10]{continuum}. 
A point~$x \in \F$ is a Hermitian $f \times f$-matrix of rank two, with non-trivial
eigenvalues~$1+\tau$ and~$1-\tau$. Thus we can represent~$x$ in in bra/ket notation as
\[ x = |u(x) \rangle \langle u(x) | - |v(x) \rangle \langle v(x)|\:, \]
where~$u(x)$ and~$v(x)$ are the eigenvectors of~$x$, normalized such that
\[ \langle u(x) |u(x) \rangle= \tau+1 \qquad \text{and} \qquad \langle v(x) |v(x) \rangle=\tau-1\:. \]

A short calculation shows that the non-trivial eigenvalues
of the matrix product~$xy$ coincide with the eigenvalues of the $2 \times 2$-matrix
product
\[ A_{xy} := \begin{pmatrix} \langle u(x) | u(y) \rangle & -\langle u(x) | v(y) \rangle \\
\langle v(x) | u(y) \rangle & -\langle v(x) | v(y) \rangle \end{pmatrix}
\begin{pmatrix} \langle u(y) | u(x) \rangle & -\langle u(y) | v(x) \rangle \\
\langle v(y) | u(x) \rangle & -\langle v(y) | v(x) \rangle \end{pmatrix} . \]
Using~\eqref{Dsimple}, we can thus write the function~$\D$ as
\[ \D(x,y) = \Tr \bigg[ \Big( A_{xy} - \frac{1}{2}\: \Tr (A_{xy}) \Big)^2 \bigg] . \]
This makes it possible to recover $\D(x,y)$ as the ``expectation value''
\[ \D(x,y) = \bigg\langle \begin{pmatrix} u \otimes u^* \otimes u \otimes u^* \\
u \otimes u^* \otimes v \otimes v^* \\
v \otimes v^* \otimes v \otimes v^* \end{pmatrix} \!\bigg|_x,
B
\begin{pmatrix} u \otimes u^* \otimes u \otimes u^* \\
u \otimes u^* \otimes v \otimes v^* \\
v \otimes v^* \otimes v \otimes v^* \end{pmatrix} \!\bigg|_y \bigg\rangle_{\C^{3 f^4}} \]
of a suitable matrix~$B$, whose $3 \times 3$ block entries are of the form
\[ B_{ij} = b_{ij} +  \delta_{i,2} \delta_{j,2}\: (c_1 \rho_1 + c_2 \rho_2 + c_3 \rho_3)
\quad \text{with} \quad b_{ij}, c_i \in \C, \]
and the operators~$\rho_i$ permute the factors of the tensor product,
\begin{align*}
\rho_1(u \otimes u^* \otimes v \otimes v^*) &= v \otimes v^* \otimes u \otimes u^* \\
\rho_2(u \otimes u^* \otimes v \otimes v^*) &= u \otimes v^* \otimes v \otimes u^* \\
\rho_3(u \otimes u^* \otimes v \otimes v^*) &= v \otimes u^* \otimes u \otimes v^* \:.
\end{align*}
Hence introducing the operator
\[ K \::\: L^2(\F, d\mu_L) \rightarrow \C^{3f^4} \::\: \psi \mapsto
\int_\F \begin{pmatrix} u \otimes u^* \otimes u \otimes u^* \\
u \otimes u^* \otimes v \otimes v^* \\
v \otimes v^* \otimes v \otimes v^* \end{pmatrix} \!\Bigg|_x\:
\psi(x)\:d\mu_L(x)\:, \]
we find that $\D_\mu = K^* B K$. This gives the claim.
\QED
In view of this lemma, we may decompose~$\D$ in the form~\eqref{Dspec}.
\begin{Lemma} The eigenvalue~$\nu_0$ in the decomposition~\eqref{Dspec} is given by
\[ \nu_0=\frac{2(3f+6f\tau^2-(2+f)\tau^4-6)}{f(f^2-1)} \:. \]
\end{Lemma}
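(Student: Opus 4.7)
The plan is to leverage the $\U(f)$-invariance of the Haar measure $\mu$ to turn the computation of $\nu_0$ into an elementary representation-theoretic exercise. Since $\mu$ is a homogenizer (Definition~\ref{defhomogen}), $\dd_\mu$ is constant, and this constant coincides with the eigenvalue $\nu_0$ of $\D_\mu$ on the constant eigenfunction~$1_\F$. Thus
\[ \nu_0 \;=\; \iint_{\F\times\F} \D(x,y)\,d\mu(x)\,d\mu(y), \]
and in view of~\eqref{Dsimple} the task reduces to evaluating $I_1 := \iint \Tr((xy)^2)\,d\mu(x)\,d\mu(y)$ and $I_2 := \iint (\Tr(xy))^2\,d\mu(x)\,d\mu(y)$, and then forming $\nu_0 = I_1 - \tfrac{1}{2} I_2$.

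The key observation is that if $P$ denotes the swap operator on $\C^f\otimes\C^f$, the identities $(\Tr(xy))^2 = \Tr_{\C^f\otimes\C^f}\bigl((x\otimes x)(y\otimes y)\bigr)$ and $\Tr((xy)^2) = \Tr\bigl(P\,(x\otimes x)(y\otimes y)\bigr)$ allow one to factor the $x$- and $y$-integrations independently, giving
\[ I_2 = \Tr(M^2), \qquad I_1 = \Tr(PM^2), \qquad \text{where}\quad M := \int_\F y\otimes y\,d\mu(y). \]
By $\U(f)$-invariance of $\mu$, the operator $M$ commutes with $U\otimes U$ for every $U\in\U(f)$; applying Schur's lemma to the isotypic decomposition $\C^f\otimes\C^f = \mathrm{Sym}^2(\C^f)\oplus\Lambda^2(\C^f)$ shows that $M = \alpha\,\1 + \beta\,P$ for some scalars $\alpha,\beta\in\R$.

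To pin down $\alpha$ and $\beta$, I would pair $M$ against $\1$ and against $P$: using $\Tr(y)=2$ and $\Tr(y^2)=(1+\tau)^2+(1-\tau)^2 = 2(1+\tau^2)$, one obtains the $2\times 2$ linear system
\[ \alpha f^2 + \beta f \;=\; \int (\Tr y)^2\,d\mu(y) \;=\; 4, \qquad \alpha f + \beta f^2 \;=\; \int \Tr(y^2)\,d\mu(y) \;=\; 2(1+\tau^2), \]
which yields $\alpha$ and $\beta$ as explicit rational functions of $\tau$ with common denominator $f(f^2-1)$. Substituting these, together with $P^2=\1$, into
\[ \nu_0 = I_1 - \tfrac{1}{2}I_2 = (\alpha^2+\beta^2)\,f + 2\alpha\beta\, f^2 - \tfrac{1}{2}\bigl[(\alpha^2+\beta^2)\,f^2 + 2\alpha\beta\, f\bigr] \]
and simplifying produces the claimed formula. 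The only real obstacle is algebraic bookkeeping — verifying that the denominator $f(f^2-1)$ survives and that the numerator collapses to $2(3f+6f\tau^2-(2+f)\tau^4-6)$; the entire conceptual content is captured by the Schur-lemma identification of $M$.
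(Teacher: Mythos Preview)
Your argument is correct and takes a genuinely different route from the paper's proof. The paper proceeds by direct integration: it fixes~$x$ in a convenient diagonal form, parametrizes $y=(1+\tau)\,|u\rangle\langle u|+(1-\tau)\,|v\rangle\langle v|$ with orthonormal $u,v\in\C^f$, writes out~$\D(x,y)$ explicitly in the components $u_1,u_2,v_1,v_2$, and then integrates this polynomial against the normalized volume measure on the flag manifold, realized concretely via delta-function constraints on~$\|u\|$, $\|v\|$, $\langle u,v\rangle$. After exploiting rotational symmetry in the remaining components this reduces to a manageable multiple integral, but the bookkeeping is somewhat heavy and the case~$f=3$ has to be treated separately. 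Your approach bypasses the explicit integration entirely: the Schur-lemma identification $M=\alpha\,\1+\beta\,P$ turns the double integral into a $2\times 2$ linear system whose right-hand side is fixed by the spectral invariants $\Tr y=2$ and $\Tr y^2=2(1+\tau^2)$ alone, and~$\nu_0$ is then pure algebra in~$\alpha,\beta,f$. The representation-theoretic route is cleaner, treats all~$f$ uniformly, and makes transparent why the denominator $f(f^2-1)$ appears (it is the determinant of the Gram matrix of~$\1$ and~$P$ under the trace pairing). The paper's route, by contrast, is more elementary in that it requires no Schur lemma, only calculus.
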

\Proof It is most convenient to represent the elements in~$\F$ as
\beq \label{xrep}
(1+\tau)\: |u \rangle \langle u | + (1-\tau)\: |v \rangle \langle v |\:,
\eeq
where the vectors~$u, v \in \C^f$ are orthonormal. Then the normalized volume measure $\mu$
on $\F$ can be written as
\[  d \mu=\frac{1}{\vol(\F)} \:\delta \big(\Rec\langle u,v\rangle \big)
\:\delta \big(\Imc\langle u,v\rangle \big) \:\delta \big(\|u\|^2-1\big)\: \delta \big( \|v\|^2-1 \big)\: d u \: d v,  \]
where~$du$ and~$dv$ denote the Lebesgue measure on~$\C^f$, and~$\delta$ is the Dirac distribution.
The total volume is computed to be
\begin{align*}
\vol(\F)&=\iint_{\C^f\times \C^f} \delta(\Rec\langle u,v\rangle)\:
\delta(\Imc\langle u,v\rangle)\: \delta(\|u\|^2-1)\delta(\|v\|^2-1)\: d u \: d v \\
&=\frac{1}{4} \:\vol (S^{2f-1})\:\vol (S^{2f-3})\: .
\end{align*}
To simplify the calculations, we fix~$x$ and choose an eigenvector basis of~$x$. Then~$x=\mathrm{diag}((1+\tau),(1-\tau),0,\ldots,0)$, whereas~$y$ is again represented in the form~\eqref{xrep}.
Then the eigenvalues of the product $xy$ depend only on the vector
components~$u_1, u_2$ and~$v_1, v_2$. More precisely, using~\eqref{Dsimple}, we obtain
\begin{align*}
 \D(x,y) \:=\:& \frac{1}{2} \:\Big[ (1+\tau)^2\, |u_1|^2+(1-\tau^2) \,(|v_1|^2-|u_2|^2)-(1-\tau)^2\, |v_2|^2 \Big]^2\\
&+2\, (1-\tau^2)\:  \big|(1+\tau) \,u_1\, \overline{u_2}+(1-\tau)\, v_1\, \overline{v_2} \big|^2 =:f(u,v).
\end{align*}
Our task is to compute the integral $ \nu_0=\int_{\F}f(u,v) \,d \mu$. In the case~$f \geq 4$,
one uses the symmetries to reduce to a lower-dimensional integral,
\begin{align*}
\nu_0  &= c
\int_0^\infty d u_1 \int_0^\infty d u_2  \int_0^\infty d u_3 \int_0^\infty d v_1 \int_\C d v_2\int_\C d v_3 \int_0^\infty d v_4 \\
& \quad \times \delta \big( \|u\|^2-1 \big) \:\delta \big( \|v\|^2-1 \big) \:
 \delta\big( \Rec \langle u,v\rangle \big) \:\delta \big( \Imc \langle u,v\rangle \big)\:
  f(u,v) \:u_1 \,u_2\, u_3^{2f-5}\, v_1\, v_4^{2f-7} ,
 \end{align*}
where~$c$ is the constant
\[ c = \frac{1}{\vol(\Ff)}\: \vol(S^{2f-5})\: \vol(S^{2f-7})\: (2 \pi)^3\:. \] 
Carrying out all integrals gives the claim. The proof in the case~$f=3$ is similar.
\QED

The remaining question is whether generically
timelike minimizers exist for small~$\tau$. In the special case~$\tau=1$, the operator
$\D_\mu = \L_\mu$ is positive (see~\cite[Lemma~1.10]{continuum}), so that Proposition~\ref{prp_hom_min} 
or similarly Proposition~\ref{est_nu} yields that the standard volume measure is a generically
timelike minimizer. However, if~$\tau>1$, these propositions can no longer be used, because the
operator~$\D_\mu$ fails to be positive:
\begin{Lemma}
If~$\tau>1$, the operator $\D_\mu$ has negative eigenvalues.
\end{Lemma}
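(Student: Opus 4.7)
My plan is to show that $\D_\mu$ fails to be positive semi-definite by exhibiting a finite collection of points $x_1,\ldots,x_k\in\F^{1,2}(\C^f)$ and coefficients $c_1,\ldots,c_k\in\C$ for which the Gram sum $\sum_{i,j}c_i\overline{c_j}\,\D(x_i,x_j)$ is strictly negative. This suffices because for a continuous symmetric kernel on a compact space and a measure of full support, positive semi-definiteness of the integral operator is equivalent (by Mercer's theorem, as used already in the proof of Proposition~\ref{est_nu}) to positive semi-definiteness of the kernel in the Moore sense; a Gram matrix with negative eigenvalue therefore produces a test vector in $\H_\mu$ on which $\D_\mu$ is strictly negative.

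The first step is to construct an explicit one-parameter family of spacelike-separated pairs. Fix an orthonormal basis $e_1,\ldots,e_f$ of $\C^f$, set $x=(1+\tau)|e_1\rangle\langle e_1|+(1-\tau)|e_2\rangle\langle e_2|$, and for $\alpha\in\R$ let $U_\alpha$ act as rotation by $\alpha$ on the span of $e_1,e_2$ and as the identity on the orthogonal complement. A direct calculation in the spirit of the one carried out in the proof of Theorem~\ref{corfm} yields
\[
\D\bigl(x,U_\alpha x U_\alpha^{-1}\bigr)=8\tau^2\cos^2\alpha\,\bigl(1-\tau^2\sin^2\alpha\bigr),
\]
which is negative precisely for $\sin^2\alpha>\tau^{-2}$ and attains its minimum $-2(\tau^2-1)^2$ at $\sin^2\alpha=(1+\tau^2)/(2\tau^2)$. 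In the regime $\tau>1+\sqrt{2}$, equivalently $2(\tau^2-1)^2>8\tau^2=\D(x,x)$, the $2\times2$ Gram matrix with rows and columns indexed by $x$ and $U_{\alpha^*}xU_{\alpha^*}^{-1}$ already has negative determinant, and choosing $c_1=c_2=1$ finishes the proof in this parameter range.

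The main obstacle is the intermediate regime $\tau\in(1,1+\sqrt{2}]$, where every two-point Gram matrix is positive semi-definite (in fact an analysis using the signature of $\pi_{\Im y}x\pi_{\Im y}$ shows that $|\D|$ on the spacelike region is bounded by $2(\tau^2-1)^2$, with maximum attained only when $\Im x=\Im y$). Here the hypothesis $f\geq 3$ becomes essential: I would combine $U_\alpha$ with unitary rotations $V_\gamma$ acting in the $(e_2,e_3)$-plane to produce a larger family of rotated copies of $x$, and choose enough such points together with appropriate coefficients so that $\sum_{i\neq j}\D(x_i,x_j)$ becomes sufficiently negative to overwhelm $k\cdot 8\tau^2$ on the diagonal. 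A conceptually cleaner alternative, which I would ultimately pursue for uniform control across the full range $\tau>1$, is representation-theoretic: since $\F^{1,2}(\C^f)\cong U(f)/(U(1)\times U(1)\times U(f-2))$ is a compact homogeneous space, the $U(f)$-invariant operator $\D_\mu$ is diagonal with respect to the Peter--Weyl decomposition of $L^2(\F^{1,2},d\mu)$ into $U(f)$-isotypic components and acts by a scalar on each. Computing these scalars via the appropriate spherical-function integrals and identifying an isotypic component carrying a strictly negative scalar for every $\tau>1$ (when $f\geq 3$) would give the lemma at once; the hard part is the explicit evaluation of the integrals over the nontrivial harmonics, which is considerably more involved than the computation of $\nu_0$ carried out in the previous lemma.
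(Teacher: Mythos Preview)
Your two-point construction via rotation in $\langle e_1,e_2\rangle$ is correct for $\tau>1+\sqrt{2}$, but the proof is genuinely incomplete in the range $\tau\in(1,1+\sqrt{2}]$: you only sketch a many-point or representation-theoretic attack and concede that the latter is ``considerably more involved.'' Neither sketch is carried out, so as it stands the lemma is unproved in that regime.

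The gap is avoidable, and the paper's argument shows how. Your restriction to $U_\alpha$ keeps $\Im y=\Im x=\langle e_1,e_2\rangle$; within this family the off-diagonal $\D$ is bounded in absolute value by $\max\{8\tau^2,\,2(\tau^2-1)^2\}$, which is why two points do not suffice for small $\tau$. The paper instead tilts the \emph{image} of the second point out of $\langle e_1,e_2\rangle$ using $e_3$ (this is where $f\geq 3$ enters): take
\[
x_1=(1+\tau)\,|e_1\rangle\langle e_1|+(1-\tau)\,|e_2\rangle\langle e_2|,\qquad
x_2=(1+\tau)\,|e_1\rangle\langle e_1|+(1-\tau)\,|v\rangle\langle v|
\]
with $v=\sqrt{\varepsilon}\,e_2+\sqrt{1-\varepsilon}\,e_3$. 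The nontrivial eigenvalues of $x_1x_2$ are then $(1+\tau)^2$ and $\varepsilon(1-\tau)^2$, giving
\[
\D(x_1,x_2)=\tfrac{1}{2}\bigl((1+\tau)^2-\varepsilon(1-\tau)^2\bigr)^2\xrightarrow[\varepsilon\to0]{}\tfrac{1}{2}(1+\tau)^4.
\]
Since $\tfrac{1}{2}(1+\tau)^4>8\tau^2$ is equivalent to $(\tau-1)^2>0$, the $2\times2$ Gram matrix has negative determinant for every $\tau>1$ once $\varepsilon$ is small. The point you missed is that one should look for \emph{timelike} pairs with $\D(x_1,x_2)>\D(x,x)$ rather than spacelike pairs with $\D(x_1,x_2)$ very negative; tilting the image drives one eigenvalue of $x_1x_2$ to zero and thereby pushes $\D$ up to nearly $\tfrac{1}{2}\lambda_+^2$.
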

\Proof
Since~$\supp\mu=\F$, it suffices to find two points~$x_1, x_2 \in \F$
such that the corresponding Gram matrix $\D(x_i, x_j)$ is not positive semi-definite.
For given~$\varepsilon\in (0,1)$ we choose the four vectors
$$u_1=e_1\:,\quad v_1=e_2 \quad \text{and} \quad u_2=e_1\:,\quad \;v_2=\sqrt{\varepsilon}\:
e_2+ \sqrt{1-\varepsilon} \:e_3 $$
(where $e_i$ are the standard basis vectors of $\C^f$).
Taking the representation~\eqref{xrep}, we obtain two points~$x_1, x_2 \in \F$.
The corresponding Gram matrix is computed to be
\[ \bpm 8\tau^2 & \frac{1}{2}\left(-\varepsilon (\tau-1)^2+(\tau+1)^2\right)^2 \\
\frac{1}{2}\left(-\varepsilon (\tau-1)^2+(\tau+1)^2\right)^2 & 8\tau^2 \epm . \]
The determinant of this matrix is negative for small~$\varepsilon>0$.
\QED
In this situation, Proposition~\ref{lemmaiff} still gives some information on the possible support of
generically timelike minimizers. But it remains an open problem whether and under
which conditions generically timelike minimizers exist.

{\Thanks{{\em{Acknowledgments:}} We would like to thank the referees for valuable comments on the manuscript.}

\providecommand{\bysame}{\leavevmode\hbox to3em{\hrulefill}\thinspace}
\providecommand{\MR}{\relax\ifhmode\unskip\space\fi MR }
\providecommand{\MRhref}[2]{%
  \href{http://www.ams.org/mathscinet-getitem?mr=#1}{#2}
}
\providecommand{\href}[2]{#2}

\end{document}